\documentclass[12pt,draftcls,onecolumn]{IEEEtran}

\usepackage{setspace,multirow,tikz}
\usetikzlibrary{arrows,automata}

\usepackage[T1]{fontenc}
\usepackage[latin9]{inputenc}
\usepackage{amsthm}
\usepackage{amsmath}
\usepackage{graphicx}
\usepackage{amssymb}
\usepackage{url}
\usepackage{verbatim}
\usepackage[export]{adjustbox}
\usepackage{framed,caption}

\theoremstyle{plain}
\newtheorem{thm}{Theorem}
\theoremstyle{plain}

%
\usepackage{cite}\usepackage{amsfonts}\usepackage{times}\usepackage{bm}
\usepackage{amsthm}
\usepackage{subfigure}
\setcounter{MaxMatrixCols}{30}
\theoremstyle{definition}
\newtheorem{assumption}{Assumption}

\newtheorem{definition}{Definition}

\newtheorem{lemma}{Lemma}
\newtheorem{proced}{Procedure}
\newtheorem{example}{Example}
\newtheorem*{lemma*}{Lemma}

\theoremstyle{remark}
\newtheorem{remark}{Remark}

\newcommand\K{\mathrm{K}}
\newcommand\U{\mathrm{U}}


\title{{\huge Cognitive Access-Transmission Policies under a Primary ARQ process
 via Chain Decoding}}

\author{
\textbf{
Nicol\`o Michelusi\IEEEauthorrefmark{2},
Petar Popovski\IEEEauthorrefmark{3},
Michele Zorzi\IEEEauthorrefmark{1}}
\vspace{-0mm}\\
{\small \IEEEauthorrefmark{2}Ming Hsieh Department of Electrical Engineering,
University of Southern California, USA,\vspace{-2mm}\\ \texttt{\small michelus@usc.edu}} \vspace{-2mm}\\
 {\small \IEEEauthorrefmark{3}Department of Electronic Systems,
Aalborg University, Denmark, \texttt{petarp@es.aau.dk}}
\vspace{-2mm}\\
 {\small \IEEEauthorrefmark{1}Department of Information Engineering,
University of Padova, Italy, \texttt{\small zorzi@dei.unipd.it}}
\vspace{-5mm}
\thanks{This paper was presented in part at 
the \emph{Information  Theory and Applications Workshop (ITA), 2013} \cite{chdecodingfull}.}
}





\graphicspath{%
     {./Figures/}
}

\begin{document}
\maketitle

\vspace{-15mm}
\begin{abstract}
\vspace{-5mm}
This paper introduces a novel technique that enables access by a cognitive secondary user (SU) to a spectrum occupied by an incumbent primary user (PU) that employs Type-I Hybrid ARQ. The technique allows the SU to perform selective retransmissions of SU data packets that have not been successfully decoded in the previous attempts. The temporal redundancy introduced by the PU ARQ protocol and by the selective retransmission process of the SU can be exploited by the SU receiver to perform interference cancellation (IC) over multiple transmission slots, thus creating a "clean" channel for the decoding of the concurrent SU or PU packets. The \emph{chain decoding} technique is initiated by a successful decoding operation of a SU or PU packet and proceeds by an iterative application of IC in order to decode the buffered signals that represent packets that could not be decoded before.  Based on this scheme, an optimal policy is designed that maximizes the SU throughput under a constraint on the average long-term PU performance. The optimality of the chain decoding protocol is proved, which determines which packet the SU should send at a given time.
Moreover, a \emph{decoupling principle} is proved, which establishes the optimality of decoupling the secondary access strategy from the chain decoding protocol.
Specifically, \emph{first}, the SU access policy, optimized via dynamic programming,
specifies whether the SU should access the channel or not, based on a compact state representation of the protocol;
 and \emph{second}, the chain decoding protocol embeds four basic rules that are used to determine which packet should be transmitted by the SU. 
Chain decoding provably yields the maximum improvement that can be achieved by any scheme under our assumptions, and thus it is the ultimate scheme, which completely closes the gap between previous schemes and optimality.
\end{abstract}

\begin{IEEEkeywords}
Cognitive radios, resource allocation, Markov decision
processes, ARQ, interference cancellation
\end{IEEEkeywords}

\section{Introduction}
The recent proliferation of mobile devices has been exponential in number as well as heterogeneity \cite{CISCO}.
 As mobile data traffic is expected to grow 13-fold, and machine-to-machine traffic will experience a 24-fold increase from 2012 to 2017 \cite{CISCO}, tools for the design and optimization of \emph{agile} wireless networks
are of significant interest \cite{pcast}. Furthermore, network design needs to explicitly consider the resource constraints typical of wireless systems.
Cognitive radio (CR) \cite{Mitola} is a novel paradigm to improve the
spectral efficiency of wireless networks,
by enabling the coexistence of \emph{primary users} (PUs)
and \emph{secondary users} (SUs) in the same spectrum.
SUs are
smart wireless terminals that collect side information
about nearby PUs (\emph{e.g.}, activity, channel conditions, protocols employed, packets exchanged),
and exploit this information to
 adapt their operation in order to opportunistically access the wireless channel
 while generating
bounded interference to the PUs~\cite{FCC,spectrumsharing,Peha}.

In the underlay cognitive radio paradigm \cite{goldsmith},
the PU is a legacy system, oblivious to the presence of the SU, which in turn operates concurrently with the PU and needs to satisfy 
given constraints
on the performance loss caused to the PU.
In this paper,
within this framework, 
we propose a mechanism, termed \emph{chain decoding} (CD),
which exploits the automatic retransmission request (ARQ) protocol implemented by the PU.
In fact,
the PU ARQ mechanism results in replicas of the 
PU packet transmitted over subsequent slots.
 This effectively creates redundancy in the channel, which can be leveraged by the SU
 to implement \emph{interference cancellation} (IC) techniques and boost its own throughput, or alternatively, achieve a target
 throughput with fewer transmissions and less interference to the PU.
Our proposed mechanism leverages opportunistic retransmissions performed by the SU
to improve the spectrum efficiency,
and a 
buffering mechanism at the SU receiver.
Consider, for instance, the following example, depicted in Fig. \ref{figexlabel}.

\begin{figure}
    \centering
    \includegraphics{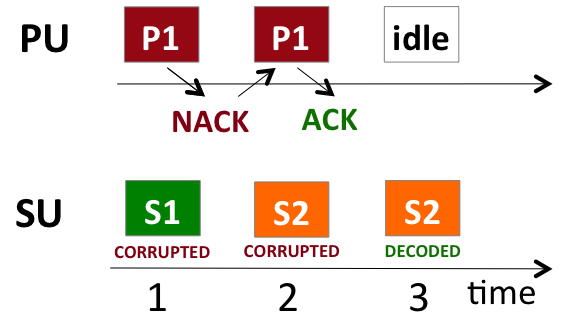}
\caption{Example of chain decoding.}
\label{figexlabel}
\end{figure}

\begin{example}
The PU transmits P1 in slot 1, the transmission is unsuccessful and thus a retransmission occurs in slot 2.
This retransmission is successful, and thus PU remains idle in slot 3, waiting for new data to transmit.
On the other hand, the SU transmits S1 and S2 in slots 1 and 2, respectively, but these transmissions are not successful.
The SU retransmits S2 in slot 3, and successfully decodes it, taking advantage of the fact that the PU is idle in slot 3. \emph{Chain decoding} now starts: 
the interference of S2 is removed from slot 2, and thus the SU can recover P1;
finally, the interference of P1 is removed from slot 1, and thus the SU can recover S1.
That is, IC  is applied in chain,
as SU and PU packets become decodable and their interference is removed.
This gain would not possible if the SU did not apply a clever retransmission and buffering mechanism (specifically, retransmission of S2 in slot 3,
and buffering of the signals received in slots 1 and 2).\qed
\end{example}

More in general, a successful retransmission of a SU packet
may be exploited to perform IC
in the previous transmission attempt of the same packet,
thus potentially enabling the decoding of the interfering PU packet.
In turn, knowledge of the PU packet released via IC
may be exploited to perform IC
in the corresponding ARQ retransmission window of the same packet,
thus potentially enabling the decoding of previously failed
SU transmission attempts, and so on.
 Overall, the decoding of a SU packet releases the decoding of 
the interfering PU packet, which in turn releases the decoding of the SU packets transmitted over the corresponding ARQ window, and so on,
 hence the name \emph{chain decoding}.

Chain decoding opens up intriguing questions. Which signals should be
buffered? Which packet is optimal to be transmitted at a given time instant?
Due to the number of 
possibilities for secondary access
by the SU (remain idle, 
transmit a new data packet, or retransmit some previous data packet),
and to the potentially large number of corrupted packets buffered at the SU receiver, the description of
the CD scheme may require
a very large and possibly unbounded number of states, resulting in prohibitive complexity.
Indeed, a secondary transmission protocol consists of two decisions: (1) \emph{secondary access scheme}: determining whether the SU should transmit or stay idle and (2) \emph{packet selection}: which packet should be sent if  a transmission is made. In general, these two decisions should be made jointly; however, we will show that it is optimal to apply a decoupling principle and separate the decisions, which leads to a simplified protocol specification.
In particular, the CD  protocol
specifies which packet is transmitted by the SU based on four basic rules (Theorem \ref{CD}), whereas the
secondary access scheme determines whether the SU should remain idle or transmit at any given time, depending on the state of the CD  protocol.
Under the CD protocol, we show that the SU throughput admits a closed-form expression and achieves the upper bound (Theorem \ref{thm4} and \ref{thm5}),
obtained under the genie-aided case where the transmission sequence of the SU is generated with non-causal information on the channel state
and on the PU transmission sequence.
Moreover, we prove that this CD protocol defines a compact state space representation of the system,
which is amenable to numerical optimization of the SU access policy via dynamic programming (Theorem \ref{thm6}).
Based on this compact state space representation,
 we model the state evolution of the CD  protocol
as a Markov decision process \cite{Bertsekas,DJWhite},
induced by the specific access policy used by the SU,
which determines its access probability in each state
 of the network. As an application of the proposed CD scheme,
we study the problem of designing optimal secondary access policies 
that maximize the average long-term SU throughput,
while causing 
a bounded average long-term throughput loss to the PU.

There is significant prior work on CR; here, we focus on the literature that is most relevant to our current problem framework.  
The work in \cite{Popovski07} explores the benefits of decoding the PU packet
 at the SU receiver to enable IC. However, no ARQ is assumed.
The idea of exploiting the primary ARQ process to perform IC
 on future packets was proposed by \cite{Nosratinia}.
Therein, the PU employs hybrid ARQ with incremental redundancy
 and the ARQ mechanism is limited to at most one retransmission.
The SU receiver exploits the knowledge of the PU packet,
possibly acquired in the first primary transmission attempt,
to enable IC in case of retransmission,
thus enhancing its own throughput.
In \cite{MichelusiJSAC},
a technique is proposed to exploit the 
 knowledge of the current
PU packet collected at the SU receiver to perform IC 
within the corresponding primary ARQ window where PU transmissions occur.
In particular, \emph{Forward IC} (FIC) enables IC in the subsequent slots
corresponding to primary retransmission attempts, if these occur.
Moreover, previously failed secondary transmission attempts 
may be recovered by using \emph{Backward IC} (BIC)
on the corresponding buffered received signals.
In this work, we further extend these ideas,
by allowing the SU to opportunistically perform retransmissions
of previously failed SU transmission attempts,
so as to introduce redundancy in the secondary channel as well,
which may then be exploited to enable IC \emph{across} different ARQ windows, with the overall effect of improving the secondary throughput
via CD.

Paper
\cite{IT_ARQ}
investigates the interaction between the 
ARQ protocol of the PU and the access scheme of the SU,
but does not exploit the temporal redundancy of ARQ to enable IC.
Paper
 \cite{Nosratinia2} devises an opportunistic sharing scheme with channel probing
 based on the ARQ feedback from the PU receiver.
Compared to \cite{Jovicic}, where the SU transmitter has non-causal 
knowledge of the PU packet, in our work we explicitly model the dynamic acquisition 
of the PU packet at the SU receiver, which enables IC.
In this paper, we assume that the retransmission state of the PU is known at the SU pair, by overhearing the ARQ feedback from the PU receiver.
 The case where the spectrum occupancy is unknown can be analyzed using tools developed in
 \cite{MicheISIT,MicheGlobalsip,Scaglione}, where the state of the PU network is inferred via distributed spectrum sensing.

This paper is organized as follows. In 
Sec.~\ref{sec:sys_model}, we introduce the system model.
In Sec. \ref{CDtech},  we describe the CD technique implemented by the SU pair.
In Sec. \ref{perfmetropt}, we present the optimization problem.
In Sec.~\ref{sec:chdecprot}, we present the four rules of the CD protocol and prove their optimality,
followed by the description of the compact state space representation of the protocol
in Sec.~\ref{compactstsp}.
In Sec.~\ref{sec:numres}, we present some numerical results.
Finally, in Sec.~\ref{sec:remarks}, we conclude the paper. 
The proofs of the analytical results are provided in the Appendix.

\section{System Model}\label{sec:sys_model}
We consider a two-user interference network, depicted in Fig.~\ref{fig:cog_net},
 where a primary and a secondary transmitter, denoted by PUtx 
 and SUtx, respectively,
transmit to their receivers, PUrx and SUrx, over 
the direct links PUtx$\rightarrow$PUrx and SUtx$\rightarrow$SUrx.
 Their transmissions generate mutual interference over the links
PUtx$\rightarrow$SUrx and SUtx$\rightarrow$PUrx.

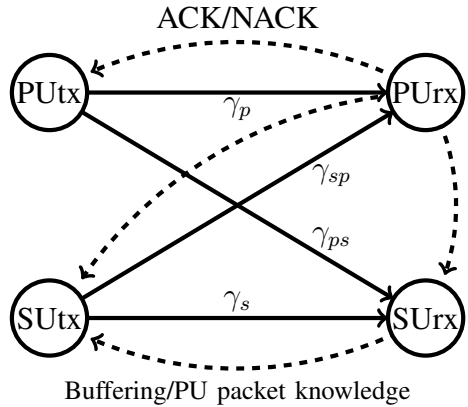
\begin{figure}
    \centering
    {
\begin{tikzpicture}
\draw [ultra thick, ->] (0,0) -- (5-0.5,0);
\draw [ultra thick, ->] (0,3) -- (5-0.5,3);
\draw [ultra thick, ->] (0,0) -- (5-5*0.0857,3-3*0.0857);
\draw [ultra thick, ->] (0,3) -- (5-5*0.0857,0+3*0.0857);
\draw [dashed, ultra thick, ->] (5,3) arc [radius=5, start angle=60, end angle= 113];
\draw [dashed, ultra thick, ->] (5,3) arc [radius=5.83, start angle=92, end angle= 145];
\draw [dashed, ultra thick, ->] (5,0) arc [radius=5, start angle=-60, end angle=-113];
\draw [dashed, ultra thick, ->] (5,3) arc [radius=3, start angle=30, end angle=-18];
\draw [fill=white, ultra thick] (0,0) circle [radius=0.5];
\draw [fill=white, ultra thick] (5,0) circle [radius=0.5];
\draw [fill=white, ultra thick] (0,3) circle [radius=0.5];
\draw [fill=white, ultra thick] (5,3) circle [radius=0.5];
\node at (0,0) {SUtx};
\node at (5,0) {SUrx};
\node at (0,3) {PUtx};
\node at (5,3) {PUrx};
\node at (2.5,+0.22) {$\gamma_s$};
\node at (2.5,3-0.22) {$\gamma_p$};
\node at (5-1.25,3-1.1) {$\gamma_{sp}$};
\node at (5-1.25,1.05) {$\gamma_{ps}$};
\node at (2.5,3+1) {ACK/NACK};
\node at (2.5,-1) {\small Buffering/PU packet knowledge};
\end{tikzpicture}}
\caption{System model}
\label{fig:cog_net}
\end{figure}

Time is divided into slots of fixed duration $\Delta$. Each slot matches the length of 
the PU and SU packets,
and the transmissions of the PU and SU are assumed to be perfectly synchronized.
We adopt the block-fading channel model, \emph{i.e.}, the channel gains are
constant within each slot duration but varies across different slots.
Assuming that the SU and the PU transmit with constant power $P_s$ and $P_p$,
 respectively, and that zero mean Gaussian noise with unit variance is added at the receivers,
we define the signal to noise ratios (SNR) in slot $n$
of the links SUtx$\rightarrow$SUrx, PUtx$\rightarrow$PUrx, SUtx$\rightarrow$PUrx and PUtx$\rightarrow$SUrx,
 as $\gamma_s(n)$, $\gamma_p(n)$, $\gamma_{sp}(n)$ and $\gamma_{ps}(n)$, respectively.
 We model the joint SNR process $\{(\boldsymbol{\gamma}_P(n),\boldsymbol{\gamma}_S(n)),\ n\geq 0\}$, where 
 $\boldsymbol{\gamma}_P(n)=(\gamma_p(n),\gamma_{sp}(n))$
 and
 $\boldsymbol{\gamma}_S(n)=(\gamma_s(n),\gamma_{ps}(n))$,
 as i.i.d. over time, with probability distribution $\mathbb P_{\boldsymbol{\gamma}}(\boldsymbol{\gamma}_S,\boldsymbol{\gamma}_P)$,
 so that the links may be spatially correlated.
 The following analysis can be extended to the case where the SNR process is stationary ergodic with finite first and second order moments.

The SU and PU employ a packet based system, where each packet 
consists of a fixed number of bits $N_s$ and $N_p$, corresponding to fixed transmission
rates $R_{s}$ and $R_p$ bits/s/Hz, respectively.
Both devices may transmit or remain idle in each slot.
We denote the access decision of the SU and PU in slot $n$ as $a_{S,n}\in\{0,1\}$ and $a_{P,n}\in\{0,1\}$, respectively,
where $a_{S,n}=1$ ($a_{P,n}=1$) if the SU (PU) accesses the channel in slot $n$, and
$a_{S,n}=0$ ($a_{P,n}=0$) if it decides to remain idle.
The access decisions are made independently by the SU and PU
according to access policies $\mu_S$ and $\mu_P$, respectively,
introduced in Secs. \ref{suaccesspol} and \ref{puaccesspol}, respectively.

No channel state information (CSI) is available at the transmitters, so that
the latter cannot adjust their transmission rates or power levels
based on the instantaneous link quality $(\boldsymbol{\gamma}_P(n),\boldsymbol{\gamma}_S(n))$.
Additionally, the simultaneous transmissions of the PU and SU generate mutual interference at the respective receivers.
Thus, transmissions
may undergo outage if the transmission rate is not supported by the current
channel quality. 

We now introduce the models for the PU and SU systems.

\subsection{PU system}
\label{puaccesspol}
Herein, we describe the model for the PU system, which specifies the decoding outcomes at PUrx as a function of the activity of the SU pair,
the ARQ scheme, the packet labeling and buffering, the description of the internal state of the PU, the PU access scheme and the internal PU state evolution.
 \subsubsection{Decoding outcome at PUrx}
 Due to the interference generated by SUtx to PUrx, the outcome of the PU transmission (failure or success)
 depends on the SU access decision $a_{S,n}\in\{0,1\}$.
 Additionally, the PU pair is oblivious  to the activity of the SU pair, so that it treats the interfering signal as noise.
Therefore,
the transmission of the PU in slot $n$ is successful if and only if $\boldsymbol{\gamma}_P(n)\in\Gamma_P(a_{S,n})$,
where
\begin{align}
\label{Px1}
\Gamma_P(a_{S,n})\equiv\left\{\boldsymbol{\gamma}_P:R_p<C\left(\frac{\gamma_p}{1+a_{S,n}\gamma_{sp}}\right)\right\}.
\end{align}
In (\ref{Px1}) and hereafter, we have assumed
the use of Gaussian signaling and capacity-achieving coding with sufficiently long codewords,
and we have defined $C(x)\triangleq\log_{2}(1+x)$ as the (normalized) capacity
of the Gaussian channel with SNR $x$ at the receiver \cite{Cover}.

\subsubsection{ARQ scheme}
In order to improve reliability, the PU employs Type{-}I HARQ~\cite{Comroe}
with deadline $R_{\max}>1$,
\emph{i.e.}, at most $R_{\max}$ transmissions of the same PU packet can be performed,
after which the packet is discarded
and a new transmission may be performed.
To this end, the PU receiver, at the end of slot $n$, feeds back the packet $y_{P,n}\in\{\text{ACK},\text{NACK}\}$
to inform the PU transmitter of the transmission outcome, where
 $y_{P,n}=\text{ACK}$ (respectively, $y_{P,n}=\text{NACK}$) indicates that the PU transmission was successful (unsuccessful) in slot $n$.
 If the PU remains idle in slot $n$, then the PU receiver remains idle and
$y_{P,n}=\emptyset$.
 We assume that the feedback packet $y_{P,n}$ is received with no error
by both PUtx and the SU pair.
We define the
\emph{primary ARQ state} $t_{P,n}\in\mathbb{N}(0,R_{\max}-1)$\footnote{We define $\mathbb{N}(n_{0},n_{1})=\left\{ t\in\mathbb{N},n_{0}\leq t\leq n_{1}\right\}$
for $n_{0}\leq n_{1}\in\mathbb{N}$} as the number of 
retransmission attempts for the current PU packet,
and the \emph{ARQ delay} $d_{P,n}\in\mathbb{N}(0,D_{\max}-1)$, with $D_{\max}\geq R_{\max}$,
as the number of slots
 since the current packet was transmitted for the first time,
 where $D_{\max}$ is the maximum tolerable delay for the PU packets.
 Namely, 
if a PU packet is transmitted for the first time in slot $n$, then
$t_{P,n}=0$ and $d_{P,n}=0$;
 the counter $t_{P,n}$ is increased by one unit at each ARQ retransmission,
 and $d_{P,n}$ is increased by one unit in each slot,
until either the ARQ deadline $R_{\max}$ is reached when $t_{P,n}=R_{\max}-1$ and $a_{P,n}=1$ (\emph{i.e.}, the $(R_{\max}-1)$th retransmission attempt is performed),
or the maximum tolerable delay $D_{\max}$ is reached when $d_{P,n}=D_{\max}-1$.
If, in slot $n$, either the ARQ deadline $R_{\max}$ is reached ($t_{P,n}=R_{\max}-1$ and $a_{P,n}=1$), or the delay deadline $D_{\max}$ is reached ($d_{P,n}=D_{\max}-1$),
the packet is, possibly, retransmitted in slot $n$ and then dropped at the end of the slot, irrespective of the transmission outcome.
In case of no active session, we let $t_{P,n}=d_{P,n}=0$.
We let $\nu_P(j),\ j\geq 0$ be the slot index corresponding to the beginning of the $j$th primary ARQ cycle;
  mathematically, $\nu_P(0)=0$  and, for $j>0$, $\nu_P(j)=\min\{n:t_{P,n}=0,d_{P,n}=0,a_{S,n}=1,n>\nu_P(j-1)\}$.
  For $\nu_P(j)\leq n<\nu_P(j+1)$,
  the ARQ delay can thus be expressed as $d_{P,n}=n-\nu_P(j)$.

\subsubsection{PU packet labeling}
Without loss of generality,
each PU packet is univocally labeled
with the slot-index when it was transmitted for the first time, \emph{i.e.}, if the current PU packet is transmitted for the first time in slot $n$ (so that $t_{P,n}=0$),
such packet is assigned the label $l_{P,n}=n_P$,\footnote{We use the subscripts "P" and "S" to refer to PU and SU packets, respectively.} which is used for all future retransmissions of the same packet.
We let $l_{P,n}=n_P$ if $a_{P,n}=0$.

\subsubsection{Packet buffering}
The packets arrive from the upper layer and are stored in a buffer of size $Q_{\max}>0$. 
Packets are served from the data queue according to a first in first out scheme.
The packet arrival process $\{b_{P,n},\ n\geq 0\}$, where $b_{P,n}\in \mathbb{N}(0,B_{\max})$ for some $B_{\max}<\infty$,
is modeled as an i.i.d. process, independent of the SNR process $\{
(\boldsymbol{\gamma}_P(n),\boldsymbol{\gamma}_S(n)),\ n\geq 0\}$,
with probability distribution $\mathbb P_B(b_P)$. The following analysis can be extended
to the case where the data arrival process is stationary ergodic with finite first and second order moments.
We denote the state of the 
queue in slot $n$, \emph{i.e.}, the number of packets stored in the buffer including the current packet under transmission, as $q_{P,n}\in\mathbb{N}(0,Q_{\max})$.
The queue evolution is modeled as
\begin{align}
q_{P,n+1}=\min\{q_{P,n}-o_{P,n}+b_{P,n},Q_{\max}\},
\end{align}
where $o_{P,n}$ takes values $o_{P,n}=1$ if the transmission is successful or the
packet is dropped (due to reaching either  the ARQ or the delay deadlines); otherwise, $o_{P,n}=0$.
Note that $o_{P,n}=0$ when $q_{P,n}=0$, since no packets can be transmitted from an empty queue.
Additionally, if $d_{P,n}=D_{\max}-1$,
  then $q_{P,n}>0$ (since no active session exists with an empty data queue) and necessarily $o_{P,n}=1$, since the packet is dropped, independently of the transmission outcome;
  if $d_{P,n}<D_{\max}-1$ and $y_{P,n}=\emptyset$,
  then $a_{P,n}=0$ and $o_{P,n}=0$ since no PU transmission is performed;
    if $d_{P,n}<D_{\max}-1$ and $y_{P,n}=\text{ACK}$,
  then $q_{P,n}>0$, $a_{P,n}=1$, $\boldsymbol{\gamma}_{P,n}\in\Gamma_P(1)$
  and $o_{P,n}=1$ since the PU transmission is successful;
finally, if $d_{P,n}<D_{\max}-1$ and $y_{P,n}=\text{NACK}$,
  then  $q_{P,n}>0$, $a_{P,n}=1$, $\boldsymbol{\gamma}_{P,n}\notin\Gamma_P(1)$
  and $o_{P,n}=\chi(t_{P,n}=R_{\max}-1)$,  where $\chi(\cdot)$ denotes the indicator function, since the PU transmission is unsuccessful and the packet is discarded only if the ARQ deadline has been reached.
Mathematically, we can write the expression of $o_{P,n}$ as
 \begin{align}
 \label{o}
 o_{P,n}=&
 (1-a_{P,n})\chi(d_{P,n}=D_{\max}-1)\chi(q_{P,n}>0)
 +
 a_{P,n}\chi(\boldsymbol{\gamma}_P\in\Gamma_P(a_{S,n}))\chi(q_{P,n}>0)
 \\&
 \nonumber
+a_{P,n}[1-\chi(\boldsymbol{\gamma}_P\in\Gamma_P(a_{S,n}))]\chi(q_{P,n}>0)\chi(t_{P,n}=R_{\max}-1)\chi(d_{P,n}<D_{\max}-1)
 \\&
 \nonumber
+a_{P,n}[1-\chi(\boldsymbol{\gamma}_P\in\Gamma_P(a_{S,n}))]\chi(q_{P,n}>0)\chi(d_{P,n}=D_{\max}-1).
 \end{align}
 Note that we can express $o_{P,n}$ as a function of $(t_{P,n},d_{P,n},y_{P,n})$, denoted as 
 \begin{align}\label{sigma}
 o_{P,n}=\sigma(t_{P,n},d_{P,n},y_{P,n}).
 \end{align}

\subsubsection{Internal PU state}
\label{sec:intstate}
We denote the internal state of the PU at the beginning of slot $n$ as
\begin{align}
\mathbf s_{P,n}=(t_{P,n},d_{P,n},q_{P,n}),
\end{align}
 where 
$t_{P,n}$ is the ARQ state,
 $d_{P,n}$ is the ARQ delay, and $q_{P,n}$ is the data queue size.
 
 \subsubsection{PU access scheme}
The access decision of PUtx, $a_{P,n}\in\{0,1\}$, is made according to the stationary policy $\mu_P(\mathbf s_P)=\mathbb P(a_{P,n}=1|\mathbf s_{P,n}=\mathbf s_P)$,
representing the probability of choosing action $a_{P,n}=1$
when the internal state of the PU is $\mathbf s_P$.
Clearly, $\mu_P(t_{P,n},d_{P,n},0)=0$, since no transmissions can be performed if the data queue is empty.
This  probabilistic transmission model is general enough to capture, \emph{e.g.}, back-off mechanisms implemented
by the PU.

In this paper, $\mu_P$ is given and is not part of our design. In fact, the PU is 
 oblivious to the activity of the SU. Additionally, $\mu_P$ does not fully specify higher layer specifications of the PU, which are hidden to the SU. 
 Therefore, $\mu_P$ describes only those features of the PU activity which are relevant to the SU access scheme.

 \subsubsection{Internal PU state evolution}
\label{sec:evol}
The internal state of the PU evolves over time as data packets arrive from the upper layer and as a function of the transmission outcome and access decisions.

From state $\mathbf s_{P,n}=(0,0,0)$, \emph{i.e.}, no packets are waiting for transmission in the data queue,
the internal state becomes $\mathbf s_{P,n+1}=(0,0,\min\{b_{P,n},Q_{\max}\})$ in the next slot, since the PU remains idle and $o_{P,n}=0$.

From state $\mathbf s_{P,n}=(0,0,q_{P,n})$ with $q_{P,n}>0$, \emph{i.e.}, $q_{P,n}$ packets are waiting for transmission in the data buffer, and no packet is currently under an active retransmission session,
the internal state becomes: $\mathbf s_{P,n+1}=(0,0,\min\{q_{P,n}+b_{P,n},Q_{\max}\})$, if $a_{P,n}=0$;
$\mathbf s_{P,n+1}=(0,0,\min\{q_{P,n}-1+b_{P,n},Q_{\max}\})$, if $a_{P,n}=1$ and $o_{P,n}=1$ ($l_{P,n}=n_P$ is transmitted successfully at the first attempt);
$\mathbf s_{P,n+1}=(1,1,\min\{q_{P,n}+b_{P,n},Q_{\max}\})$, if $a_{P,n}=1$ and $o_{P,n}=0$ (the transmission of $l_{P,n}=n_P$ is unsuccessful, hence the ARQ state and delay are increased).

From state  $\mathbf s_{P,n}=(t_{P,n},d_{P,n},q_{P,n})$, with $q_{P,n}>0$ and $d_{P,n}\geq t_{P,n}>0$,
the internal state becomes: 
$\mathbf s_{P,n+1}=(0,0,\min\{q_{P,n}-1+b_{P,n},Q_{\max}\})$, if $o_{P,n}=1$ (the transmission is successful or the packet is dropped);
 $\mathbf s_{P,n+1}=(t_{P,n}+1,d_{P,n}+1,\min\{q_{P,n}+b_{P,n},Q_{\max}\})$,
 if $a_{P,n}=1$ and $o_{P,n}=0$ (the transmission is unsuccessful, but the packet is not dropped);
$\mathbf s_{P,n+1}=(t_{P,n},d_{P,n}+1,\min\{q_{P,n}+b_{P,n},Q_{\max}\})$, if $a_{P,n}=0$ and $o_{P,n}=0$ (no retransmission is performed, and the ARQ delay deadline has not been reached yet).

We can combine these cases and write
the internal state $\mathbf s_{P,n+1}=(t_{P,n+1},d_{P,n+1},q_{P,n+1})$ as a function of
  $\mathbf s_{P,n}=(t_{P,n},d_{P,n},q_{P,n})$, $b_{P,n}$, $a_{P,n}$ and $o_{P,n}$ as
\begin{align}
&q_{P,n+1}=\min\{q_{P,n}-o_{P,n}+b_{P,n},Q_{\max}\},
\\&t_{P,n+1}=(1-o_{P,n})(t_{P,n}+a_{P,n}),\label{tp}
\\&d_{P,n+1}=(1-o_{P,n})\left[d_{P,n}+\chi(t_{P,n}>0)+\chi(t_{P,n}=0)a_{P,n}\right].\label{dp}
\end{align}
Since $o_{P,n}$ is a function of $\mathbf s_{P,n}$ and $y_{P,n}$ via (\ref{sigma}), and $a_{P,n}=\chi(y_{P,n}\neq\emptyset)$, we denote the internal state update as
 \begin{align}
 \label{internalstate}
 \mathbf s_{P,n+1}=\phi(\mathbf s_{P,n},b_{P,n},y_{P,n}),
 \end{align}
 where $b_{P,n}$ is i.i.d. over time with probability mass function $\mathbb P_B(b_{P,n})$,
 and $y_{P,n}$ is independent over time, given $\mathbf s_{P,n}$, with probability mass function
 \begin{align}
 \label{pmfY}
& \mathbb P(y_{P,n}=\text{ACK}|\mathbf s_{P,n},a_{S,n})=\mu_P(\mathbf s_{P,n})
 \mathbb P\left(\boldsymbol{\gamma}_P(n)\in\Gamma_P(a_{S,n})\right),
 \nonumber\\&
 \mathbb P(y_{P,n}=\text{NACK}|\mathbf s_{P,n},a_{S,n})=\mu_P(\mathbf s_{P,n})
\left[1- \mathbb P\left(\boldsymbol{\gamma}_P(n)\in\Gamma_P(a_{S,n})\right)\right],
 \nonumber\\&
\mathbb P(y_{P,n}=\emptyset|\mathbf s_{P,n},a_{S,n})=1-\mu_P(\mathbf s_{P,n}).
  \end{align}

\subsection{SU system}
\label{suaccesspol}
Herein, we describe the model for the SU system, which specifies the decoding outcomes at SUrx as a function of the activity of the SU and PU pairs and the knowledge of the current PU packet at SUrx, the feedback message provided by SUrx to SUtx, the buffering mechanism implemented at SUrx,
the labeling of SU packets, and the SU access and labeling policies.
 \subsubsection{Decoding outcomes at SUrx}
SUrx attempts to decode both the PU and SU packets.
If the current PU packet has been decoded at SUrx in a previous 
 slot, its interference can be removed via \emph{Forward Interference Cancellation} (FIC), thus achieving an interference free channel at SUrx.
Therefore, the outcome of the SU transmission in slot $n$ depends on the PU access decision $a_{P,n}\in\{0,1\}$,
and on whether the current PU packet is known or unknown at SUrx.

In order to implement these IC schemes,
the SU pair needs to be able to track the activity of the PU pair (PU access decision $a_{P,n}$ in slot $n$)
and the retransmission process (ARQ state $t_{P,n}$ and delay $d_{P,n}$).
These features can be inferred from the PU feedback sequence $y_{P,0}^{n-1}$, overheard by the SU pair, as detailed in
Lemma \ref{lem1} in Appendix \ref{applemma1}.
Therefore the SU pair knows $(t_{P,n},d_{P,n})$ at the beginning of slot $n$, hence whether the PU will perform 
a retransmission or a new transmission in slot $n$.
However, it does not know in advance the access decision of the PU ($a_{P,n}\in\{0,1\}$),
due to the probabilistic access scheme $\mu_P(\mathbf s_{P,n})\in[0,1]$, and the partial knowledge of $\mathbf s_{P,n}$.

At the end of slot $n$, the SU pair overhears the feedback $y_{P,n}$, and thus infers the value of the PU access decision $a_{P,n}$.
Based on that, SUrx attempts to decode the PU and SU packets jointly (if $a_{P,n}=1$)
or the SU packet only  (if $a_{P,n}=0$). We now analyze the decoding outcomes at SUrx.

\paragraph{Decoding outcomes at SUrx when $a_{P,n}=1$, PU packet unknown}
\label{subsecdecevent}
We denote the current SU and PU packets with their labels $l_S$ and  $l_P$, respectively.
Note that SUtx, PUtx and SUrx form a multiple access channel \cite{Cover}.
Therefore, the region of achievable rates for a given channel quality is as depicted in  Fig. \ref{decevents}.
We have the following possible outcomes:

\begin{figure}[t]
    \centering
{\begin{tikzpicture}
\draw [ultra thick, ->] (0,0) -- (5,0);
\draw [ultra thick, ->] (0,0) -- (0,5);
\node [below] at (2.5,0) {PU rate, $R_p$};
\node [above,rotate=90] at (0,2.5) {SU rate, $R_s$};
\draw [ultra thick, -] (2.5,0) -- (2.5,1.25);
\draw [ultra thick, -] (0,2.5) -- (1.25,2.5);
\draw [ultra thick, -] (1.25,2.5) -- (2.5,1.25);
\draw [ultra thick, -] (2.5,5) -- (2.5,1.25);
\draw [ultra thick, -] (5,2.5) -- (1.25,2.5);
\draw [ultra thick, -] (1.25,5) -- (1.25,2.5);
\draw [ultra thick, -] (5,1.25) -- (2.5,1.25);
\node at (1.25,1) {$\mathcal R_{S,1}(\boldsymbol{\gamma}_S)$};
\node at (1.25+0.625+0.27,1.25+0.625+0.2) {$\mathcal R_{S,7}$};
\node at (0.625,3.75) {$\mathcal R_{S,3}$};
\node at (3.75,0.625) {$\mathcal R_{S,2}(\boldsymbol{\gamma}_S)$};
\node at (1.25+0.625,3.75) {$\mathcal R_{S,6}$};
\node at (3.75,1.25+0.625) {$\mathcal R_{S,5}(\boldsymbol{\gamma}_S)$};
\node at (3.75,3.75) {$\mathcal R_{S,4}$};
\node[right] at (5,2.5) {$C\left(\gamma_s\right)$};
\node[right] at (5,1.25) {$C\left(\frac{\gamma_s}{1+\gamma_{ps}}\right)$};
\node[right,rotate=90] at (2.5,5) {$C\left(\gamma_{ps}\right)$};
\node[right,rotate=90] at (1.25,5) {$C\left(\frac{\gamma_{ps}}{1+\gamma_s}\right)$};
\end{tikzpicture}}
\caption{Decoding regions at SUrx for a given realization of $(\gamma_s,\gamma_{ps})$.
The SU and PU rate pair $(R_s,R_p)$
is a fixed point in the plot. In contrast,
the boundaries of the decoding regions vary as a function of 
$(\gamma_s,\gamma_{ps})$, so that the decoding outcome varies randomly over time depending on
which region $(R_s,R_p)$ falls within.}
\label{decevents}
\end{figure}
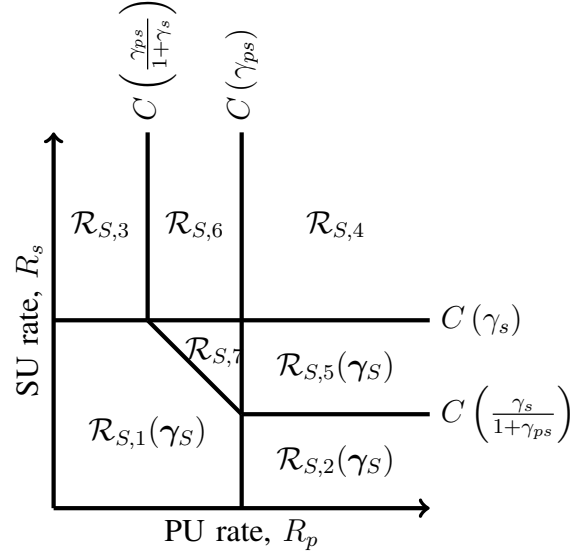

\begin{description}
 \item[O-1] SUrx successfully decodes jointly both $l_S$ and $l_P$;
this event occurs if  $\boldsymbol{\gamma}_S(n)\in\Gamma_{S,1}(R_s,R_p)$, where
\begin{align*}
\Gamma_{S,1}(R_s,R_p)\equiv&\left\{\boldsymbol{\gamma}_S:R_s< C\left(\gamma_s\right),\ 
R_p< C\left(\gamma_{ps}\right),
\right.\\&\left.
R_s+R_p< C\left(\gamma_s+\gamma_{ps}\right)\right\};
\end{align*}
we denote the probability of this event as $\delta_{sp}\triangleq\mathbb P(\boldsymbol{\gamma}_S\in\Gamma_{S,1}(R_s,R_p))$;\footnote{"$\delta$" denotes "decoded",
 with the subscript indicating whether the SU or PU packets are decoded (or both).}
 \item[O-2] SUrx successfully decodes only $l_S$, treating $l_P$ as background noise;
 however, $l_P$ is not decodable, even after removing the interference from $l_S$;
this event occurs if $\boldsymbol{\gamma}_S(n)\in\Gamma_{S,2}(R_s,R_p)$, where
\begin{align*}
\!\!\!\!\!\!\!\!
\Gamma_{S,2}(R_s,R_p)\equiv&\left\{\boldsymbol{\gamma}_S:
R_s< C\left(\frac{\gamma_s}{1+\gamma_{ps}}\right),
R_p\geq C\left(\gamma_{ps}\right)\right\};
\end{align*}
we denote the probability of this event as $\delta_{s}\triangleq\mathbb P(\boldsymbol{\gamma}_S\in\Gamma_{S,2}(R_s,R_p))$;
 \item[O-3] SUrx successfully decodes only $l_P$, treating $l_S$ as background noise;
  however, $l_S$ is not decodable, even after removing the interference from $l_P$;
this event occurs if $\boldsymbol{\gamma}_S(n)\in\Gamma_{S,3}(R_s,R_p)$, where
\begin{align*}
\!\!\!\!\!\!\!\!
\Gamma_{S,3}(R_s,R_p)\equiv&\left\{\boldsymbol{\gamma}_S:
R_s\geq C\left(\gamma_s\right),\ 
R_p< C\left(\frac{\gamma_{ps}}{1+\gamma_s}\right)\right\};
\end{align*}
we denote the probability of this event as $\delta_{p}\triangleq\mathbb P(\boldsymbol{\gamma}_S\in\Gamma_{S,3}(R_s,R_p))$;
 \item[O-4] both $l_S$ and $l_P$ cannot be decoded by SUrx,
 even after removing the interference from the other packet;
this event is denoted as  $l_S\not\leftrightarrow l_P$ and occurs if 
$\boldsymbol{\gamma}_S(n)\in\Gamma_{S,4}(R_s,R_p)$, where
\begin{align*}
\Gamma_{S,4}(R_s,R_p)\equiv&\left\{\boldsymbol{\gamma}_S:
R_s\geq C\left(\gamma_s\right),\ 
R_p\geq C\left(\gamma_{ps}\right)\right\};
\end{align*}
we denote the probability of this event as $\upsilon_{\emptyset}\triangleq\mathbb P(\boldsymbol{\gamma}_S\in\Gamma_{S,4}(R_s,R_p))$;\footnote{"$\upsilon$" denotes "undecoded",
with the subscript indicating whether the SU or PU packets (or none, or both) can be decoded after removing the interference from the other packet.}
 \item[O-5] both  $l_S$ and  $l_P$ cannot be decoded by SUrx; however,
the channel quality is such that, after removing the interference from  $l_P$,
 $l_S$ can be decoded,
or vice versa. In this case, we use an arrow $\rightarrow$ to indicate the
decoding dependence between the two packets. In particular,
 $l_P\rightarrow l_S$ indicates that 
 $l_S$ can be decoded only after removing the interference from  $l_P$,
but  $l_P$ cannot be decoded after removing the interference from  $l_S$,
\emph{i.e.},
$\boldsymbol{\gamma}_S(n)\in\Gamma_{S,5}(R_s,R_p)$, where
\begin{align*}
\!\!\!\!\!\!\!\!
\Gamma_{S,5}(R_s,R_p)\equiv&\left\{\boldsymbol{\gamma}_S:
C\left(\frac{\gamma_s}{1{+}\gamma_{ps}}\right){\leq}R_s{<}C\left(\gamma_s\right),
R_p{\geq}C\left(\gamma_{ps}\right)\!\right\}\!;
\end{align*}
\item[O-6]similarly, the dual event $l_S\rightarrow l_P$ occurs if
$\boldsymbol{\gamma}_S(n)\in\Gamma_{S,6}(R_s,R_p)$, where
\begin{align*}
\!\!\!\!\!\!\!\!
\Gamma_{S,6}(R_s,R_p)\equiv&\left\{\boldsymbol{\gamma}_S:
R_s{\geq}C\left(\gamma_s\right),
C\left(\frac{\gamma_{ps}}{1{+}\gamma_s}\right){\leq}R_p{<}C\left(\gamma_{ps}\right)\!\right\}\!;
\end{align*}
\item[O-7] finally, the event  $l_P\leftrightarrow l_S$ (knowledge of  $l_P$ enables the decoding of  $l_S$, and vice versa)
occurs if 
$\boldsymbol{\gamma}_S(n)\in\Gamma_{S,7}(R_s,R_p)$, where
\begin{align*}
\Gamma_{S,7}(R_s,R_p)\equiv&\left\{\boldsymbol{\gamma}_S:
R_s<C\left(\gamma_s\right),
R_p< C\left(\gamma_{ps}\right),
\right.\nonumber\\&\left.
R_s+R_p\geq C\left(\gamma_s+\gamma_{ps}\right)
\right\};
\end{align*}
we denote the probability that $l_P\rightarrow l_S$ as $\upsilon_{s}\triangleq\mathbb P(\boldsymbol{\gamma}_S\in\Gamma_{S,5}(R_s,R_p))$,
 that $l_S\rightarrow l_P$ as $\upsilon_{p}\triangleq\mathbb P(\boldsymbol{\gamma}_S\in\Gamma_{S,6}(R_s,R_p))$,
 and 
  that $l_P\leftrightarrow l_S$ as $\upsilon_{sp}\triangleq\mathbb P(\boldsymbol{\gamma}_S\in\Gamma_{S,7}(R_s,R_p))$.
\end{description}
For a given SNR $\boldsymbol{\gamma}_S$,
for each $j=1,2,\dots, 7$, we denote the set of rate values $(r_s,r_p)$ such that 
$\boldsymbol{\gamma}_S\in\Gamma_{S,j}(r_s,r_p)$
as
\begin{align}
\label{}
\mathcal R_{S,j}(\boldsymbol{\gamma}_S)\equiv\left\{(r_s,r_p):\boldsymbol{\gamma}_S\in\Gamma_{S,j}(r_s,r_p)\right\},
\end{align}
depicted in  Fig. \ref{decevents}.
In the following treatment, for convenience, we omit the dependence of $\Gamma_{S,j}$ on $(R_s,R_p)$.

\paragraph{Decoding outcomes at SUrx when $a_{P,n}=0$}
If the PU remains idle,
the transmission of the SU is successful if and only if 
$\boldsymbol{\gamma}_S\in\Gamma_{S,1}\cup\Gamma_{S,2}\cup\Gamma_{S,5}\cup\Gamma_{S,7}$,
with probability $\delta_{sp}+\delta_{s}+\upsilon_{s}+\upsilon_{sp}$.

\paragraph{Decoding outcomes at SUrx when $a_{P,n}=1$, PU packet known}
When the current PU packet is known at SUrx
as a result of a previous
PU retransmission of the same packet and
 successful decoding operation at SUrx, its interference can be removed from the received signal,
 thus creating a clean channel.
Therefore, the outcome is the same as in the previous case where $a_{P,n}=0$,
\emph{i.e.}, the transmission of the SU is successful if and only if 
$\boldsymbol{\gamma}_S\in\Gamma_{S,1}\cup\Gamma_{S,2}\cup\Gamma_{S,5}\cup\Gamma_{S,7}$,
with probability $\delta_{sp}+\delta_{s}+\upsilon_{s}+\upsilon_{sp}$.

\subsubsection{Decoding feedback from SUrx}
Let $y_{S,n}\in\{1,\dots,7\}$ be the \emph{decoding outcome} at SUrx, indicating one of the regions depicted in Fig. \ref{decevents},
where $y_{S,n}=j$ if and only if $\boldsymbol{\gamma}_S(n)\in\Gamma_{S,j}$. 
 At the end of each slot, $y_{S,n}$ is fed back from SUrx to SUtx, and received without error by SUtx.
 We emphasize that $y_{S,n}$ represents a feedback that is richer than the ACK, NACK and idle, used by PUrx. 

\subsubsection{Buffering at SUrx and chain decoding}
\label{buffering}
When  $l_P\rightarrow l_S$,  $l_S\rightarrow l_P$ or  $l_P\leftrightarrow l_S$,
occurring with probability $\upsilon_{s}$, $\upsilon_{p}$ and $\upsilon_{sp}$, respectively,
SUrx buffers the corresponding received signals.
In fact, if  $l_P\rightarrow l_S$ or  $l_P\leftrightarrow l_S$, the underlying primary ARQ process
may enable the recovery 
of  $l_S$ in a future slot,
if  $l_P$ is successfully decoded in a subsequent ARQ retransmission,
by removing its interference from the buffered received signals.

Similarly, if  $l_S\rightarrow l_P$ or  $l_P\leftrightarrow l_S$,
the SU may also exploit retransmissions as follows.
It may opportunistically retransmit the buffered  $l_S$, so that,
in the event of a successful decoding operation of  $l_S$ in a future slot,
its interference can be removed from the previously buffered received signal, thus recovering  $l_P$.
In turn, the recovered  $l_P$ may potentially be exploited to recover other SU packets from previously buffered received signals,
as described above.
For analytical tractability, we assume that SUrx is provided with an infinite buffer to store
the received signals.

The process of subsequently decoding a PU or SU packet and removing its interference from previously buffered signals,
thus enabling the decoding of other SU or PU packets, and so on, until no further successive IC operations are possible, is denoted as \emph{chain decoding} (CD).
We term \emph{Forward Interference Cancellation} (FIC) the technique by which 
the current  $l_P$ is decoded in some slots, and its interference is removed in the following slots within its retransmission window,
thus creating a clean channel for SU transmissions.
Finally, we term \emph{Backward Interference Cancellation} (BIC) the technique by which 
the current  $l_P$ is decoded in some slots, and its interference is removed
from signals previously buffered within the current retransmission window.
With BIC and FIC, IC  is limited within the ARQ window where the current PU packet is transmitted, as in \cite{MichelusiJSAC}.
On the other hand, CD enables the use of IC over multiple stages and across multiple ARQ retransmission windows,
by exploiting opportunistic retransmissions by the SU pair.

\subsubsection{SU packet labeling}
The packets transmitted by SUtx are univocally labeled
with the slot-index when they were first transmitted, \emph{i.e.}, if a new packet is transmitted in slot $n$,
it is labeled as 
 $l_{S,n}=n_S$, which is used for all future retransmissions of the same packet. If the SU is idle in slot $n$, we let $l_{S,n}=n_S$.
 
\subsubsection{SU access policy}
The SU, at the beginning of slot $n$, given the PU and SU feedback sequences  $y_{P,0}^{n-1}$,  $y_{S,0}^{n-1}$ collected up to slot $n$, the SU access sequence
$a_{S,0}^{n-1}$,
and the SU label sequence $l_{S,0}^{n-1}$,
decides whether to access the channel or remain idle,
 according to the access policy $\mu_{S,n}(y_{P,0}^{n-1},y_{S,0}^{n-1},a_{S,0}^{n-1},l_{S,0}^{n-1})$,
 representing the probability of choosing $a_{S,n}=1$ in slot $n$.
 
 \subsubsection{SU labeling policy}
 Moreover, if $a_{S,n}=1$, the SU  selects which packet to transmit
 according to the labeling policy $\lambda_{S,n}(l_S|y_{P,0}^{n-1},y_{S,0}^{n-1},a_{S,0}^{n-1},l_{S,0}^{n-1})$,
 representing the probability that the label $l_S$ is chosen in slot $n$. In particular,
 $l_S=n_S$ if a new packet is transmitted, and $l_S<n_S$ if the packet first transmitted in slot $l_S$ (and thus assigned label $l_S$) is retransmitted.

\section{Chain decoding (CD)}
\label{CDtech}
In Sec. \ref{buffering}, we have explained how SUrx buffers the received signals 
when  $l_{P,n}\rightarrow l_{S,n}$,  $l_{S,n}\rightarrow l_{P,n}$ or  $l_{P,n}\leftrightarrow l_{S,n}$,
in order to make it possible to recover these buffered packets in the future via CD.
The decoding relationship between the SU and PU packets buffered at SUrx can be described by a graph,
termed CD graph,
with the set of undecoded SU and PU packets buffered as vertices, and the decoding relationship between them as edges.
For instance, if $l_{S,n}\rightarrow l_{P,n}$, and $l_{S,n}$ and $l_{P,n}$ have not been decoded yet,
then $l_{S,n}$ and $l_{P,n}$ are vertices in the CD graph, 
connected by a directed edge from $l_{S,n}$ to  $l_{P,n}$.
In the following, we describe the construction of the CD graph.

Let $\kappa_{P,n}(l_{P})=1$ (respectively, $\kappa_{S,n}(l_{S})=1$) if the PU packet (SU packet) with label $l_{P}$ ($l_{S}$) has been decoded by SUrx up to slot $n$ (not included),
as a consequence of a  direct decoding operation or via CD,
and $\kappa_{P,n}(l_{P})=0$ ($\kappa_{S,n}(l_{S})=0$) otherwise.
Let $\mathcal V_{P,n}$ and $\mathcal V_{S,n}$ be the set of PU and SU packets
still not decoded by SUrx up to slot $n$ (including the potential transmission of a new PU or SU packet with label $l_P=n_P$ or $l_S=n_S$ in slot $n$). Mathematically,
\begin{align}\label{vpvs}
&\mathcal V_{P,n}=\{l_P\in\{0_P,1_P,\dots,n_P\}:\kappa_{P,n}(l_{P})=0\},
\nonumber\\&
\mathcal V_{S,n}=\{l_S\in\{0_S,1_S,\dots,n_S\}:\kappa_{S,n}(l_{S})=0\}.
\end{align}
Note that these sets may potentially include labels of packets never transmitted (\emph{e.g.}, if SUtx remains idle in slot $k$, then the label $k_S$ is never used for
an SU packet, due to the labeling scheme employed).
Then, the CD graph at the beginning of slot $n$, denoted as
$\mathcal G_n=(\mathcal V_n,\mathbf A_n)$, is a bipartite graph 
with nodes $\mathcal V_n\equiv \mathcal V_{P,n}\cup \mathcal V_{S,n}$, and adjacency matrix
\begin{align}
\label{an}
\mathbf A_n=\left[\begin{array}{cc}\mathbf 0 & \mathbf A_{P\rightarrow S,n}\\
\mathbf A_{S\rightarrow P,n}&\mathbf 0\end{array}\right],
\end{align}
where $\mathbf A_{P\rightarrow S,n}\in\{0,1\}^{|\mathcal V_{P,n}|\times |\mathcal V_{S,n}|}$
is the matrix of edge weights connecting PU packets 
$l_P\in \mathcal V_{P,n}$ to SU packets $l_S\in \mathcal V_{S,n}$,
and 
$\mathbf A_{S\rightarrow P,n}\in\{0,1\}^{|\mathcal V_{S,n}|\times |\mathcal V_{P,n}|}$
is the matrix of edge weights connecting SU packets 
$l_S\in \mathcal V_{S,n}$ to PU packets $l_P\in \mathcal V_{P,n}$.
The edge weight $\mathbf A_{P\rightarrow S,n}(l_P,l_S)$ is set to $1$ if the successful decoding of $l_P$ enables the
decoding of $l_S$ via IC in a previously buffered signal, and to $0$ otherwise.
Similarly, the edge weight $\mathbf A_{S\rightarrow P,n}(l_S,l_P)$ is set to $1$ if the successful decoding of $l_S$ enables the
decoding of $l_P$ via IC in a previously buffered signal, and to $0$ otherwise. Mathematically,
using the notation of Sec. \ref{subsecdecevent},
for each pair $(l_S,l_P)\in\mathcal V_{S,n}\times \mathcal V_{P,n}$:
\begin{itemize}
\item if $\exists k\in\mathbb N(0,n-1):
y_{S,k}\in\{5,7\},a_{P,k}=a_{S,k}=1,l_{S,k}=l_S,l_{P,k}=l_P$,
then $\mathbf A_{P\rightarrow S,n}(l_P,l_S)=1$; otherwise, $\mathbf A_{P\rightarrow S,n}(l_P,l_S)=0$;
\item if $\exists k\in\mathbb N(0,n-1):y_{S,k}\in\{6,7\},a_{P,k}=a_{S,k}=1,l_{S,k}=l_S,l_{P,k}=l_P$,
then $\mathbf A_{S\rightarrow P,n}(l_S,l_P)=1$; otherwise, $\mathbf A_{S\rightarrow P,n}(l_S,l_P)=0$.
\end{itemize}
Note that there are no edges connecting nodes in $\mathcal V_{P,n}$ to nodes in $\mathcal V_{P,n}$, 
nor nodes in $\mathcal V_{S,n}$ to nodes in $\mathcal V_{S,n}$. In fact, in each slot, at most one SU packet and one PU packet are transmitted, rather than
a combination of multiple SU and PU packets.
The packets already decoded at SUrx up to slot $n$ are not included in the graph. In fact, since their interference has been already removed, 
they no longer take part in the CD process.
Those packets never transmitted are isolated nodes in the graph, 
having neither incoming nor outgoing edges.
The CD graph $\mathcal G_n$ captures all information about the state of the buffer at SUrx, since it represents the decoding relationship between
the SU and PU packets transmitted so far.

\subsection{CD graph evolution and instantaneous SU throughput analysis}
\label{secdemecoioni}
In this section, we describe the construction of the CD graph, and we analyze the instantaneous SU throughput accrued via CD.
We let $r_{S,n}$ be the \emph{instantaneous SU throughput} in slot $n$, \emph{i.e.}, the number of SU packets decoded by SUrx in slot $n$.

At the beginning of slot $0$, no packets have been transmitted, and thus $\mathcal V_{P,0}\equiv\mathcal V_{S,0}\equiv\{0\}$,
$\mathbf A_{P\rightarrow S,0}=\mathbf A_{S\rightarrow P,0}=0$,
thus defining the CD graph $\mathcal G_0=(\mathcal V_0,\mathbf A_0)$ via (\ref{vpvs}) and  (\ref{an}).

The evolution of $\mathcal G_n$ over time depends on the outcome at the end of slot $n$
and on which packets are transmitted in slot $n$ by PUtx and SUtx, denoted by their labels $l_{P,n}\in\{0_P,1_P,\dots,n_P\}$ and $l_{S,n}\in\mathcal V_{S,n}$.
Note that the set of SU packets $\{0_S,1_S,\dots,n_S\}\setminus\mathcal V_{S,n}$ are those already decoded by SUrx, and therefore are not retransmitted by SUtx.
On the other hand, PUtx may retransmit a PU packet already decoded by SUrx, if such packet has not been decoded by PUrx yet.
We have different cases, analyzed herein.

\subsubsection{Case $a_{P,n}=0$, $a_{S,n}=0$}
\label{emtpyset}
In this case, both SUtx and PUtx remain idle in slot $n$ and no packets are decoded by SUrx, so that $r_{S,n}=0$, $\kappa_{P,n+1}(l_P)=\kappa_{P,n}(l_P),\forall l_P$
and $\kappa_{S,n+1}(l_S)=\kappa_{S,n}(l_S),\forall l_S$. In the next slot, we thus have
\begin{align}
&\mathcal V_{P,n+1}=\mathcal V_{P,n}\cup\{n+1\},
\nonumber\\&
\mathcal V_{S,n+1}=\mathcal V_{S,n}\cup\{n+1\}.
\end{align}
The sub-matrices $\mathbf A_{P\rightarrow S,n+1}$ and $\mathbf A_{S\rightarrow P,n+1}$ of the adjacency matrix $\mathbf A_{n+1}$ are
given by
\begin{align}
&\mathbf A_{P\rightarrow S,n+1}=
\left[
\begin{array}{cc}
\mathbf A_{P\rightarrow S,n} & \mathbf 0 \\
\mathbf 0 & 0
\end{array}
\right],
\\
&\mathbf A_{S\rightarrow P,n+1}=
\left[
\begin{array}{cc}
\mathbf A_{S\rightarrow P,n} & \mathbf 0 \\
\mathbf 0 & 0
\end{array}
\right].
\end{align}
Note that $\mathbf A_{P\rightarrow S,n+1}$ (respectively, $\mathbf A_{S\rightarrow P,n+1}$) is obtained from $\mathbf A_{P\rightarrow S,n}$ ($\mathbf A_{S\rightarrow P,n}$) by adding a row and a column of zeros,
corresponding to the inclusion of the new (untransmitted) SU and PU packets with label $n+1$.

\subsubsection{Case $a_{P,n}=0$, $a_{S,n}=1$}
\label{cd1}
In this case, PUtx remains idle and SUtx transmits the packet with label $l_{S,n}$ (if $l_{S,n}=n_S$, it is the first transmission attempt).
We distinguish the two cases $y_{S,n}\in\{3,4,6\}$ and $y_{S,n}\in\{1,2,5,7\}$.

If $y_{S,n}\in\{3,4,6\}$, then $l_{S,n}$ cannot be successfully decoded by SUrx, so that $r_{S,n}=0$. The updates of
$\mathcal G_{n+1}$, $\kappa_{P,n+1}$ and $\kappa_{S,n+1}$ are the same as in Sec. \ref{emtpyset}.

On the other hand, if $y_{S,n}\in\{1,2,5,7\}$, then $l_{S,n}$ is successfully decoded by SUrx and the CD technique is initiated.
It works as follows:
starting from node $l_{S,n}$,
 SU and PU packets, previously buffered at SUrx, are decoded subsequently via CD, following the direction of the edges in the graph.
 Mathematically, letting $\mathbf e_S(l_{S,n})$ be a row vector of zeros, except at the position
  corresponding to packet $l_{S,n}$ in the adjacency matrix $\mathbf A_n$ (a similar definition applies to $\mathbf e_P(l_P)$ for a PU packet $l_P\in\mathcal V_{P,n}$),
  after one step of CD the packets recovered are those corresponding to the non-zero elements
  of the vector $\mathbf e_S(l_{S,n})\mathbf A_n$, \emph{i.e.}, $\{l_P\in\mathcal V_{P,n}:[\mathbf A_{S\rightarrow P,n}]_{l_{S,n},l_P}=1\}$.
  The procedure is applied again to each packet recovered, so that
  the PU and SU packets recovered at the $k$th iteration of CD are those corresponding to the non-zero entries
  of the vector $\mathbf e_S(l_{S,n})\mathbf A_n^k$, \emph{i.e.}, 
  $\{l_S\in\mathcal V_{S,n}:\mathbf e_S(l_{S,n})\mathbf A_n^k\mathbf e_S(l_S)^T\geq 1\}\cup\{l_P\in\mathcal V_{P,n}:
  \mathbf e_S(l_{S,n})\mathbf A_n^k\mathbf e_P(l_P)^T\geq 1\}$.
  Therefore, after $i$ iterations of CD,   the PU and SU packets recovered are those corresponding to the non-zero elements
  of
  \begin{align}
  \label{xyz}
\mathbf v_{S,n}^{(i)}=\mathbf e_S(l_{S,n})\chi\left(\sum_{k=0}^i\mathbf A_n^k>0\right),
  \end{align}
  where the indicator function of vectors is applied entry-wise.
  In fact, the PU and SU packets recovered are those corresponding to the non-zero entries of 
  $\mathbf e_S(l_{S,n})\mathbf A_n^k$, for each $k=0,1,\dots,i$.
  The inclusion of $\mathbf A_n^0=\mathbf I$ in (\ref{xyz}) guarantees that also the SU packet $l_{S,n}$ which initiates CD is counted in the throughput accrual. 
This procedure is repeated until no more packets can be decoded, \emph{i.e.}, $\mathbf v_{S,n}^{(i+1)}=\mathbf v_{S,n}^{(i)}$. Overall, when CD is initiated from $l_{S,n}\in\mathcal V_{S,n}$ after a successful decoding operation of $l_{S,n}$, the PU and SU packets recovered
after termination of CD
 are those corresponding to the non-zero elements
of
\begin{align}
\label{vni}
\mathbf v_{S}^*(l_{S,n};\mathcal G_n)\triangleq\lim_{i\to\infty}\mathbf v_{S,n}^{(i)}
=\lim_{i\to\infty}
\chi\left(\mathbf e_S(l_{S,n})\sum_{k=0}^i\mathbf A_n^k>0\right).
\end{align}
This limit exists, since the argument within the function $\chi(\cdot)$ in (\ref{vni}) is a vector with non-decreasing entries (in  the iteration index $i$),
and $0\leq\chi(\cdot)\leq 1$.

Therefore, we have that
\begin{align}
&\kappa_{P,n+1}(l_P)=\mathbf v_{S}^*(l_{S,n};\mathcal G_n)\mathbf e_P(l_P)^T,\forall l_P\in\mathcal V_{P,n},
\\
&\kappa_{S,n+1}(l_S)=\mathbf v_{S}^*(l_{S,n};\mathcal G_n)\mathbf e_S(l_S)^T,\forall l_S\in\mathcal V_{S,n},
\end{align}
where we have used the fact that $\mathbf v_{S}^*(l_{S,n};\mathcal G_n)\mathbf e_X(l_X)^T,X\in\{S,P\}$ equals one if and only if 
packet $l_X$ has been decoded by the end of the CD scheme.

\begin{definition}
Given $\mathcal G_n$,
 we define the \emph{CD potential} of node $l_{S,n}\in\mathcal V_{S,n}$, $v_S(l_{S,n};\mathcal G_n)$, as the number of SU packets that can be decoded by initiating CD
 from the SU packet $l_{S,n}$ (including $l_{S,n}$ itself). Mathematically,
 \begin{align}
 v_S(l_{S,n};\mathcal G_n)=
\mathbf v_{S}^*(l_{S,n};\mathcal G_n)\sum_{l_S^\prime\in\mathcal V_{S,n}}\mathbf e_S(l_S^\prime)^T.
 \end{align}
 \qed
\end{definition}
With this definition, the instantaneous SU throughput accrued in slot $n$ is given by $r_{S,n}=v_S(l_{S,n};\mathcal G_n)$, which includes packet $l_{S,n}$ itself.
In the next slot, the CD graph becomes $\mathcal G_{n+1}=(\mathcal V_{n+1},\mathbf A_{n+1})$, 
obtained by pruning from $\mathcal G_n$ the nodes and the edges corresponding to those PU and SU packets recovered via CD,
and adding the new unconnected SU packet $(n+1)_S$ and PU packet $(n+1)_P$.

\subsubsection{Case $a_{P,n}=1$, $l_{P,n}\in\mathcal V_{P,n}$, $a_{S,n}=0$}
\label{cd2}
In this case, PUtx transmits a PU packet still undecoded by SUrx and SUtx remains idle.
We distinguish the two cases $y_{S,n}\in\{2,4,5\}$ and $y_{S,n}\in\{1,3,6,7\}$.

If $y_{S,n}\in\{2,4,5\}$, then $l_{P,n}$ cannot be successfully decoded by SUrx, so that $r_{S,n}=0$. The updates of
$\mathcal G_{n+1}$, $\kappa_{P,n+1}$ and $\kappa_{S,n+1}$ are the same as in Sec. \ref{emtpyset}.

On the other hand, if $y_{S,n}\in\{1,3,6,7\}$, then $l_{P,n}$ is successfully decoded by SUrx and CD is initiated.
Similarly to the case analyzed in Sec. \ref{cd1},
the PU and SU packets recovered
after termination of CD are those corresponding to the non-zero elements
of
\begin{align}
\mathbf v_{P}^*(l_{P,n};\mathcal G_n)\triangleq \lim_{i\to\infty}\chi\left(\mathbf e_P(l_{P,n})\sum_{k=0}^i\mathbf A_n^k>0\right).
\end{align}
Therefore, we have that
\begin{align}
&\kappa_{P,n+1}(l_P)=\mathbf v_{P}^*(l_{P,n};\mathcal G_n)\mathbf e_P(l_P)^T,\forall l_P\in\mathcal V_{P,n},
\\
&\kappa_{S,n+1}(l_S)=\mathbf v_{P}^*(l_{P,n};\mathcal G_n)\mathbf e_S(l_S)^T,\forall l_S\in\mathcal V_{S,n}.
\end{align}

\begin{definition}
Given $\mathcal G_n$,
 we define the \emph{CD potential} of node $l_{P,n}\in\mathcal V_{P,n}$, $v_P(l_{P,n};\mathcal G_n)$, as the number of SU packets that can be decoded by initiating CD
 from the PU packet $l_{P,n}$. Mathematically,
 \begin{align}
 v_P(l_{P,n};\mathcal G_n)=
\mathbf v_{P}^*(l_{P,n};\mathcal G_n)\sum_{l_S^\prime\in\mathcal V_{S,n}}\mathbf e_S(l_S^\prime)^T.
 \end{align}
 \qed
\end{definition}
With this definition, the instantaneous SU throughput accrued in slot $n$ is given by $r_{S,n}=v_P(l_{P,n};\mathcal G_n)$.
In the next slot, the CD graph becomes $\mathcal G_{n+1}=(\mathcal V_{n+1},\mathbf A_{n+1})$, 
obtained by pruning from $\mathcal G_n$ the nodes and the edges corresponding to those PU and SU packets recovered via CD,
and adding the new unconnected SU packet $(n+1)_S$ and PU packet $(n+1)_P$.

\subsubsection{Case $l_{P,n}\in\mathcal V_{P,n}$, $l_{S,n}\in\mathcal V_{S,n}$}
In this case, both PUtx and SUtx transmit. Moreover, the PU packet transmitted is still unknown to SUrx.
The outcome depends on the value of $y_{S,n}$, as detailed below:
\begin{itemize}
\item $y_{S,n}=1$: both $l_{P,n}$ and $l_{S,n}$ are jointly decoded and CD is initiated from both packets, thus combining the cases analyzed in
Secs. \ref{cd1} and \ref{cd2}. In particular,
the PU and SU packets recovered
after termination of CD are those corresponding to the non-zero elements of
\begin{align}
\mathbf v^*(l_{S,n},l_{P,n};\mathcal G_n)
\triangleq
\lim_{i\to\infty}
\chi\left(\left[\mathbf e_P(l_{P,n})+\mathbf e_S(l_{S,n})\right]\sum_{k=0}^i\mathbf A_n^k>0\right).
\end{align}

Therefore, we have that
\begin{align}
&\kappa_{P,n+1}(l_P)=\mathbf v^*(l_{S,n},l_{P,n};\mathcal G_n)\mathbf e_P(l_P)^T,\forall l_P\in\mathcal V_{P,n},
\\
&\kappa_{S,n+1}(l_S)=\mathbf v^*(l_{S,n},l_{P,n};\mathcal G_n)\mathbf e_S(l_S)^T,\forall l_S\in\mathcal V_{S,n}.
\end{align}

\begin{definition}
Given $\mathcal G_n$,
 we define the \emph{joint CD potential} of nodes $l_{S,n}\in\mathcal V_{S,n}$
 and $l_{P,n}\in\mathcal V_{P,n}$, $v(l_{S,n},l_{P,n};\mathcal G_n)$, as the number of SU packets that can be decoded by initiating CD
 from the SU packet $l_{S,n}$  (including $l_{S,n}$ itself) and PU packet $l_{P,n}$. Mathematically,
 \begin{align}
 v(l_{S,n},l_{P,n};\mathcal G_n)=
\mathbf v^*(l_{S,n},l_{P,n};\mathcal G_n)\sum_{l_S^\prime\in\mathcal V_{S,n}}\mathbf e_S(l_S^\prime)^T.
 \end{align}
 \qed
\end{definition}
With this definition, the instantaneous SU throughput accrued in slot $n$ is given by $r_{S,n}= v(l_{S,n},l_{P,n};\mathcal G_n)$.
In the next slot, the CD graph becomes $\mathcal G_{n+1}=(\mathcal V_{n+1},\mathbf A_{n+1})$, 
obtained by pruning from $\mathcal G_n$ the nodes and the edges corresponding to those PU and SU packets recovered via CD,
and adding the new unconnected SU packet $(n+1)_S$ and PU packet $(n+1)_P$.
\item $y_{S,n}=2$: $l_{S,n}$ is decoded by treating $l_{P,n}$ as noise, whereas $l_{P,n}$ cannot be decoded. 
This case is the same as the one analyzed in Sec. \ref{cd1}.
\item $y_{S,n}=3$: $l_{P,n}$ is decoded by treating $l_{S,n}$ as noise, whereas $l_{S,n}$ cannot be decoded. 
This case is the same as the one analyzed in Sec. \ref{cd2}.
\item $y_{S,n}=4$:  neither $l_{P,n}$ nor $l_{S,n}$ can be decoded, even after removing the mutual interference,
due to poor channel quality. 
This case is the same as the one analyzed in Sec. \ref{emtpyset}.
\item $y_{S,n}\in\{5,6,7\}$: neither $l_{P,n}$ nor $l_{S,n}$ can be decoded, but they are buffered since they may be decoded in the future
by removing the mutual interference. Therefore, $r_{S,n}=0$ since CD cannot be initiated, so that
$\kappa_{P,n+1}(l_P)=\kappa_{P,n}(l_P),\forall l_P$
and $\kappa_{S,n+1}(l_S)=\kappa_{S,n}(l_S),\forall l_S$.
The next CD graph $\mathcal G_{n+1}$ is obtained in two intermediate steps. First,
the new sets $\mathcal V_{P,n+1}$ and $\mathcal V_{S,n+1}$ are defined as 
\begin{align}
&\mathcal V_{P,n+1}=\mathcal V_{P,n}\cup\{n+1\},
\nonumber\\&
\mathcal V_{S,n+1}=\mathcal V_{S,n}\cup\{n+1\}.
\end{align}
Then, the intermediate adjacency matrix $\tilde{\mathbf A}_{n+1}$ is defined as
\begin{align}
\tilde{\mathbf A}_{n+1}=\left[\begin{array}{cc}\mathbf 0 & \tilde{\mathbf A}_{P\rightarrow S,n+1}\\
\tilde{\mathbf A}_{S\rightarrow P,n+1}&\mathbf 0\end{array}\right],
\end{align}
with sub-matrices
\begin{align}
&\tilde{\mathbf A}_{P\rightarrow S,n+1}=
\left[
\begin{array}{cc}
\tilde{\mathbf A}_{P\rightarrow S,n} & \mathbf 0 \\
\mathbf 0 & 0
\end{array}
\right],
\\
&\tilde{\mathbf A}_{P\rightarrow S,n+1}=
\left[
\begin{array}{cc}
\tilde{\mathbf A}_{P\rightarrow S,n} & \mathbf 0 \\
\mathbf 0 & 0
\end{array}
\right],
\end{align}
corresponding to the inclusion of the new SU and PU packets with label $n+1$.
Then, the sub-matrices $\mathbf A_{P\rightarrow S,n+1}$ and $\mathbf A_{S\rightarrow P,n+1}$ of the adjacency matrix $\mathbf A_{n+1}$ are
defined as
\begin{align}
&[\mathbf A_{P\rightarrow S,n+1}]_{l_P^\prime,l_S^\prime}=[\tilde{\mathbf A}_{P\rightarrow S,n+1}]_{l_P^\prime,l_S^\prime},\forall (l_P^\prime,l_S^\prime)\in\mathcal V_{P,n+1}\times\mathcal V_{S,n+1}\setminus\{(l_{P,n},l_{S,n})\}
\\
&[\mathbf A_{P\rightarrow S,n+1}]_{l_{P,n},l_{S,n}}=
\left\{
\begin{array}{ll}
1&\text{if }y_{S,n}\in\{5,7\}\\
\left[\tilde{\mathbf A}_{P\rightarrow S,n+1}\right]_{l_{P,n},l_{S,n}}&\text{if }y_{S,n}=6,
\end{array}
\right.
\end{align}
and
\begin{align}
&[\mathbf A_{S\rightarrow P,n+1}]_{l_S^\prime,l_P^\prime}=[\tilde{\mathbf A}_{S\rightarrow P,n+1}]_{l_S^\prime,l_P^\prime},\forall (l_S^\prime,l_P^\prime)\in\mathcal V_{S,n+1}\times\mathcal V_{P,n+1}\setminus\{(l_{S,n},l_{P,n})\}
\\
&[\mathbf A_{S\rightarrow P,n+1}]_{l_{S,n},l_{P,n}}=
\left\{
\begin{array}{ll}
\left[\tilde{\mathbf A}_{S\rightarrow P,n+1}\right]_{l_{S,n},l_{P,n}}&\text{if }y_{S,n}=5\\
1&\text{if }y_{S,n}\in\{6,7\},
\end{array}
\right.
\end{align}
\emph{i.e.}, edges are added corresponding to the decoding relationship between $l_{S,n}$ and $l_{P,n}$.
\end{itemize}

\subsubsection{Case $a_{P,n}=1$, $l_{P,n}\in\{0_P,1_P,\dots,n_P\}\setminus\mathcal V_{P,n}$}
In this case, PUtx transmits a packet which is known by SUrx due to a previous successful decoding operation. In turn, SUrx can remove its interference from the received signal.
After the interference from the PU transmission has been removed, this case becomes the same as the one analyzed in Secs. \ref{emtpyset} and \ref{cd1}, depending on whether 
SUtx remains idle ($a_{S,n}=0$) or transmits ($a_{S,n}=1$).

We now provide an example of construction of the CD graph.
\begin{example}\label{ex1}
Consider a sequence of 4 slots $\{0,1,2,3\}$. PUtx transmits packets $0_P$, $0_P$, $2_P$, in sequence;
SUtx transmits packets $0_S$, $1_S$, $1_S$, in sequence. The decoding outcome at SUrx is such that
$0_P\rightarrow 0_S$ in slot 0,
$1_S\rightarrow 0_P$ in slot 1 and
$2_P\rightarrow 1_S$ in slot 2.
The corresponding CD graph thus evolves as in Fig. \ref{exconstruc}.
Correspondingly, at the beginning of slot $1$ (end of slot $0$) we have
\begin{align}
\mathbf A_{P\rightarrow S,1}=
\left[
\begin{array}{cc}
1 & 0\\
0 & 0
\end{array}
\right],
\quad
\mathbf A_{S\rightarrow P,1}=
\left[
\begin{array}{cc}
0 & 0\\
0 & 0
\end{array}
\right];
\end{align}
at the beginning of slot $2$ 
\begin{align}
\mathbf A_{P\rightarrow S,2}=
\left[
\begin{array}{ccc}
1 & 0 & 0\\
0 & 0 & 0\\
0 & 0 & 0
\end{array}
\right],
\quad
\mathbf A_{S\rightarrow P,2}=
\left[
\begin{array}{ccc}
0 & 0 & 0\\
1 & 0 & 0\\
0 & 0 & 0
\end{array}
\right];
\end{align}
at the beginning of slot $3$ 
\begin{align}
\mathbf A_{P\rightarrow S,3}=
\left[
\begin{array}{cccc}
1 & 0 & 0 & 0\\
0 & 0 & 0 & 0\\
0 & 1 & 0 & 0\\
0 & 0 & 0 & 0
\end{array}
\right],
\quad
\mathbf A_{S\rightarrow P,3}=
\left[
\begin{array}{cccc}
0 & 0 & 0 & 0\\
1 & 0 & 0 & 0\\
0 & 0 & 0 & 0\\
0 & 0 & 0 & 0
\end{array}
\right].
\end{align}
\begin{figure}
    \centering
    \includegraphics[width=\linewidth,trim = 1mm 1mm 1mm 1mm,clip=false]{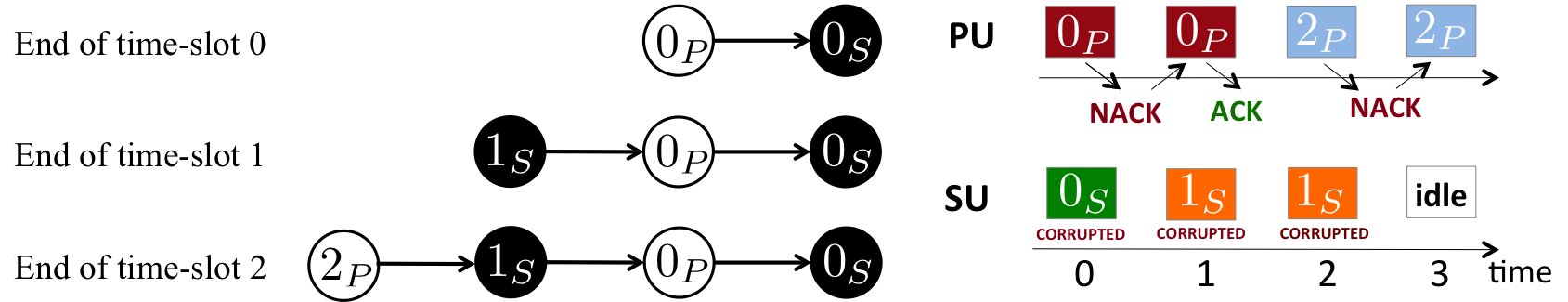}
\caption{Example of CD graph construction. The white and black nodes denote PU packets and SU packets, respectively, numbered with the corresponding label.
On the right, we show the transmission sequence for the PU and SU.}
\label{exconstruc}
\end{figure}
Now, assume PUtx retransmits $2_P$ in slot 3 while SUtx remains idle, and SUrx successfully decodes $2_P$.
The successful decoding of $2_P$ triggers CD over the graph:
in fact,
the interference of $2_P$ is removed from the signal received in slot 2 (previously buffered by SUrx),
 thus recovering $1_S$; then, the interference of $1_S$ is removed from the signal received in slot 1,
 thus recovering $0_P$; finally, the interference of $0_P$ is removed from the signal received in slot 0,
 thus recovering $0_S$.
 The SU is thus able to recover all previously failed transmissions via CD, so that the CD potential $2$ is released.
The CD outcome is thus
 obtained by following the direction of the arrows in the CD graph,
starting from node $2_P$, which initiates it.
Correspondingly, at the beginning of slot $4$ we obtain
$\mathbf A_{P\rightarrow S,4}=\mathbf 0_{3\times 3}$ ($3\times 3$ matrix of zeros)
and $\mathbf A_{S\rightarrow P,4}=\mathbf 0_{3\times 3}$,
corresponding to the untransmitted SU packets with labels $2_S$, $3_S$ and $4_S$,
and untransmitted PU packets with labels $1_P$, $3_P$ and $4_P$.

In this example, the CD graph has a linear structure.
However, the following treatment includes more general graph structures, as the one provided in Example \ref{example2} in Sec. \ref{sec:chdecprot}.
\qed
\end{example}
 
 \subsection{Reachability and root of the CD graph}
We define the \emph{reachability} between a pair of packets
 in the CD graph, and the \emph{root} of the CD graph as follows.
\begin{definition}
Consider the CD graph $\mathcal G_n=(\mathcal V_{n},\mathbf A_{n})$,
and two packets $l_1,l_2\in\mathcal V_{n}$.
  We say that $l_2$
is reachable from $l_1$
 (we write $l_1\Rightarrow l_2$) if there is a directed path
connecting the two packets in the graph, \emph{i.e.},
$\mathbf v_{X}^*(l_{1};\mathcal G_n)\mathbf e_Y(l_{2})^T=1$, $X,Y\in\{S,P\}$,
where $X=S$ (respectively, $Y=S$)  if $l_1$ ($l_2$) is a SU packet
and $X=P$ ($Y=P$) otherwise.
If $l_2$ is not reachable from $l_1$, then we write $l_1\not\Rightarrow l_2$.
More in general,  a set of packets $\tilde{\mathcal V}\subseteq\mathcal V_n$ is reachable from $l_1$ if each packet in $\tilde{\mathcal V}$ is reachable (we write $l_1\Rightarrow\tilde{\mathcal V}$).
\qed
\end{definition}
According to this definition, if $l_1\Rightarrow\tilde{\mathcal V}$
and $l_1$ is decoded, then all $l_1\in\tilde{\mathcal V}$ are recovered via CD.

\begin{definition}
We define the \emph{root} of the CD graph $\mathcal G_n$ as the SU packet with the highest CD potential, \emph{i.e.},
\begin{align}
\rho_S(\mathcal G_n)=\arg\max_{l_S\in\mathcal V_{S,n}}v_S(l_S;\mathcal G_n),
\end{align}
and its CD potential as $v_S^*(\mathcal G_n)=v_S(\rho_S(\mathcal G_n);\mathcal G_n)$.
This may not be unique; to resolve ties, we let $\rho_S(\mathcal G_n)$ be the one with the largest label value, \emph{i.e.}, the most recent SU packet with highest CD potential.
\qed
\end{definition}
\begin{remark}
The choice of the root in case of ties is arbitrary. As we will see, the selected root is occasionally retransmitted by SUtx as part of the chain decoding protocol.
Thus, our choice to select the most recent one favors the retransmission of SU packets with fresher information,
whereas older SU packets with possibly outdated information are dropped.
\end{remark}

According to this definition, we have that $v_S^*(\mathcal G_n)=1$ if and only if 
  $\rho_S(\mathcal G_n)=n$. In fact, $v_S(l_S;\mathcal G_n)=1$ implies that only $l_S$ is decoded;
since the SU packet with label $n_S$ has no edges in $\mathcal G_n$ (it has not been transmitted yet), its CD potential is $v_S(n_S;\mathcal G_n)=1$ (\emph{i.e.}, when decoded, it decodes only itself and no other packets in the CD graph),
and thus $n_S$ is the most recent packet with CD potential $1$.
 
\section{Performance metrics and optimization problem}
\label{perfmetropt}
We define the expected reward incurred by the PU when its internal state is $\mathbf s_{P,n}=\mathbf s_P$ and
the packet arrival in the $n$th slot takes value $b_{P,n}=b_P$,
as a function of the access decision of the PU ($a_{P,n}=a_P\in\{0,1\}$) and of the SU ($a_{S,n}=a_S\in\{0,1\}$),
and of the channel quality $\boldsymbol{\gamma}_{P,n}=\boldsymbol{\gamma}_P$,
as 
\begin{align}
\mathbf r_P(\mathbf s_P,b_P,\boldsymbol{\gamma}_P,a_P,a_S)
=
\left[
\begin{array}{c}
r_{P,1}(\mathbf s_P,b_P,\boldsymbol{\gamma}_P,a_P,a_S)\\
r_{P,2}(\mathbf s_P,b_P,\boldsymbol{\gamma}_P,a_P,a_S)\\
\vdots\\
r_{P,q}(\mathbf s_P,b_P,\boldsymbol{\gamma}_P,a_P,a_S)\\
\end{array}
\right]\in\mathbb R^q.
\end{align}
Note that $\mathbf r_P(\mathbf s_P,b_P,\boldsymbol{\gamma}_P,a_P,a_S)$ is a vector of rewards, in order to model multiple
performance metrics of interest. Moreover, negative rewards are used to model costs for the PU.
For instance, $\mathbf r_{P,i}(\mathbf s_P,b_P,\boldsymbol{\gamma}_P,a_P,a_S)=-P_pa_P$  models the power consumption incurred by the PU;
$\mathbf r_{P,i}(\mathbf s_P,b_P,\boldsymbol{\gamma}_P,a_P,a_S)=-\max\{q_{P}-o_{P}+b_{P}-Q_{\max},0\}$
models the number of packets dropped due to data buffer overflow, where
$o_{P}=\sigma(t_{P},d_{P},y_{P})$ from (\ref{sigma}), and $y_P$ is a function of $a_P$ and $\boldsymbol{\gamma}_P$;
$\mathbf r_{P,i}(\mathbf s_P,b_P,\boldsymbol{\gamma}_P,a_P,a_S)=a_P\chi\left(\boldsymbol{\gamma}_P\in\Gamma_P(a_{S})\right)
$ models the instantaneous throughput achieved by the PU;
$\mathbf r_{P,i}(\mathbf s_P,b_P,\boldsymbol{\gamma}_P,a_P,a_S)=-q_P$ models the queuing delay experienced by the PU packets.
Importantly, the PU reward function $\mathbf r_P(\mathbf s_P,b_P,\boldsymbol{\gamma}_P,a_P,a_S)$ 
is independent of the specific packet transmitted by the SU (\emph{i.e.}, it is independent of the SU label $l_{S,n}$),
but does depend on the SU access decision $a_S\in\{0,1\}$.
This is a practical assumption, since the PU is oblivious to the SU in our setting.

We define the average reward of the PU, under the SU access and labeling policies $\mu_S=(\mu_{S,0},\mu_{S,1},\mu_{S,2},\dots)$ and $\lambda_S=(\lambda_{S,0},\lambda_{S,1},\lambda_{S,2},\dots)$,
over a time horizon of length $N$, as 
\begin{align}
\label{perfPU}
\bar{\mathbf R}_P^N(\mu_S,\lambda_S)=\frac{1}{N}\mathbb E\left[\left.\sum_{n=0}^{N-1}
\mathbf r_P(\mathbf s_{P,n},b_{P,n},\boldsymbol{\gamma}_P(n),a_{P,n},a_{S,n})\right|
\mathbf s_{P,0}=(0,0,0)
\right],
\end{align}
where the internal PU state follows the dynamics  $\mathbf s_{P,n+1}=\phi(\mathbf s_{P,n},b_{P,n},y_{P,n})$ as in (\ref{internalstate}), and the expectation is with respect to the SNR process $\{(\boldsymbol{\gamma}_P(n),\boldsymbol{\gamma}_S(n)),\ n=0,1,\dots,N-1\}$,
the decision of the SU to transmit or remain idle, drawn according to policy $\mu_{S,n}(y_{P,0}^{n-1},y_{S,0}^{n-1},l_{S,0}^{n-1})$,
the SU labeling sequence, drawn according to policy $\lambda_{S,n}(\cdot|y_{P,0}^{n-1},y_{S,0}^{n-1},l_{S,0}^{n-1})$,
and the PU access decision, drawn according to $\mu_P(\mathbf s_{P,n})$.

Similarly, we define the average throughput of the SU,
over a time horizon of length $N$, as 
\begin{align}
\label{perfSU}
\bar T_S^N(\mu_S,\lambda_S)=\frac{1}{N}\mathbb E\left[\left.\sum_{n=0}^{N-1}
r_{S,n}\right|
\mathbf s_{P,0}=(0,0,0)
\right],
\end{align}
where $r_{S,n}$ is the instantaneous expected throughput, defined in Sec. \ref{secdemecoioni}.

In this paper, we focus on the average long-term performance $N\to\infty$, so that (\ref{perfPU}) and (\ref{perfSU}) become
\begin{align}
&\bar{\mathbf R}_P(\mu_S,\lambda_S)\triangleq\lim\inf_{N\to\infty}\bar{\mathbf R}_P^N(\mu_S,\lambda_S),
\nonumber\\&
\bar T_S(\mu_S,\lambda_S)\triangleq\lim\inf_{N\to\infty}\bar T_S^N(\mu_S,\lambda_S).
\end{align}

The goal of the SU is to define a secondary access policy $\mu_S$,
which determines \emph{whether} the SU should access the channel or remain idle at any given time,
 and a labeling policy $\lambda_S$,
which determines \emph{what} the SU should transmit (new data packet
or retransmission of a specific previously failed and buffered SU packet),
so as to maximize the average long-term SU throughput $\bar T_S(\mu_S,\lambda_S)$,
subject to a constraint on the 
minimum average long-term reward $\bar{\mathbf R}_P(\mu_S,\lambda_S)$ incurred by the PU,
\emph{i.e.},
\begin{align}\label{opt}
\mathbf{P1:}\ (\mu_S^*,\lambda_S^*)=\arg\max_{\mu_S,\lambda_S} \bar T_S(\mu_S,\lambda_S)\ \text{s.t.\ }
\bar{\mathbf R}_P(\mu_S,\lambda_S)
\geq
\bar{\mathbf R}_{P,\min}.
\end{align}
Herein, we assume that the reward for the PU is maximized if the SU remains idle, \emph{i.e.}, 
letting $\mu_S=\emptyset$ be the idle SU policy $\mu_{S,n}(\cdot)=0,\ \forall n$,
we have that
$\bar{\mathbf R}_P(\mu_S,\lambda_S)\leq\bar{\mathbf R}_P(\emptyset,\lambda_S),\ \forall \mu_S$.
Then, the optimization problem (\ref{opt}) is feasible if and only if $\bar{\mathbf R}_P(\emptyset,\lambda_S)
\geq
\bar{\mathbf R}_{P,\min}$.

\begin{remark}
Importantly, the average long-term performance for the PU, $\bar{\mathbf R}_P(\mu_S,\lambda_S)$, is a function of the policy implemented by the SU only through
the access scheme $\mu_S$, since the instantaneous expected reward $\mathbf r_P(\mathbf s_P,b_P,\boldsymbol{\gamma}_P,a_P,a_S)$ is independent of the SU packet label $l_S$.
 Therefore, if two labeling policies $\lambda_S^\prime$ and $\lambda_S^{\prime\prime}$ generate the same access sequence
$\{a_{S,n},n\geq 0\}$, the performance for the PU will be the same, \emph{i.e.}, $\bar{\mathbf R}_P(\mu_S,\lambda_S^\prime)=\bar{\mathbf R}_P(\mu_S,\lambda_S^{\prime\prime})$.
\end{remark}

Note that the state space of the system 
may be infinitely large,
since the CD graph may grow arbitrarily
large, and the optimal policy may depend on the specific CD graph available in each slot,
thus challenging the numerical optimization of {\bf P1}.
In the next section, we present the CD protocol and prove its optimality.
Such protocol specifies, at any given time, whether
the SU should transmit a new data packet
or perform a retransmission of a \emph{specific} SU packet in the CD graph, and thus explicitly characterizes the labeling policy $\lambda_S$ of the SU.
It is based on four basic rules,
stated in Sec. \ref{sec:chdecprot}.
In Sec. \ref{compactstsp}, we will show that such optimal labeling policy makes it possible to define a compact state space representation
of the system,
which takes into account only some features of the CD graph in the decision process, rather than the complete structure of the CD graph.
This compact representation
 lends itself to an efficient optimization of the SU access policy $\mu_S^*$
via a Markov decision process formulation.

As a result, the SU access policy
and the CD protocol are decoupled:
the former specifies whether the SU should access the channel or remain idle,
depending on the state of the system in the compact state space representation;
the latter, should the SU decide to access the channel,
specifies which SU packet needs to be transmitted according to  four CD rules.

\section{Chain Decoding Protocol}\label{sec:chdecprot}
Let $\mathcal G_n$ be the CD graph at the beginning of slot $n$, and $l_{P,n}$ be the 
label of the PU packet transmitted in slot $n$.
Note  that the SU does not know whether the PU transmits or remains idle in slot $n$, due to the randomized PU
access policy. However, if a PU transmission occurs in slot $n$, then the SU pair knows the corresponding label $l_{P,n}$ (see Lemma \ref{lem1}),
\emph{i.e.}, whether PUtx is about to perform a new transmission (if $t_{P,n}=0$), or a retransmission (if $t_{P,n}>0$).

The CD protocol defines which packet the SU should transmit at any given time, in those slots where $a_{S,n}=1$, and is defined by the following four rules:
\begin{description}
 \item[{\bf R1})] If $\kappa_{P,n}(l_{P,n})=0$
  (the current PU packet is unknown by SUrx),
  $\rho_S(\mathcal G_n)\not\Rightarrow l_{P,n}$ and $l_{P,n}\not\Rightarrow\rho_S(\mathcal G_n)$,
  so that $l_{P,n}$ cannot be reached from the root $\rho_S(\mathcal G_n)$ in the CD graph, and vice versa,
 then $l_{S,n}=\rho_S(\mathcal G_n)$, \emph{i.e.}, the root of $\mathcal G_n$ is transmitted;\footnote{Note that this implies that, if $v_S^*(\mathcal G_n)=1$, then $l_{S,n}=\rho_S(\mathcal G_n)=n_S$, so that the SU transmits a new packet.}
\item[{\bf R2})]
If $\kappa_{P,n}(l_{P,n})=0$
  (the current PU packet is unknown by SUrx)
  and $\rho_S(\mathcal G_n)\Rightarrow l_{P,n}$,
   or $\kappa_{P,n}(l_{P,n})=0$ and $l_{P,n}\Rightarrow\rho_S(\mathcal G_n)$,
     so that $l_{P,n}$ can be reached from the root $\rho_S(\mathcal G_n)$ in the CD graph, or vice versa,
 then $l_{S,n}=n_S$ (a new SU packet is transmitted);
\item[{\bf R3})] 
If $\kappa_{P,n}(l_{P,n})=1$  (the current PU packet is known by SUrx),
 then $l_{S,n}=\rho_S(\mathcal G_n)$, \emph{i.e.}, the root of $\mathcal G_n$ is transmitted;
\item[{\bf R4})] 
upon starting a new ARQ cycle ($t_{P,n}=0$), 
the portion of the graph reachable from $\rho_S(\mathcal G_n)$ is retained;
whereas the remaining portion of the graph is discarded.
\end{description}
\begin{remark}
Note that, according to {\bf R4}, SU packets may be discarded at the end of the slot,
and thus reliability is not guaranteed.
However, reliability can still be achieved by higher layer protocols, \emph{i.e.}, 
by forcing a retransmission at the upper layer.
The lower levels of the protocol considered in this paper are oblivious to the retransmission process
enforced at the upper levels, and thus, this information is not exploited for IC.
\end{remark}

The aim of rule {\bf R1} is to connect the current PU packet with label $l_{P,n}$
to the graph, in order to build CD potential. In particular, if 
$l_{P,n}\leftrightarrow\rho_S(\mathcal G_n)$ or $l_{P,n}\rightarrow\rho_S(\mathcal G_n)$, then $l_{P,n}$ \emph{inherits} the CD potential of the root $\rho_S(\mathcal G_n)$;
on the other hand, if 
$l_{P,n}\leftrightarrow\rho_S(\mathcal G_n)$ or $l_{P,n}\leftarrow\rho_S(\mathcal G_n)$, then $l_{P,n}$ augments the
CD potential of  the root $\rho_S(\mathcal G_n)$, by making the nodes directly reachable from  $l_{P,n}$ reachable from $\rho_S(\mathcal G_n)$ as well.
The aim of rule {\bf R2} is to build the CD graph and increase the CD potential of $l_{P,n}$,
by connecting new SU packets to the current PU packet in the graph.
The aim of rule {\bf R3}  is to release the CD potential
and deliver secondary throughput, respectively,
by taking advantage of the knowledge of the current PU packet at SUrx.
The aim of rule {\bf R4} is to retain the portion of the graph with the
largest CD potential, while dismissing those packets which cannot be recovered via CD.

Notice that the four CD rules instruct SUtx to either transmit a new packet, with label $l_{S,n}=n_S$, 
or retransmit the root of the CD graph $\mathcal G_n$, with label $l_{S,n}=\rho_S(\mathcal G_n)$.
No other SU packets may be transmitted at any time. The intuition behind this result is that,
if the root is successfully decoded, then the highest CD potential is released, leading to the largest number of SU packets being decoded in the CD graph.
In contrast, if any other packet in the graph is retransmitted, a lower CD potential is released,
yielding lower transmission efficiency.
 We remark that, while the PU uses retransmissions as part of the ARQ mechanism to improve reliability,
the SU does not use retransmissions to improve reliability but to
build the CD graph
and release the CD potential,
in order to achieve the largest SU throughput possible.

\begin{figure}[t]
    \centering
    \subfigure[]
    {
\includegraphics[width=.22\linewidth,trim = 1mm 1mm 1mm 1mm,clip=true,frame]{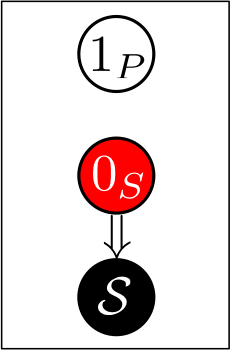}
}
    \subfigure[]
    {
\includegraphics[width=.22\linewidth,trim = 1mm 1mm 1mm 1mm,clip=true,frame]{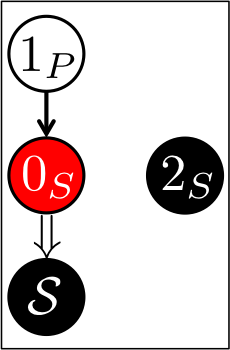}
}
\subfigure[]
    {
\includegraphics[width=.22\linewidth,trim = 1mm 1mm 1mm 1mm,clip=true,frame]{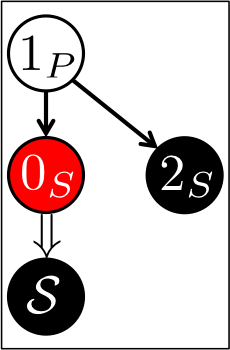}
}
\subfigure[]
    {
\includegraphics[width=.22\linewidth,trim = 1mm 1mm 1mm 1mm,clip=true,frame]{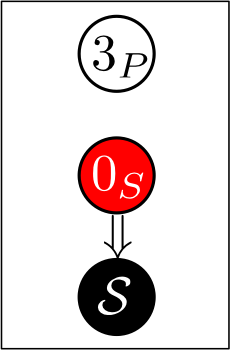}
}
\subfigure[]
    {
\includegraphics[width=.22\linewidth,trim = 1mm 1mm 1mm 1mm,clip=true,frame]{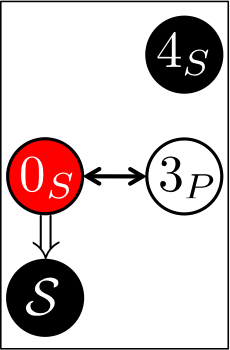}
}
    \subfigure[]
    {
\includegraphics[width=.22\linewidth,trim = 1mm 1mm 1mm 1mm,clip=true,frame]{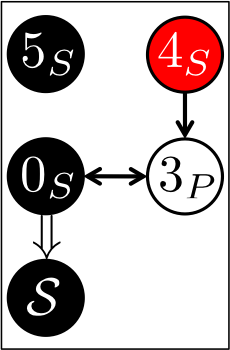}
}
    \subfigure[]
{
\includegraphics[width=.22\linewidth,trim = 1mm 1mm 1mm 1mm,clip=true,frame]{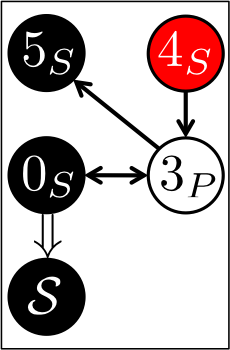}
}
\subfigure[]
    {
\includegraphics[width=.22\linewidth,trim = 1mm 1mm 1mm 1mm,clip=true,frame]{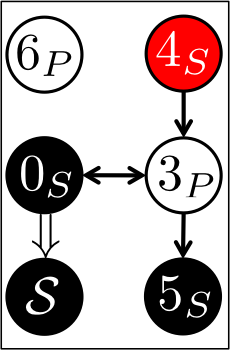}
}
\caption{Example of CD protocol and construction of CD graph. The white, black, and red nodes denote PU packets, SU packets,
and the root of the CD graph, respectively, numbered with the corresponding label.
\\
\small
(a) slot 1: new ARQ cycle; PU transmits $1_P$, SU transmits the root $0_S$ ({\bf R1})
\\
(b) slot 2: PU transmits $1_P$, SU transmits a new packet $2_S$ ({\bf R2})
\\
(c) end of slot 2
\\
(d) slot 3: new ARQ cycle, $1_P$ and $2_S$ are dropped from the graph ({\bf R4});
PU transmits $3_P$, SU transmits $0_S$ ({\bf R1})
\\
(e) slot 4: PU transmits $3_P$, SU transmits $4_S$ ({\bf R2})
\\
(f) slot 5: PU transmits $3_P$, SU transmits $5_S$ ({\bf R2})
\\
(g) end of slot 5
\\ (h) slot 6: new ARQ cycle; PU transmits $6_P$, SU transmits the root $4_S$ ({\bf R1})
}
\label{fig:example}
\end{figure}

Whether $l_{S,n}=n_S$ or $l_{S,n}=\rho_S(\mathcal G_n)$ is a function of 
 $\kappa_{P,n}(l_{P,n})\in\{0,1\}$,
 of $\chi(\rho_S(\mathcal G_n)\Rightarrow l_{P,n})$ and of $\chi(l_{P,n}\Rightarrow\rho_S(\mathcal G_n))$,
 \emph{i.e.}, whether or not the current PU packet is known by SUrx, and whether or not the current PU packet is reachable from the root of the CD graph, and vice versa.
Therefore,
 the labeling policy $\lambda_S$ can be expressed as
$\lambda_S(\kappa_{P,n}(l_{P,n}),\chi(\rho_S(\mathcal G_n)\Rightarrow l_{P,n}),\chi(l_{P,n}\Rightarrow\rho_S(\mathcal G_n)))$, rather than $\lambda_{S,n}(\cdot|y_{P,0}^{n-1},y_{S,0}^{n-1},a_{S,0}^{n-1},l_{S,0}^{n-1})$.

We now describe the application of these rules with the following example,
depicted in Fig.~\ref{fig:example}.
\begin{example}\label{example2}
 The structure of the CD graph at the beginning of the ARQ cycle, in slot $1$, is depicted 
in Fig.~\ref{fig:example}.a, where 
 $0_S$ is the root of the graph and $\mathcal S$ is the set of SU packets
  reachable from the root $0_S$, so that $0_S\Rightarrow\mathcal S$ and  $v_S(0_S;\mathcal G_1)=1+|\mathcal S|$.
 The current PU packet $1_P$ is unknown by SUrx and is not connected to the graph,
hence, according to rule {\bf R1}, the SU retransmits the root of the graph $0_S$ in slot $1$.
Assume that the decoding outcome at SUrx is such that $1_P\rightarrow 0_S$.
$1_P$ thus becomes
connected to the CD graph, as in Fig.~\ref{fig:example}.b,
and its CD potential is inherited by $0_S$, so that $v_P(1_P;\mathcal G_2)=v_S(0_S;\mathcal G_1)$.
 Hence, according to rule {\bf R2},
 in the next slot $2$
the SU transmits a new data packet $2_S$.
PUtx fails its transmission in slot $1$, hence it retransmits $1_P$ in slot $2$.
 Assume that 
the decoding outcome is such that $1_P\rightarrow 2_S$,
so that  $2_S$ becomes connected to the
CD graph, as depicted in Fig.~\ref{fig:example}.c.
Assume also that PUrx successfully decodes $1_P$, so that a new ARQ cycle begins in slot $3$.
Note that, at the end of slot $2$, $0_S$ has
 the highest CD potential (Fig.~\ref{fig:example}.c).
In fact, by initiating the CD process from $0_S$, all the SU packets
in $\mathcal S$ are recovered; on the other hand, no CD can be initiated from $2_S$.
Therefore, applying rule {\bf R4}, nodes $1_P$ and $2_S$ are trimmed from the CD graph, whose structure in the next slot $3$ is as depicted 
 in Fig.~\ref{fig:example}.d. 
  In fact, $1_P$ is no longer retransmitted by PUtx, and thus cannot be decoded by SUrx in the future, and $2_S$ cannot initiate the CD process, since it is a leaf in the graph.
In slot $3$,
SU transmits $0_S$ and PU transmits $3_P$, according to rule {\bf R1}.
Assume that $0_S\leftrightarrow 3_P$.
Then, the structure of the CD graph in slot $4$ is as depicted
in Fig.~\ref{fig:example}.e.
Now, $3_P$ is connected to the root of the graph, hence, according to rule {\bf R2}
and assuming a PU retransmission is requested,
the SU transmits a new packet $4_S$ and PUtx retransmits $3_P$.
Assume that $4_S\rightarrow 3_P$.
Then, the structure of the CD graph at the
beginning of slot $5$ is as depicted in Fig.~\ref{fig:example}.f.
According to rule {\bf R2}, in slot $5$ the SU transmits $5_S$ and the PU retransmits $3_P$.
Assume that $3_P\rightarrow 5_S$ and the PU successfully decodes $3_P$.
The structure of the CD graph at the end of slot $5$ is depicted in
Fig.~\ref{fig:example}.g.
Note that, at this point, the SU packet with the highest CD
potential is $4_S$. In fact, if the CD process is initiated from $4_S$,
in sequence,
 $3_P$, $5_S$, $0_S$ and the set of SU packets $\mathcal S$ are decoded,
and thus its CD
potential is $v_S(4_S;\mathcal G_n)=|\mathcal S|+3$.
In contrast, if the CD process were initiated from $0_S$, then
only the SU packets in $\mathcal S$ and $5_S$ would be decoded, and thus its CD potential is $v_S(0_S;\mathcal G_n)=|\mathcal S|+2$.
Therefore, according to rule {\bf R4}, in the new ARQ round
$4_S$ becomes the root of the CD graph, as depicted in Fig.~\ref{fig:example}.h.
\qed
\end{example}

The following theorem establishes the optimality of the CD protocol.
\begin{thm}
\label{CD}
The CD protocol defines one optimal labeling policy $\lambda_S^*$ solving the optimization problem $\mathbf{P1}$ under any SU access policy $\mu_S$.
\end{thm}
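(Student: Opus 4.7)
The plan is to exploit the decoupling already noted in the paper: since $\mathbf r_P(\mathbf s_P,b_P,\boldsymbol{\gamma}_P,a_P,a_S)$ is independent of the SU label $l_S$, the average PU reward $\bar{\mathbf R}_P(\mu_S,\lambda_S)$ depends on $\lambda_S$ only through the access sequence $\{a_{S,n}\}$, which is determined by $\mu_S$. Hence, for any fixed feasible $\mu_S$, the PU constraint in $\mathbf{P1}$ is insensitive to the labeling policy, and optimality of $\lambda_S^*$ reduces to showing that the CD protocol maximizes $\bar T_S(\mu_S,\lambda_S)$ for every $\mu_S$.

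I would then prove this by a sample-path coupling and an exchange argument. Fix the SNR realizations, the PU access and arrival realizations, and the SU access realizations $\{a_{S,n}\}$; under this coupling both the CD protocol and any alternative labeling policy $\lambda_S'$ induce the same access sequence and differ only in which packet is sent when $a_{S,n}=1$. Consider the first slot $n^\star$ at which $\lambda_S'$ disagrees with one of the four rules; I would show that replacing $\lambda_S'$ by the CD choice at slot $n^\star$ (and continuing with CD thereafter) yields at least as many eventually decoded SU packets, on this sample path. Each rule handles a distinct configuration. For \textbf{R3}, when $\kappa_{P,n}(l_{P,n})=1$ the channel is effectively interference-free, so every $l_S$ has identical decoding probability $\delta_{sp}+\delta_s+\upsilon_s+\upsilon_{sp}$; by definition $\rho_S(\mathcal G_n)$ attains $\max_{l_S} v_S(l_S;\mathcal G_n)$, hence releases the largest chain on success. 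For \textbf{R1}, retransmitting the root maximally grows the root's CD potential: if $l_{P,n}\to\rho_S(\mathcal G_n)$ then the root inherits $l_{P,n}$'s potential, while if $\rho_S(\mathcal G_n)\to l_{P,n}$ then every node reachable from $l_{P,n}$ becomes reachable from $\rho_S$; any other label produces at most the same graph enlargement, and usually less. For \textbf{R2}, with $l_{P,n}$ already reachable from/to $\rho_S$, retransmitting the root would only add a redundant parallel edge, whereas sending a fresh $n_S$ attaches a strictly new node to the chain containing $\rho_S$ through $l_{P,n}$, strictly non-decreasing the CD potential. For \textbf{R4}, I would show that at the start of a new ARQ cycle any node not on the component reachable from $\rho_S(\mathcal G_n)$ is permanently undecodable: the old PU packet is dropped and never retransmitted, so no future edge can attach to the discarded component, and the only way to decode a node in the discarded component would be through edges that, by construction, are absent.

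The main technical obstacle is closing the exchange step, because after slot $n^\star$ the CD policy and $\lambda_S'$ face different CD graphs and hence different future action spaces. The key intermediate lemma I would establish is a monotonicity of future SU throughput in the CD graph: if one graph contains all the edges of another on a common vertex set and has a root of at least as large CD potential, then there exists a coupling under which the CD continuation from the larger graph dominates any continuation from the smaller one pathwise. With this lemma, the exchange argument closes by induction on $n^\star$: the CD action at $n^\star$ produces a graph that dominates the one induced by $\lambda_S'$, and the CD tail thereafter dominates any tail of $\lambda_S'$. Averaging over the coupled randomness and letting $N\to\infty$ gives $\bar T_S(\mu_S,\lambda_S^*)\geq\bar T_S(\mu_S,\lambda_S')$ for every $\lambda_S'$, proving the theorem.
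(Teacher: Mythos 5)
Your opening observation (that $\bar{\mathbf R}_P$ is insensitive to the labeling because $\mathbf r_P$ does not depend on $l_S$, so the problem reduces to maximizing $\bar T_S$ for each fixed $\mu_S$) matches the paper exactly, and fixing the sample path of $(\mathcal A_P,\mathcal A_S,\mathcal L_P,\mathcal Y_S)$ is also how the paper begins. From there, however, you take a genuinely different route: an interchange argument at the first disagreement, closed by a graph-monotonicity lemma. The paper instead proves a labeling-independent upper bound on the sample-path throughput (valid for \emph{every} $\mathcal L_S$, built from necessary conditions for decodability of PU and SU packets), then a matching lower bound for the CD-generated sequence, and shows the two coincide as $N\to\infty$. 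This sandwich structure is what lets the paper sidestep the difficulties your route runs into.

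The gaps in your proposal are concrete. First, the pathwise domination you assert at the exchange step is false at finite times: in the \textbf{R2} configuration with $l_{P,n}\Rightarrow\rho_S(\mathcal G_n)$ and outcome $y_{S,n}=2$, retransmitting the root decodes $v_{S,n}$ packets immediately, while the CD choice (a fresh packet) decodes only one and leaves the root's potential unrealized; CD only catches up because the root is retransmitted infinitely often and its potential is released with probability one within an ARQ cycle satisfying a condition like the paper's $Q_k=1$. That eventual-release argument is the substantive content of the result and cannot be waved through as "at least as many eventually decoded packets on this sample path" --- on paths where the potential is never released the claim fails, and at every finite horizon CD is strictly behind. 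Second, your monotonicity lemma is stated in terms of edge containment on a common vertex set, but the graphs produced by CD and by an alternative label are not nested: they differ in \emph{which} vertex acquires the new edge (e.g., under \textbf{R1} with $y_{S,n}=7$, CD attaches $l_{P,n}$ to the root while the alternative attaches it to an isolated fresh packet --- same edge count, same $M_{S,n+1}+v_{S,n+1}$, very different future value). The correct domination order must be on a reduced state such as $\bigl(M_{S,n}+v_{S,n},\Phi_n,b_{S,n}\bigr)$, and verifying that the CD action dominates under every coupled channel outcome, and that domination is preserved by the dynamics, is precisely the work the paper does via the recursion of Lemma 5 --- it is not supplied here. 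Third, the induction on the first disagreement produces infinitely many exchanges when $\lambda_S'$ deviates infinitely often, and passing to the limit in the $\liminf$ time-average criterion needs justification. Each of these can in principle be repaired, but as written the core step of the proof is asserted rather than proved, and its literal statement is incorrect.
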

\begin{proof}
See Appendix \ref{proofofCD}.
\end{proof}

Since Theorem \ref{CD} proves the optimality of the CD rules, we can assume that $\lambda_S^*$ is generated according to these rules.
We denote the corresponding labeling policy  as $\lambda_S^{(CD)}$. Therefore, the original optimization problem $\mathbf{P1}$
in (\ref{opt}) can be restated as 
\begin{align}\label{opt2}
\mathbf{P2:}\ \mu_S^*=\arg\max_{\mu_S} \bar T_S(\mu_S,\lambda_S^{(CD)})\ \text{s.t.\ }
\bar{\mathbf R}_P(\mu_S,\lambda_S^{(CD)})
\geq
\bar{\mathbf R}_{P,\min},
\end{align}
so that only the SU access policy $\mu_S$ needs to be optimized.
Under the CD labeling policy, it can be proved that the SU throughput $\bar T_S(\mu_S,\lambda_S^{(CD)})$ 
achieves an upper bound $\bar T_S^{(up)}(\mu_S)$.
This is stated in the following theorem, which follows as a corollary of the proof of Theorem \ref{CD} (see Appendix \ref{proofofCD}).

The upper bound $\bar T_S^{(up)}(\mu_S)$ is composed of three components.
The first component, $\bar T_{S,j}^{(GA)}(\mu_S)$, is the genie-aided SU throughput, assuming that the PU packets are known in advance and their interference can be removed.
The second term (\ref{2ndterm}) is a throughput degradation term which accounts for the case when
SUrx cannot decode the PU packet within the PU retransmission cycle,
even in the genie-aided case where the packet is decoded after removing the interference from the SU packets;
in this case, such PU packet cannot be decoded, its interference cannot be removed, hence 
the only way for SUrx to decode SU packets is to treat the PU signal as noise, resulting in the impossibility to decode those SU packets such that $y_{S,n}\in\{5,7\}$.
The third term  (\ref{3rdterm}) is a throughput degradation term which accounts for the
case when the SU needs to retransmit the root of the CD graph, rather than transmitting new data packets;
with this last term, we guarantee that the root of the CD graph is counted only once in the throughput accrual.

\begin{thm}
\label{thm4}
Under the labeling policy defined by the CD rules, $\lambda_S^{(CD)}$, we have
\begin{align}
\bar T_S(\mu_S,\lambda_S^{(CD)})=
\bar T_{S}^{(up)}(\mu_S),
\end{align}
where 
\begin{align}
\label{1stterm}
&\bar T_S^{(up)}(\mu_S)
=
\lim\inf_{j\to\infty}
\left\{
\vphantom{-\mathbb E\left[\frac{1}{\nu_P(j+1)}\sum_{k=0}^{j}
\left(1-\kappa_{P,k}^{(GA)}\right)
\sum_{n=\nu_P(k)}^{\nu_P(k+1)-1}a_{P,n}a_{S,n}\chi(y_{S,n}\in\{5,7\})\right]
}
\bar T_{S,j}^{(GA)}(\mu_S)\right.
\\&\label{2ndterm}
-\mathbb E\left[\frac{1}{\nu_P(j+1)}\sum_{k=0}^{j}
\left(1-\kappa_{P,k}^{(GA)}\right)
\sum_{n=\nu_P(k)}^{\nu_P(k+1)-1}a_{P,n}a_{S,n}\chi(y_{S,n}\in\{5,7\})\right]
\\&\label{3rdterm}
-\mathbb E\left[\frac{1}{\nu_P(j+1)}\sum_{k=0}^{j}\left(1-\prod_{n=\nu_P(k)}^{\nu_P(k+1)-1}\left[1-a_{P,n}a_{S,n}\chi(y_{S,n}=7)\right]\right)
\right.
\\&\nonumber
\left.\left.\times
\prod_{n=\nu_P(k)}^{\nu_P(k+1)-1}\left(
\vphantom{\prod_{n=\nu_P(k)}^{\nu_P(k+1)-1}}
1
- a_{P,n}\chi(y_{S,n}\in\{1,3,6,7\})
+a_{P,n}a_{S,n}\chi(y_{S,n}=7)
\right)\right]\right\}.
\end{align}
Above, 
\begin{align}
\bar T_{S,j}^{(GA)}(\mu_S)&=
\mathbb E\left[\frac{1}{\nu_P(j+1)}\sum_{n=0}^{\nu_P(j+1)-1}a_{S,n}\chi(y_{S,n}\in\{1,2,5,7\})\right]
\nonumber\\&
=
\mathbb E\left[\frac{1}{\nu_P(j+1)}\sum_{n=0}^{\nu_P(j+1)-1}a_{S,n}\right]\left(\delta_{sp}+\delta_{s}+\upsilon_{s}+\upsilon_{sp}\right)
\end{align}
is the genie-aided (GA) throughput accrued over the first $j+1$ PU ARQ cycles, assuming SUrx knows the PU packet in advance and thus removes its interference,
and
\begin{align}
\label{kappaga2}
&\kappa_{P,k}^{(GA)}
\triangleq
1-\prod_{n=\nu_P(k)}^{\nu_P(k+1)-1}\left[1-a_{P,n}\chi(y_{S,n}\in\{1,3,6,7\})\right]
    \end{align}
    is the genie-aided (GA)
decoding outcome  at SUrx for the PU packet transmitted in the $k$th ARQ cycle,
 assuming that the SU packets are known in advance and thus their interference can be removed.
\end{thm}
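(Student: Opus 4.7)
The plan is to leverage the optimality of the CD labeling policy established by Theorem \ref{CD}, and compute the throughput $\bar T_S(\mu_S,\lambda_S^{(CD)})$ cycle-by-cycle across the primary ARQ windows $[\nu_P(k),\nu_P(k+1))$, which naturally partition the time axis because rule \textbf{R4} prunes the CD graph at the start of each new cycle. Writing $\bar T_S^N$ as a sum over cycles and then rearranging each cycle's contribution into a genie-aided throughput minus the two irreducible losses in (\ref{2ndterm}) and (\ref{3rdterm}) produces the stated decomposition once one takes $N\to\infty$ via a liminf.

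First, I would derive the genie-aided upper bound. If SUrx had non-causal knowledge of all PU packets, then in every slot $n$ with $a_{S,n}=1$ its decoding succeeds iff $y_{S,n}\in\{1,2,5,7\}$, since removing PU interference makes these exactly the regions in which $R_s<C(\gamma_s)$. Summing the indicators over a horizon of $\nu_P(j{+}1)$ slots and normalizing yields $\bar T_{S,j}^{(GA)}(\mu_S)$, which by the i.i.d.\ SNR assumption equals $\mathbb{E}[\tfrac{1}{\nu_P(j+1)}\sum a_{S,n}](\delta_{sp}+\delta_s+\upsilon_s+\upsilon_{sp})$. Every realizable protocol, CD included, is bounded above by this quantity.

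Second, I would characterize the unavoidable subtraction (\ref{2ndterm}). By definition (\ref{kappaga2}), $\kappa_{P,k}^{(GA)}=0$ iff within the $k$th cycle no slot ever produces $y_{S,n}\in\{1,3,6,7\}$, which is the weakest condition under which $l_{P,\nu_P(k)}$ can be recovered even after stripping any SU packet. When $\kappa_{P,k}^{(GA)}=0$ the CD graph can never acquire the PU node of that cycle: no slot inside the cycle, and no future slot (since by the ARQ model the packet is not retransmitted again), will decode $l_{P,\nu_P(k)}$. Hence every SU slot in that cycle with $y_{S,n}\in\{5,7\}$—which requires the PU packet to be known in order to decode the SU packet—is permanently lost, and the corresponding indicator must be removed from the GA throughput. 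This is precisely the expression (\ref{2ndterm}).

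Third, I would account for the root-retransmission overhead (\ref{3rdterm}). Rules \textbf{R1} and \textbf{R3} force SUtx to replay $\rho_S(\mathcal G_n)$ instead of sending a fresh packet in certain slots, while \textbf{R2} permits fresh transmissions once $l_{P,n}$ has been linked to the root. A careful walk through the four rules shows that within a cycle the root transmission can translate to decoded throughput at most once, and only when the PU packet of the cycle is decoded (event $y_{S,n}\in\{1,3,6,7\}$ somewhere in the cycle) and the root actually gets connected (event $y_{S,n}=7$ with $a_{P,n}a_{S,n}=1$ somewhere in the cycle). The outer factor $1-\prod_n[1-a_{P,n}a_{S,n}\chi(y_{S,n}=7)]$ is the indicator that such a linking event occurred; the inner product then detects whether the cycle keeps the root absorbed into the chain without additional intermediate release, and subtracting this quantity removes the double-counting between "fresh packet" and "root retransmission" in the GA sum. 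Equality in (\ref{1stterm})--(\ref{3rdterm}) follows because, by Theorem \ref{CD}, CD achieves every SU packet that is not ruled out by these two irreducible losses; I would formalize this by induction on the cycle index $k$, using the cycle-by-cycle independence enforced by \textbf{R4} and the reachability properties of $\rho_S(\mathcal G_n)$ used to prove Theorem \ref{CD}.

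The main obstacle is the bookkeeping behind (\ref{3rdterm}): correctly identifying, inside a single ARQ cycle, the joint event that makes the root retransmission countable exactly once. This requires arguing that the product over the cycle in (\ref{3rdterm}) is identically $1$ when and only when the CD graph collapses all reachable SU nodes into a single chain rooted at the PU packet, and that earlier partial releases of CD potential (through slots with $y_{S,n}\in\{1,3,6,7\}$) neutralize the penalty term. Verifying this cycle-wise combinatorial identity against the CD graph dynamics described in Sec.~\ref{secdemecoioni} is where the argument is most delicate; the rest is a liminf passage using the finiteness of $\nu_P(j+1)/j$ guaranteed by $D_{\max}<\infty$.
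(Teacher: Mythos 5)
Your reading of the three terms is essentially the paper's, but two steps in your plan would fail as stated. First, rule \textbf{R4} does \emph{not} make the ARQ cycles independent: it discards only the part of the CD graph \emph{not} reachable from the root $\rho_S(\mathcal G_n)$, so the root and its accumulated CD potential persist across cycle boundaries (this is the whole point of the scheme; see Example \ref{example2}, where $0_S$ and $\mathcal S$ survive several cycles). Consequently a cycle-by-cycle induction that equates each cycle's realized throughput with that cycle's contribution to the upper bound is false at finite horizons: the potential built in cycle $k$ may only be converted into decoded packets many cycles later, or not at all by time $N$. The paper handles this by tracking the invariant $M_{S,n}+v_{S,n}$ (packets decoded so far plus current root potential, Lemmas \ref{lemmamecoioni} and \ref{lemn}) and by showing that the lower bound of Theorem \ref{thm2} matches the upper bound of Theorem \ref{thm1} only at the ends of cycles where the event $Q_k=1$ in (\ref{Qk}) forces the potential to collapse to $1$; the liminf passage then needs $Q_k=1$ to recur infinitely often so that the unreleased potential is asymptotically negligible. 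None of this is in your sketch, and finiteness of $D_{\max}$ does not substitute for it.

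Second, your equality step --- ``by Theorem \ref{CD}, CD achieves every SU packet that is not ruled out by these two irreducible losses'' --- is not implied by Theorem \ref{CD}, which asserts only that the CD labeling is optimal among labeling policies, not that the optimum equals the genie-aided bound minus the two loss terms. Establishing the closed form requires both directions: the converse (Theorem \ref{thm1}, which obtains (\ref{2ndterm}) from the necessary condition $\kappa_{P}(\nu_P(k))\leq\kappa_{P}^{(GA)}(\nu_P(k))$ and obtains (\ref{3rdterm}) by charging one unit to each cycle in which the PU packet is reachable \emph{only} through a mutual-decodability edge $y_{S,n}=7$ --- note the penalty applies precisely when the PU packet is never directly decoded within the cycle, which is close to the opposite of your characterization) and the achievability (Theorem \ref{thm2}). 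The paper's proof of Theorem \ref{thm4} is exactly this sandwich, carried out under Assumption \ref{assumptionalltx} (both users always transmit) and then mapped back to general $a_{P,n},a_{S,n}\in\{0,1\}$ via Lemma \ref{generateseq} and Procedure \ref{lemreplace} before taking expectations; this last mapping step is also absent from your argument.
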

\begin{proof}
From the proof of Theorem \ref{CD} in Appendix \ref{proofofCD},
under Assumption \ref{assumptionalltx}, \emph{i.e.}, assuming that both PUtx and SUtx always transmit,
the CD labeling policy asymptotically achieves the upper bound 
$T_{S,\infty}^{(up)}\triangleq \lim_{j\to\infty}
T_{S,j}^{(up)}$, as given by (\ref{ccc}),
where $ T_{S,j}^{(up)}$ is defined in (\ref{tsup}).
This result is defined under  Assumption \ref{assumptionalltx}.
In order to map it to the general case where either PUtx or SUtx may remain idle,
we  apply Procedure~\ref{lemreplace} to the expression of $T_{S,j}^{(up)}$ in (\ref{tsup}),
thus yielding the expression given by (\ref{1stterm}) after taking the expectation
with respect to the SU and PU access policies and the SNR process.
\end{proof}

Using Theorem \ref{thm4}, the optimization problem $\mathbf{P2}$ can then be expressed as
\begin{align}\label{opt3}
\mathbf{P3:}\ \mu_S^*=\arg\max_{\mu_S} \bar T_{S}^{(up)}(\mu_S)\ \text{s.t.\ }
\bar{\mathbf R}_P(\mu_S,\lambda_S^{(CD)})
\geq
\bar{\mathbf R}_{P,\min},
\end{align}
which is developed in the next section.

\section{Compact State Space representation}\label{compactstsp}
In this section, 
we investigate the solution of the optimization problem $\mathbf{P3}$, and we show that it yields a compact state space representation of the CD protocol,
and thus can be solved efficiently via dynamic programming.
To this end, in the following theorem we derive an alternative expression of the SU throughput under the CD labeling policy, $\bar T_{S}^{(up)}$.

This alternative expression expresses the SU throughput as a long-term time average of 
a \emph{virtual instantaneous throughput},
which not only counts the SU packets physically decoded in each slot, but also the CD potential, \emph{i.e.}, those SU packets that are reachable from the root of the CD graph.
Intuitively, since the root of the CD graph is transmitted infinitely often by SUtx as part of the CD rules, it will be decoded with probability one within finite time, thus releasing the full CD potential.
From the 
perspective of the SU throughput, there is no difference between counting such CD potential
as soon as it is created, rather than at the future time when the root is decoded.

As it will be seen in Theorem \ref{thm5}, the virtual instantaneous throughput is expressed as the sum of five quantities:
the genie-aided throughput, assuming that the interference from the PU packet can be removed (see (\ref{g}));
a throughput degradation term due to the fact that the current PU packet may be unknown and thus its interference cannot be removed  (see (\ref{g2}));
a throughput term due to the fact that,
if the PU packet is decoded or it becomes reachable from the root in the CD graph ($y_{S,n}\in\{1,3,6,7\}$), then those $b_{S,n}$ SU packets that
can be reached from the PU packet are \emph{virtually} decoded (see (\ref{g3}));
a throughput term due to the fact that, if the current PU packet is connected to the root of the CD graph as 
$l_{P,n}\leftrightarrow\rho_S(\mathcal G_n)$ ($\iota_{P,n}=\hat\kappa_{P,n}^{(GA)}=1$) and the PU packet is physically decoded, then the root is decoded as well  (see (\ref{g4}));
finally, a throughput term due to the fact that, if the current PU packet is connected to the root of the CD graph as 
$l_{P,n}\leftrightarrow\rho_S(\mathcal G_n)$ and the new transmissions in slot $n$ are such that 
$l_{S,n}\rightarrow l_{P,n}$, then $l_{S,n}$ becomes the new root of the graph and the previous root $\rho_S(\mathcal G_n)$ is \emph{virtually} decoded  (see (\ref{g5})).

\begin{thm}
\label{thm5}
\begin{align}
\bar T_{S}^{(up)}=\lim\inf_{N\to\infty}\mathbb E\left[\frac{1}{N}\sum_{n=0}^{N-1}g(a_{S,n},a_{P,n},y_{S,n},\hat\kappa_{P,n}^{(GA)},\iota_{P,n},b_{S,n})\right],
\end{align}
where $g(\cdot)$ is the \emph{virtual instantaneous throughput}  for the SU pair, defined as
\begin{align}
\label{g}
&g(a_{S,n},a_{P,n},y_{S,n},\hat\kappa_{P,n}^{(GA)},\iota_{P,n},b_{S,n})
=
a_{S,n}\chi(y_{S,n}\in\{1,2,5,7\})
\\&
\label{g2}
- \left(1-\hat\kappa_{P,n}^{(GA)}\right)a_{S,n}a_{P,n}\chi(y_{S,n}\in\{5,7\})
\\&
\label{g3}
+a_{P,n}\chi(y_{S,n}\in\{1,3,6,7\})b_{S,n}
\\&
\label{g4}
+\iota_{P,n}\hat\kappa_{P,n}^{(GA)}a_{P,n}[(1-a_{S,n})\chi(y_{S,n}\in\{1,3,6,7\})+a_{S,n}\chi(y_{S,n}\in\{1,3\})]
\\&
\label{g5}
+\iota_{P,n}\hat\kappa_{P,n}^{(GA)}a_{P,n}a_{S,n}\chi(y_{S,n}=6),
\end{align}
where we have defined
\begin{align}
\label{kappapnga}
&\hat\kappa_{P,n}^{(GA)}\triangleq 1-\prod_{m=\nu_P(k)}^{n-1}\left[1-a_{P,m}\chi(y_{S,m}\in\{1,3,6,7\})\right],
\\&
\label{iota}
\iota_{P,n}\triangleq
\prod_{m=\nu_P(k)}^{n-1}\left[1-a_{P,m}\chi(y_{S,m}\in\{1,3,6,7\})+a_{P,m}a_{S,m}\chi(y_{S,m}=7)\right],
\\&
\label{bsn}
b_{S,n}\triangleq\left(1-\hat\kappa_{P,n}^{(GA)}\right)\sum_{m=\nu_P(k)}^{n-1}a_{P,m}a_{S,m}\chi(y_{S,m}=5).
    \end{align}
\end{thm}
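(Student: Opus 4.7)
My plan is to establish Theorem~\ref{thm5} by a cycle-by-cycle rewrite of the expression in Theorem~\ref{thm4}. For each PU ARQ cycle $[\nu_P(k),\nu_P(k+1)-1]$ I will show that the sum $\sum_{n=\nu_P(k)}^{\nu_P(k+1)-1} g$ of the virtual instantaneous throughput equals the net throughput credited to that cycle by (\ref{1stterm})--(\ref{3rdterm}); summing over $k$ and dividing by $\nu_P(j+1)$ then reproduces $\bar T_S^{(up)}$, and since $\nu_P(j+1)$ and the slot horizon differ by at most one PU ARQ cycle length, the $\liminf$ transfers freely from an index over cycles to one over slots. Component (\ref{g}) coincides term-by-term with the integrand of $\bar T_{S,j}^{(GA)}(\mu_S)$ in (\ref{1stterm}), so the work reduces to matching the slot sums of (\ref{g2})--(\ref{g5}) with the two subtracted terms.

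For (\ref{2ndterm}), I would unfold the per-cycle indicator $(1-\kappa_{P,k}^{(GA)})$ into the running prefix indicator $(1-\hat\kappa_{P,n}^{(GA)})$ from (\ref{kappapnga}): the two agree identically on cycles with $\kappa_{P,k}^{(GA)}=0$, while on cycles with $\kappa_{P,k}^{(GA)}=1$ they disagree only at the first slot $m^*$ that triggers GA decoding, and only when $y_{S,m^*}=7$; that residual is absorbed by the $\chi(y_{S,n}=7)$ pieces of (\ref{g4})--(\ref{g5}). For (\ref{3rdterm}), I would apply the telescoping identity $1-\prod_n(1-X_n)=\sum_n X_n\prod_{m<n}(1-X_m)$ with $X_n=a_{P,n}a_{S,n}\chi(y_{S,n}=7)$ to the outer bracket; the truncated product, combined with the inner product of (\ref{3rdterm}), produces exactly $\iota_{P,n}$ of (\ref{iota}), while the surviving $a_{P,m}\chi(y_{S,m}\in\{1,3,6,7\})$ factors collapse onto $\hat\kappa_{P,n}^{(GA)}$. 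Expanding the resulting double sum case-by-case on $y_{S,n}$ and $(a_{P,n},a_{S,n})$ then matches (\ref{g3}) with the virtual decoding of the $b_{S,n}$ SU packets accumulated earlier via $y_{S,m}=5$ events; (\ref{g4}) with the virtual decoding of the root when the PU packet, already linked to the root, is itself decoded; and (\ref{g5}) with the root update triggered by $y_{S,n}=6$ that retires the former root to virtually decoded status.

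The main obstacle is the bookkeeping in this last step: every SU packet that chain decoding will eventually recover must be credited exactly once, with no omission and no double counting. I would organize the verification by inducting over slots within a cycle, proving that after slot $n$ the running sum $\sum_{m=\nu_P(k)}^{n}g$ equals the cycle credit that Theorem~\ref{thm4} assigns to the subset of cycle events already resolved by slot $n$. The induction step is a short case analysis over the seven values of $y_{S,n}$ together with $(a_{P,n},a_{S,n},\hat\kappa_{P,n}^{(GA)},\iota_{P,n})$, while the base case $n=\nu_P(k)$ is immediate because all running quantities start at $0$. The cycle-final tally then coincides with the cycle credit of Theorem~\ref{thm4}, closing the induction and yielding the claimed slot-indexed formula.
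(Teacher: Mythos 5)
Your plan is sound and would prove the theorem, but it takes a genuinely different route from the paper's. The paper does not match $g$ against the formula of Theorem \ref{thm4} directly: it first invokes Lemma \ref{lemn} to rewrite $\bar T_S^{(up)}$ as $\lim\inf_j\mathbb E\left[\left(M_{S,\nu_P(j+1)}+v_{S,\nu_P(j+1)}-1\right)/\nu_P(j+1)\right]$, where $M_{S,n}$ is the number of SU packets already decoded and $v_{S,n}$ is the CD potential of the current root, and then reads $g$ off as the per-slot increment $M_{S,n+1}+v_{S,n+1}-M_{S,n}-v_{S,n}$ supplied by the recursion (\ref{tobeagreed}) of Lemma \ref{lemmamecoioni}, mapped to general access decisions via Procedure \ref{lemreplace}. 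Your route replaces this semantic scaffolding with a purely algebraic telescoping of the cycle-level products in (\ref{1stterm})--(\ref{3rdterm}) into running prefix products; this is legitimate and self-contained given Theorem \ref{thm4}, and it sidesteps Procedure \ref{lemreplace} because Theorem \ref{thm4} is already stated for general $a_{P,n},a_{S,n}$. The within-cycle induction you propose would, in effect, re-derive term by term the truncated-cycle identity that the paper records as (\ref{fkakgj}), only without interpreting the running invariant as ``decoded packets plus root potential''; since both $g$ and the cycle credit depend only on outcomes inside the current ARQ cycle, your per-cycle matching is well posed, and the passage from the cycle-indexed to the slot-indexed $\lim\inf$ is no less rigorous than the paper's own appeal to (\ref{aaa}). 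What the paper's route buys is reuse (Lemmas \ref{lemn} and \ref{lemmamecoioni} are needed anyway for the lower bound in Appendix \ref{proofofthm2}) and the interpretation of $g$ that later motivates the compact state $(\Phi,b_S)$; what your route buys is independence from the CD-graph bookkeeping. One imprecision to fix when executing the plan: on a cycle with $\kappa_{P,k}^{(GA)}=1$, the per-cycle indicator in (\ref{2ndterm}) and the running indicator in (\ref{g2}) disagree not only at the triggering slot $m^*$ when $y_{S,m^*}=7$, but also at every earlier slot with $y_{S,m}=5$; those are exactly the slots accumulated in $b_{S,n}$ and re-credited by (\ref{g3}), which your plan does handle in the following step, but the discrepancy is larger than you state.
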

\begin{proof}
Using Lemma \ref{lemn} in the Appendix, we can express $\bar T_{S}^{(up)}$ as
\begin{align}
\bar T_{S}^{(up)}=\lim\inf_{j\to\infty}\mathbb E\left[\frac{M_{S,\nu_P(j+1)}+v_{S,\nu_P(j+1)}-1}{\nu_P(j+1)}\right],
\end{align}
where $M_{S,n}$, defined in (\ref{Msn}), is the number of SU packets successfully decoded up to the beginning of slot $n$,
and  $v_{S,n}$, defined in (\ref{vsn}),
is the CD potential of the root of the CD graph at the beginning of slot $n$. $M_{S,n}+v_{S,n}$ is defined recursively via (\ref{tobeagreed}) in Lemma \ref{lemmamecoioni},
so that we can interpret $M_{S,n+1}+v_{S,n+1}-M_{S,n}-v_{S,n}$ as the throughput reward accrued in slot $n$,
thus yielding the expression of $g(\cdot)$ in (\ref{g}) after
 applying Procedure~\ref{lemreplace} in order to map
 the expression of $M_{S,n+1}+v_{S,n+1}-M_{S,n}-v_{S,n}$, derived under   Assumption \ref{assumptionalltx},
 to the general case where either PUtx or SUtx may remain idle.
 \end{proof}

In Theorem \ref{thm5}, similarly to  (\ref{kappaga2}), $\hat\kappa_{P,n}^{(GA)}$
is the genie-aided (GA)
decoding outcome  at SUrx for the PU packet
$l_{P,n}=\nu_P(k)$ up to slot $n$, assuming that the SU packets are known in advance and thus their interference can be removed;
$\iota_{P,n}=1$
denotes the event that either the PU packet has not been decoded yet in slot $n$ in the genie-aided case, \emph{i.e.}, even after removing the interference from the SU packets (and thus $\hat\kappa_{P,n}^{(GA)}=0$),
or it is connected to the root of the CD graph as $l_{P,n}\leftrightarrow\rho_S(\mathcal G_n)$ (\emph{i.e.}, $\exists \nu_P(k)\leq m<n:a_{P,m}=1,a_{S,m}=1,y_{S,m}=7$);
finally, $b_{S,n}$
denotes the number of SU packets that can be decoded after removing the interference from the SU packet, under the assumption that
$\hat\kappa_{P,n}^{(GA)}=0$ (the PU packet is unknown in the genie-aided case).
Note that $1-\hat\kappa_{P,n}^{(GA)}\leq\iota_{P,n}$, and therefore, if $\hat\kappa_{P,n}^{(GA)}=0$, necessarily $\iota_{P,n}=1$.
It follows that three configurations are possible: 
$\hat\kappa_{P,n}^{(GA)}=0$, $\iota_{P,n}=1$ and $b_{S,n}\geq 0$;
$\hat\kappa_{P,n}^{(GA)}=1$, $\iota_{P,n}=1$ and $b_{S,n}=0$;
$\hat\kappa_{P,n}^{(GA)}=1$, $\iota_{P,n}=0$ and $b_{S,n}=0$.

As is apparent from the proof of Theorem \ref{thm5},
the reward function $g(\cdot)$ includes two quantities: 
the quantity $M_{S,n+1}-M_{S,n}$ representing the number of SU packets successfully decoded in slot $n$, due to a direct decoding operation or via CD;
and the quality $v_{S,n+1}-v_{S,n}$, representing the increase (or decrease, if negative) of the CD potential of the root of the CD graph.
Therefore, the inclusion of $v_{S,n+1}-v_{S,n}$ in the instantaneous throughput accrual presumes that the CD potential of the CD graph is released immediately, rather than
at the future time when the root is actually decoded.

Thus, Theorem \ref{thm5} defines a \emph{virtual system} where the CD potential is immediately released, \emph{i.e.},
all the SU and PU packets which are reachable from the root of the CD graph
are decoded \emph{virtually instantaneously} by SUrx,
rather than in the future slot when the actual successful decoding of the root
occurs.
In particular, the SU packets in the CD graph (except the root itself, which needs to be retransmitted)
contribute instantaneously
to the SU throughput accrual.
Moreover, if the current PU packet can be reached from the root of the graph (and thus $\hat\kappa_{P,n}^{(GA)}=1$),
it is virtually instantaneously decoded, hence, it is \emph{virtually known} by SUrx. Such virtual knowledge
can in turn be exploited in the following primary ARQ retransmission 
attempts to create a "clean" channel by using \emph{virtual FIC}.
As a result,
we can represent the state of the CD protocol 
by the tuple $(\Phi,b_S)$, corresponding to different configurations of $(\hat\kappa_{P,n}^{(GA)},\iota_{P,n},b_{S,n})$,
where
\begin{itemize}
 \item $\Phi$ is the \emph{virtual} knowledge of the current PU packet (with label $l_P$) at SUrx,
 and takes values $\Phi\in\{\stackrel{\leftrightarrow}{\K},\stackrel{\rightarrow}{\K},\U\}$,
 where "U" denotes \emph{$l_P$ virtually unknown} ($\hat\kappa_{P,n}^{(GA)}=0$) and  "K" denotes \emph{$l_P$ virtually known} ($\hat\kappa_{P,n}^{(GA)}=1$).
 The unidirectional or bidirectional arrow above "K" indicates the 
 type of edge connecting $l_P$ to the root of the CD graph.
 In particular, $\Phi=\stackrel{\rightarrow}{\K}$ indicates that
 $l_P$ is decodable after decoding the root of the CD graph, but the root is not decodable after decoding $l_P$, as in Fig. \ref{fig:Virtconf}.c, or that $l_P$ is known ($\kappa_{P,n}(l_{P})=1$);
 on the other hand, $\Phi=\stackrel{\leftrightarrow}{\K}$ indicates that
 $l_P$ and the root are mutually decodable 
 after decoding the other packet, \emph{i.e.}, $l_P\leftrightarrow \rho_S(\mathcal G_n)$, as in Fig. \ref{fig:Virtconf}.b.
Finally, $\Phi=\U$ indicates that $l_P$ is virtually unknown to SUrx, \emph{i.e.}, it is not connected to 
the CD graph or it has not been \emph{virtually} decoded by SUrx yet;
 note that $\Phi=\U$ includes the case where the root of the CD graph is decodable after decoding $l_P$, but $l_P$ is not decodable after decoding the root (otherwise, it would be virtually known and $\hat\kappa_{P,n}^{(GA)}=1$);
\item $b_S$ is the number of SU packets 
directly reachable from the current PU packet
in the CD graph but not \emph{virtually} decoded, since $l_P$ is virtually unknown; in particular, $b_S\in\mathbb N(0,R_{\max}-1)$, since each PU packet is transmitted at most $R_{\max}$ times.
Therefore, if the current PU packet is successfully (or virtually) decoded, all the $b_S$ SU packets are decoded as well.
Note that, if $\Phi=\stackrel{\leftrightarrow}{\K}$ or $\Phi=\stackrel{\rightarrow}{\K}$,
then the current PU packet  is virtually known by SUrx, hence all the SU packets reachable from it in the CD graph 
are virtually decoded, resulting in $b_S=0$.
\end{itemize}

\begin{figure}[t]
    \centering
        \subfigure[$(\U,b)$]
    {
\includegraphics[width=.22\linewidth,trim = 1mm 1mm 1mm 1mm,clip=true]{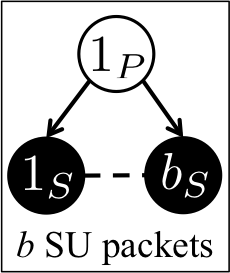}
}    
        \subfigure[$(\stackrel{\leftrightarrow}{\K},0)$]
    {
\includegraphics[width=.22\linewidth,trim = 1mm 1mm 1mm 1mm,clip=true]{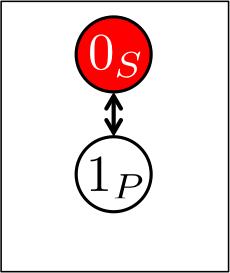}
}    
        \subfigure[$(\stackrel{\rightarrow}{\K},0)$]
    {
\includegraphics[width=.22\linewidth,trim = 1mm 1mm 1mm 1mm,clip=true]{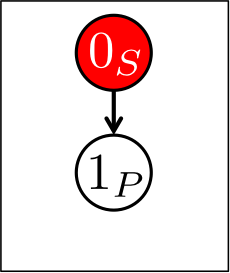}
}    
\caption{States of the virtual system}
\label{fig:Virtconf}
\end{figure}

In these different configurations, the expected virtual instantaneous throughput\footnote{We redefine the virtual instantaneous throughput
as $g(a_{S},a_{P},y_{S},\Phi,b)$, in order to reflect one of these possible configurations.} is as follows:
\begin{itemize}
\item If $(\Phi_n,b_{S,n})=(\U,b_{S,n})$, then
$\hat\kappa_{P,n}^{(GA)}=0$, $\iota_{P,n}=1$ and $b_{S,n}\geq 0$, 
hence, from (\ref{g}),
\begin{align}
g(a_{S,n},a_{P,n},y_{S,n},\U,b_{S,n})
=&
a_{S,n}\chi(y_{S,n}\in\{1,2\})
+a_{S,n}(1-a_{P,n})\chi(y_{S,n}\in\{5,7\})
\nonumber\\&
+a_{P,n}\chi(y_{S,n}\in\{1,3,6,7\})b_{S,n},
\end{align}
and, by taking the expectation with respect to $y_{S,n}$,
\begin{align}
\mathbb E\left[g(a_{S,n},a_{P,n},y_{S,n},\U,b_{S,n})\right]
=&
a_{S,n}(\delta_{sp}+\delta_{s})
+a_{S,n}(1-a_{P,n})(\upsilon_{sp}+\upsilon_{s})
\nonumber\\&
+a_{P,n}(\delta_{sp}+\delta_{p}+\upsilon_{sp}+\upsilon_{p})b_{S,n},
\end{align}
where the first two terms correspond to the successful decoding of the current SU packet, and the last term refers  to the event
that $l_{P,n}$ is virtually decoded, hence the $b_{S,n}$ SU packets are virtually decoded as well.
\item
If $(\Phi_n,b_{S,n})=(\stackrel{\leftrightarrow}{\K},0)$, then
$\hat\kappa_{P,n}^{(GA)}=1$, $\iota_{P,n}=1$ and $b_{S,n}=0$.
Since $\iota_{P,n}=1$ implies 
$a_{P,m}\chi(y_{S,m}\in\{1,3,6,7\})-a_{P,m}a_{S,m}\chi(y_{S,m}=7)=0,\forall \nu_P(k)\leq m<n$,
and 
 $\hat\kappa_{P,n}^{(GA)}=1$ excludes
 $a_{P,m}\chi(y_{S,m}\in\{1,3,6,7\})=0,\forall \nu_P(k)\leq m<n$,
 it follows that  there exists $\nu_P(k)\leq m<n$ such that 
 $a_{P,m}\chi(y_{S,m}\in\{1,3,6,7\})=1$ and $a_{P,m}a_{S,m}\chi(y_{S,m}=7)=1$,
 \emph{i.e.},
 $a_{P,m}=1$, $a_{S,m}=1$ and $\chi(y_{S,m}=7)$.
Therefore, the current PU packet is connected to the root of the CD graph as $l_{P,n}\leftrightarrow \rho_S(\mathcal G_n)$.
From (\ref{g}), we thus obtain
    \begin{align}
g(a_{S,n},a_{P,n},y_{S,n},\stackrel{\leftrightarrow}{\K},0)
=&
a_{S,n}\chi(y_{S,n}\in\{1,2,5,7\})
\nonumber\\&
+a_{P,n}\chi(y_{S,n}\in\{1,3,6\})+a_{P,n}(1-a_{S,n})\chi(y_{S,n}=7),
\end{align}
and, by taking the expectation with respect to $y_{S,n}$,
    \begin{align}
\mathbb E\left[g(a_{S,n},a_{P,n},y_{S,n},\stackrel{\leftrightarrow}{\K},0)\right]
=&
a_{S,n}(\delta_{sp}+\delta_{s}+\upsilon_{sp}+\upsilon_{s})
\nonumber\\&
+a_{P,n}(\delta_{sp}+\delta_{p}+\upsilon_{p})+a_{P,n}(1-a_{S,n})\upsilon_{sp},
\end{align}
where the first term refers to a successful decoding operation of the current SU packet (since $l_{P,n}$ is virtually known, its interference can be removed),
and the last two terms refer to the event when either $l_{P,n}$ is successfully decoded, so that the current root, which is reachable from $l_{P,n}$, is decoded as well,
or $l_{S,n}\rightarrow l_{P,n}$, so that $l_{S,n}$ becomes the new root of the graph and the previous root is virtually decoded.
\item
  Finally, if  $(\Phi_n,b_{S,n})=(\stackrel{\rightarrow}{\K},0)$, then
$\hat\kappa_{P,n}^{(GA)}=1$, $\iota_{P,n}=0$ and $b_{S,n}=0$.
Since $\iota_{P,n}=0$
excludes $a_{P,m}\chi(y_{S,m}\in\{1,3,6,7\})-a_{P,m}a_{S,m}\chi(y_{S,m}=7)=0,\forall \nu_P(k)\leq m<n$,
 there exists some $\nu_P(k)\leq m<n$ such that 
$a_{P,m}\chi(y_{S,m}\in\{1,3,6,7\})-a_{P,m}a_{S,m}\chi(y_{S,m}=7)=1$,
so that either the current PU packet is decoded successfully, or it is connected to the root of the CD graph as $\rho_S(\mathcal G_{n})\rightarrow l_{P,n}$.
From (\ref{g}), we thus obtain
   \begin{align}
&g(a_{S,n},a_{P,n},y_{S,n},\stackrel{\rightarrow}{\K},0)=a_{S,n}\chi(y_{S,n}\in\{1,2,5,7\}),
\end{align}
and, by taking the expectation with respect to $y_{S,n}$,
    \begin{align}
\mathbb E\left[g(a_{S,n},a_{P,n},y_{S,n},\stackrel{\rightarrow}{\K},0)\right]
=&
a_{S,n}(\delta_{sp}+\delta_{s}+\upsilon_{sp}+\upsilon_{s}).
\end{align}
In fact, since $l_{P,n}$ is virtually known, its interference can be removed,
so that the current SU packet can be virtually decoded if $y_{S,n}\in\{1,2,5,7\}$.
\end{itemize}

Note that
the distinction between $\Phi=\stackrel{\leftrightarrow}{\K}$ and $\Phi=\stackrel{\rightarrow}{\K}$ is necessary,
since in the configuration $\Phi=\stackrel{\leftrightarrow}{\K}$
the root may become reachable by a new root with larger CD potential, 
as shown in the example provided in Fig. \ref{fig:example}.e-f;
 on the other hand, if $\Phi=\stackrel{\rightarrow}{\K}$, then no SU transmission can achieve higher CD potential than the current root.
Also, note that virtual FIC can be employed by SUrx
in states $\Phi=\stackrel{\leftrightarrow}{\K}$ and $\Phi=\stackrel{\rightarrow}{\K}$
to perform its new  transmissions, since the current PU packet
is virtually known.
Thus, in these states the SU takes advantage of  a "clean"  transmission channel.
This fact does not hold when $\Phi=\U$, since the PU packet is virtually unknown.

The state space of the CD protocol in the virtual system is thus given by
\begin{align}
& \mathcal W=
\left\{(\U,b):b\in\mathbb N(0,R_{\max}-1)\right\}\nonumber\\&
\cup
\left\{(\stackrel{\leftrightarrow}{\K},0),(\stackrel{\rightarrow}{\K},0)\}
\right\}
\end{align}
with finite cardinality $|\mathcal W_{\mathrm{V}}|=R_{\max}+2$,
 as opposed to the original formulation, where the state space is infinite.
Therefore,
the virtual system allows
a compact state space representation of the CD protocol,
such that the specific structure of the CD graph, \emph{e.g.},
the decoding relationships between the packets in the graph,  need not be taken into account.
This compact representation thus lends itself to complexity reduction in the operation and optimization of the SU access policy.

\begin{figure}[t]
    \centering
    \subfigure[]
    {
\includegraphics[width=.22\linewidth,trim = 1mm 1mm 1mm 1mm,clip=true,frame]{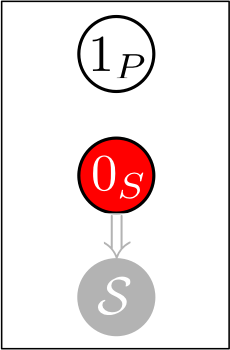}
}
    \subfigure[]
    {
\includegraphics[width=.22\linewidth,trim = 1mm 1mm 1mm 1mm,clip=true,frame]{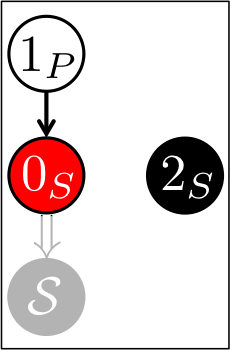}
}
\subfigure[]
    {
\includegraphics[width=.22\linewidth,trim = 1mm 1mm 1mm 1mm,clip=true,frame]{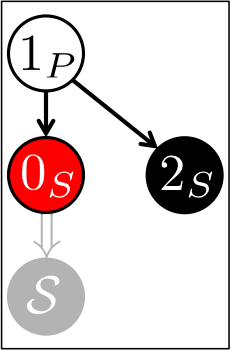}
}
\subfigure[]
    {
\includegraphics[width=.22\linewidth,trim = 1mm 1mm 1mm 1mm,clip=true,frame]{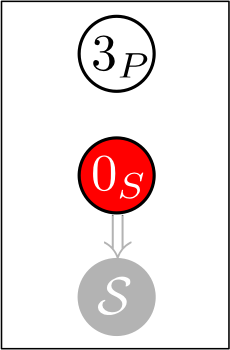}
}
\subfigure[]
    {
\includegraphics[width=.22\linewidth,trim = 1mm 1mm 1mm 1mm,clip=true,frame]{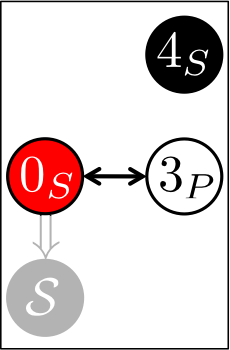}
}
    \subfigure[]
    {
\includegraphics[width=.22\linewidth,trim = 1mm 1mm 1mm 1mm,clip=true,frame]{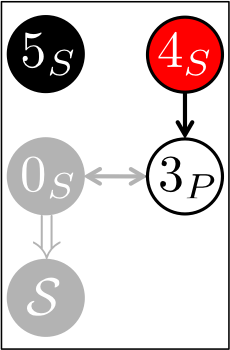}
}
    \subfigure[]
{
\includegraphics[width=.22\linewidth,trim = 1mm 1mm 1mm 1mm,clip=true,frame]{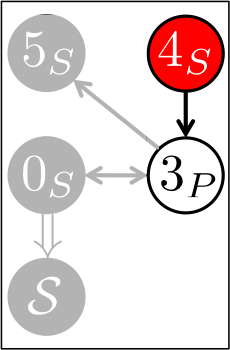}
}
\subfigure[]
    {
\includegraphics[width=.22\linewidth,trim = 1mm 1mm 1mm 1mm,clip=true,frame]{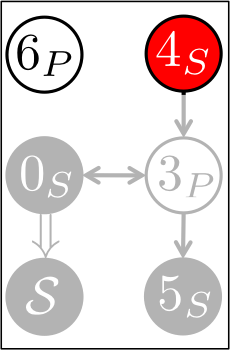}
}
\caption{Virtual CD protocol corresponding to Example  \ref{fig:example}. The portion of the graph that differentiates the physical graph from the virtual one is shaded,
so that the remaining portion of the graph captures the essential features of the CD protocol.
\\
\small
(a) slot 1: new ARQ cycle; PU transmits $1_P$, SU transmits the root $0_S$ ({\bf R1}); $\mathcal S$ has been virtually decoded; CD state $(\U,0)$
\\
(b) slot 2: PU transmits $1_P$, SU transmits a new packet $2_S$ ({\bf R2}); CD state $(\U,1)$
\\
(c) end of slot 2; CD state $(\U,2)$
\\
(d) slot 3: new ARQ cycle, $1_P$ and $2_S$ are dropped from the graph ({\bf R4});
PU transmits $3_P$, SU transmits $0_S$ ({\bf R1});  CD state $(\U,0)$
\\
(e) slot 4: PU transmits $3_P$, SU transmits $4_S$ ({\bf R2});  CD state $(\stackrel{\leftrightarrow}{\K},0)$
\\
(f) slot 5: PU transmits $3_P$, SU transmits $5_S$ ({\bf R2}); $0_S$ virtually decoded; CD state $(\stackrel{\rightarrow}{\K},0)$
\\
(g) end of slot 5; $5_S$ virtually decoded; CD state $(\stackrel{\rightarrow}{\K},0)$
\\
(h) slot 6: new ARQ cycle; PU transmits $6_P$, SU transmits the root $4_S$ ({\bf R1}); CD state $(\U,0)$
}
\label{fig:examplevir}
\end{figure}

The virtual system corresponding to Example \ref{example2}
is depicted in Fig. \ref{fig:examplevir}, along with the state of the CD protocol.
In particular, in slot $1$, $\mathcal S$ is virtually decoded and the CD potential is immediately released,
hence the $|\mathcal S|$ SU packets in $\mathcal S$
instantaneously contribute to the throughput accrual.
In slot $4$, $3_P$ is virtually known at SUrx, hence $5_S$ in slot $5$ is decoded via \emph{virtual FIC} at SUrx.
Moreover, in slot $5$, $0_S$ is virtually decoded as well,
since  $4_S$ becomes the root of the CD graph.

\subsection{Markov decision process formulation}
Based on this compact state space representation, it is possible to reformulate
 problem (\ref{opt3}) as a Markov decision process.
 The state of the system at the beginning of slot $n$ is denoted as
 \begin{align}
\mathbf s_n=(\mathbf s_{CD,n},t_{P,n},d_{P,n},\beta_n),
 \end{align}
 where $\mathbf s_{CD,n}\in\mathcal W$ is the state of the CD protocol,
$t_{P,n}\in\mathbb N(0,R_{\max}-1)$ is the ARQ state, $d_{P,n}\in\mathbb N(t_{P,n},D_{\max}-1)$ is the delay state at the PU pair (both $t_{P,n}$ and $d_{P,n}$ can be tracked by the SU pair, as per Lemma \ref{lem1}),
and $\beta_n$ is the belief available at the SU pair on the value of the internal queue state of PUtx,
based on the history collected, so that $\beta_n(q_{P})$ is the probability that 
$q_{P,n}=q_P$, as seen from the SU pair.
The following theorem establishes that $\mathbf s_n$ is an information state, \emph{i.e.},
$\mathbf s_n$ is a sufficient statistic for decision making at time $n$, so that we can redefine the SU access policy as a function of 
$\mathbf s_n$ only, and
$\mu_S(\mathbf s_n)$ is the probability that $a_{S,n}=1$ in slot $n$.
\begin{thm}
\label{thm6}
$\mathbf s_n$ is an information state.
\end{thm}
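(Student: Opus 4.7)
The plan is to verify the two defining properties of an information state for the controlled stochastic process $\{\mathbf s_n\}$: (i) conditional on $(\mathbf s_n, a_{S,n})$, the one-step expected costs and rewards, namely the PU reward $\mathbf r_P$ and the virtual SU throughput $g(\cdot)$, depend on the past only through $\mathbf s_n$; and (ii) conditional on $(\mathbf s_n, a_{S,n})$, the next-slot state $\mathbf s_{n+1}$ is independent of the past. Together, these two properties imply that an optimal Markov policy $\mu_S(\mathbf s_n)$ exists for $\mathbf{P3}$.

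For property (i), I would compute $\mathbb E[\mathbf r_P|\mathbf s_n, a_{S,n}]$ by marginalizing over $q_{P,n}\sim\beta_n$, $a_{P,n}\sim\mu_P(t_{P,n},d_{P,n},q_{P,n})$, $b_{P,n}\sim\mathbb P_B$ and $\boldsymbol{\gamma}_P(n)\sim\mathbb P_{\boldsymbol{\gamma}}$; each of these is either a function of $\mathbf s_n$ alone or, by the i.i.d.\ modeling assumptions of Sec.~\ref{sec:sys_model}, conditionally independent of the past given $\mathbf s_n$. The expected virtual throughput $\mathbb E[g(\cdot)|\mathbf s_n,a_{S,n}]$ has already been written in closed form in the case analysis preceding the theorem for each $\mathbf s_{CD,n}\in\mathcal W$, and the same marginalization of $a_{P,n}$ and $y_{S,n}$ shows that it also depends only on $(\mathbf s_{CD,n},t_{P,n},d_{P,n},\beta_n,a_{S,n})$.

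For property (ii), I would analyze the four components of $\mathbf s_{n+1}$ separately. The ARQ counters $(t_{P,n+1},d_{P,n+1})$ are deterministic functions of $(t_{P,n},d_{P,n},a_{P,n},y_{P,n})$ via (\ref{tp})--(\ref{dp}) combined with the deadline rule; the conditional law of $(a_{P,n},y_{P,n})$ given $(\mathbf s_n,a_{S,n})$ is fully determined by $\mu_P$, $\beta_n$, $(t_{P,n},d_{P,n})$ and the i.i.d.\ SNR distribution via (\ref{pmfY}). The CD state $\mathbf s_{CD,n+1}=(\Phi_{n+1},b_{S,n+1})$ evolves as a function of $(\mathbf s_{CD,n},a_{S,n},a_{P,n},y_{S,n},\chi(t_{P,n+1}=0))$: within an ongoing ARQ cycle the transition follows directly from the definitions (\ref{kappapnga})--(\ref{bsn}) of $(\hat\kappa_{P,n}^{(GA)},\iota_{P,n},b_{S,n})$, while the cycle-boundary flag implements rule \textbf{R4}. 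Finally, $\beta_{n+1}$ is obtained by a Bayesian-filter update: given the prior $\beta_n$, the access kernel $\mu_P(t_{P,n},d_{P,n},\cdot)$, the observed PU feedback $y_{P,n}$ and the independent arrival $b_{P,n}\sim\mathbb P_B$, Bayes' rule applied to the queue dynamics (\ref{internalstate}) yields a posterior on $q_{P,n+1}$ that depends only on $(\beta_n,t_{P,n},d_{P,n},a_{P,n},y_{P,n})$; these are functions of $(\mathbf s_n,a_{S,n})$ and the current-slot observables, giving the desired Markov step.

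The main obstacle I anticipate is the cycle-boundary update of the CD state. Rule \textbf{R4} trims the virtual graph to the component reachable from the current root, and one must verify that the post-trim descriptor is expressible purely through the compact state $(\Phi_n,b_{S,n})$ rather than the full graph $\mathcal G_n$. This calls for a case-by-case check over the three configurations $(\U,b)$, $(\stackrel{\leftrightarrow}{\K},0)$ and $(\stackrel{\rightarrow}{\K},0)$, and over the possible values of $y_{S,n}$, showing that the reachability structure inherited by the new ARQ cycle, together with the virtual-knowledge status of the fresh PU packet $(n+1)_P$, is determined by $\mathbf s_{CD,n}$ and the current-slot random variables alone. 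Once this enumeration is completed, combining it with the Bayesian filter argument for $\beta_n$ and the deterministic ARQ dynamics yields the Markov property of $\{\mathbf s_n\}$ under any policy that is a function of $\mathbf s_n$, which is precisely the information-state property.
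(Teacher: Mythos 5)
Your decomposition matches the paper's proof: the same three components are handled in the same way --- $(t_{P,n+1},d_{P,n+1})$ as deterministic functions of $(t_{P,n},d_{P,n},y_{P,n})$ via (\ref{tp})--(\ref{dp}) and (\ref{sigma}), $\beta_{n+1}$ via the Bayesian filter update (\ref{nextbeta}), and $\mathbf s_{CD,n+1}$ via one-step recursions on $(\hat\kappa_{P,n}^{(GA)},\iota_{P,n},b_{S,n})$, followed by the observation that $g(\cdot)$ and $\mathbf r_P$ depend only on quantities whose conditional law is fixed by $(\mathbf s_n,a_{S,n})$. The one place you diverge is the ``main obstacle'' you flag at the ARQ-cycle boundary: in the paper's formulation this step is immediate and requires no case-by-case enumeration over graph structures. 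The quantities $\hat\kappa_{P,n}^{(GA)}$, $\iota_{P,n}$ and $b_{S,n}$ in (\ref{kappapnga})--(\ref{bsn}) are defined as products and sums over the slots of the \emph{current} ARQ cycle only, so whenever $o_{P,n}=1$ the compact CD state deterministically resets to $(\hat\kappa_{P,n+1}^{(GA)},\iota_{P,n+1},b_{S,n+1})=(0,1,0)$, i.e., the state $(\U,0)$, regardless of which portion of the graph rule \textbf{R4} retains. The reachability structure of the retained subgraph does not need to be carried forward in the access-policy state, because its CD potential has already been credited in the virtual throughput accounting of Theorem \ref{thm5}, and the identity of the retained root is the responsibility of the labeling policy fixed by Theorem \ref{CD}, not of $\mathbf s_n$. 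With that observation your deferred enumeration becomes unnecessary and your argument closes, coinciding with the paper's.
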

\begin{proof}
See Appendix \ref{proofofthm6}.
\end{proof}

A decoupling principle thus follows:
the \emph{secondary access decision}, \emph{i.e.}, whether the SU should transmit or stay idle,
is determined from $\mu_S(\mathbf s_n)$, based on the compact state information $\mathbf s_n$;
on the other hand, \emph{packet selection}, \emph{i.e.}, which packet should be sent if  a transmission is made,
is done based on the four CD rules of Sec. \ref{sec:chdecprot}, based on the state of the CD graph.

We can define the \emph{expected virtual instantaneous throughput}
and the expected PU reward
 under a given state $\mathbf s_n$ and SU access decision
$a_{S,n}$ as
\begin{align}
&\bar g(\mathbf s_n,a_{S,n})\triangleq\mathbb E\left[
\left.g(a_{S,n},a_{P,n},y_{S,n},\Phi_n,b_{S,n})
\right|\mathbf s_n,a_{S,n}
\right],
\\
&\bar{\mathbf r}_P(\mathbf s_n,a_{S,n})\triangleq\mathbb E\left[
\left.
\mathbf r_P(\mathbf s_{P,n},b_{P,n},\boldsymbol{\gamma}_{P,n},a_{P,n},a_{S,n})
\right|\mathbf s_n,a_{S,n}
\right],
\end{align}
where the expectation is with respect to $(\mathbf s_{P,n},b_{P,n},\boldsymbol{\gamma}_{P,n},y_{S,n},a_{P,n})$,
so that we can rewrite
\begin{align}
&\bar g(\mathbf s_n,a_{S,n})=
\sum_{i=1}^7\mathbb P(y_{S,n}=i)
\sum_{q_P}\beta_n(q_{P})\mu_P(t_{P,n},d_{P,n},q_P)g(a_{S,n},1,i,\Phi_n,b_{S,n})
\nonumber\\&
+\sum_{i=1}^7\mathbb P(y_{S,n}=i)
\sum_{q_P}\beta_n(q_{P})(1-\mu_P(t_{P,n},d_{P,n},q_P))g(a_{S,n},0,i,\Phi_n,b_{S,n}).
\end{align}
and 
\begin{align}
&\bar{\mathbf r}_P(\mathbf s_n,a_{S,n})=
\sum_{q_P}\beta_n(q_P)
\mu_P(t_{P,n},d_{P,n},q_{P})
\mathbb E\left[
\mathbf r_P(t_{P,n},d_{P,n},q_{P},b_{P,n},\boldsymbol{\gamma}_{P,n},1,a_{S,n})
\right]
\nonumber\\&
+\sum_{q_P}\beta_n(q_P)(1-\mu_P(t_{P,n},d_{P,n},q_{P}))
\mathbb E\left[
\mathbf r_P(t_{P,n},d_{P,n},q_{P},b_{P,n},\boldsymbol{\gamma}_{P,n},0,a_{S,n})
\right],
\end{align}
where the expectation is with respect to the realization of  $b_{P,n}$ and $\boldsymbol{\gamma}_{P,n}$, which 
are i.i.d. over time.

From Theorem \ref{thm5}, we can thus express $\bar T_{S,\infty}^{(up)}$ as
\begin{align}
\bar T_{S,\infty}^{(up)}(\mu_S)=\lim\inf_{N\to\infty}\mathbb E\left[\frac{1}{N}\sum_{n=0}^{N-1}\bar g(\mathbf s_n,a_{S,n})\right],
\end{align}
and the PU reward as
\begin{align}
&\bar{\mathbf R}_P(\mu_S,\lambda_S^{(CD)})=\lim\inf_{N\to\infty}
\mathbb E\left[\frac{1}{N}\sum_{n=0}^{N-1}\bar{\mathbf r}_P(\mathbf s_n,a_{S,n})\right],
\end{align}
where the expectation is with respect to $a_{S,n}$, generated according to policy $\mu_S(\mathbf s_n)$,
and to the state sequence $\{\mathbf s_n\}$ induced by $\mu_S$.

In the special case where $\beta_n$ takes values from a finite set $\mathcal B$,\footnote{This happens, for instance,
if $b_{P,n}=Q_{\max}$ with probability $1$, so that $q_{P,n}=Q_{\max},\forall n$ and the data queue is saturated; in this case,
$\beta_n(q_P)=\chi(q_P=Q_{\max})$.}
$\mathbf s_n$ takes values from a finite set. Thus, assuming the SU access policy $\mu_S$ induces an irreducible Markov chain $\{\mathbf s_n,n\geq 0\}$, and letting $\pi_{\mu_S}(\mathbf s_n)$ be its steady-state distribution
under the SU access policy $\mu_S$,
the average long-term SU  throughput and PU reward can be rewritten as
\begin{align}\label{throughput}
&\bar T_{S}^{(up)}(\mu_S)=
\sum_{\mathbf s}\pi_{\mu_S}(\mathbf s)[\mu_S(\mathbf s)\bar g(\mathbf s,1)+(1-\mu_S(\mathbf s))\bar g(\mathbf s,0)],
\\
&\bar{\mathbf R}_P(\mu_S,\lambda_S^{(CD)})
=
\sum_{\mathbf s}\pi_{\mu_S}(\mathbf s)[\mu_S(\mathbf s)\bar{\mathbf r}_P(\mathbf s,1)+(1-\mu_S(\mathbf s))\bar{\mathbf r}_P(\mathbf s,0)].
\end{align}

The optimization problem $\mathbf{P3}$ in (\ref{opt3}) can then be solved efficiently using dynamic programming tools, such as policy iteration \cite{Bertsekas}.

\section{Numerical Results}\label{sec:numres}

We now present some numerical results.
We consider Rayleigh fading channels with average SNR $\bar\gamma_s$, $\bar\gamma_{ps}$,
$\bar\gamma_p$ and $\bar\gamma_{sp}$.
For a given set of average SNR parameters,
the transmission rate for the PU system, $R_p$, is chosen so as to maximize
 the primary throughput when the SU is idle.
Similarly,
the transmission rate for the SU system, $R_s$, is chosen so as to maximize the secondary throughput
when the PU is idle. 
Such choice of $R_s$ is driven by the fact that
IC of the PU packet is potentially enabled by the CD protocol,
hence the SU may benefit from a clean channel between its transmitter-receiver pair.
The primary ARQ deadline is set to $R_{\max}=5$, and the delay deadline to $D_{\max}=5$.
The PU is assumed to be backlogged, and it always transmits ($\mu_P(\mathbf s_P)=1,\forall \mathbf s_P$).
The performance metric considered for the PU pair is the throughput, \emph{i.e.}, 
$\mathbf r_P(\mathbf s_P,b_P,\boldsymbol{\gamma}_P,a_P,a_S)=a_P\chi\left(\boldsymbol{\gamma}_P\in\Gamma_P(a_{S})\right)$.
The maximum throughput achieved by the PU pair when the SU is idle is thus denoted as $\bar T_{P,\max}=\chi\left(\boldsymbol{\gamma}_P\in\Gamma_P(0)\right)$.

We consider the following policies in addition to "chain decoding":
 "FIC/BIC", which employs both FIC and BIC, but does not
use the CD mechanism (see \cite{MichelusiJSAC});
"FIC only", which employs only FIC, \emph{i.e.},
once the current PU packet is decoded by SUrx, its knowledge
is exploited in the following primary ARQ retransmissions to perform
IC at SUrx, see \cite{MichelusiITA11};
"no FIC/BIC", which employs neither BIC nor FIC. In this case, the SU packet is
 decoded by leveraging the PU codebook structure \cite{rateRegion,Taranto};
however, possible knowledge of the PU packet gained during the decoding operation
is only used in the slot where the PU packet is acquired, whereas it is neglected in past/future
slots.



\begin{figure}[t]
\centering  
\includegraphics[width=.8\linewidth,trim = 0mm 3mm 0mm 12mm,clip=true]{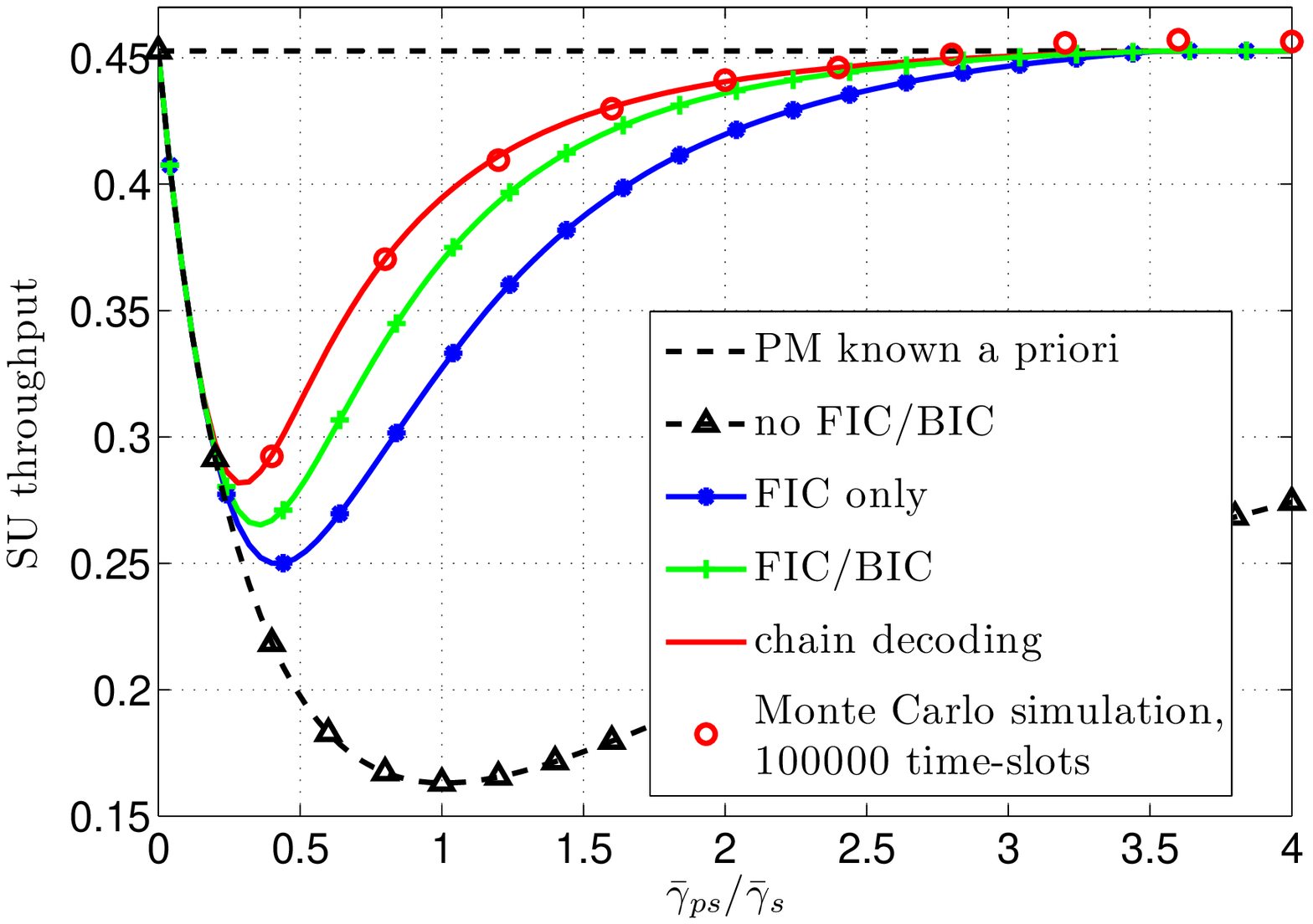}
\caption{SU throughput  vs SNR ratio $\bar\gamma_{ps}/\bar\gamma_s$.
PU throughput constraint $\bar{\mathbf R}_{P,\min}=0.8\bar T_{P,\max}$. $\bar\gamma_s=5$, $\bar\gamma_{sp}=2$, $\bar\gamma_p=10$.}
\label{TSvsGps}
\end{figure}

We point out  that the CD protocol,
by allowing the SU to intelligently perform selective retransmissions
of SU data packets, best exploits the 
primary ARQ process,
among all the schemes with fixed transmission rate and power and no cooperation between
PU and SU.
 Hence, it represents the optimal scheme for the system considered in this paper.

In Fig.~\ref{TSvsGps}, we plot the SU throughput given by (\ref{throughput}) versus the
 SNR ratio $\bar\gamma_{ps}/\bar\gamma_s$, where $\bar\gamma_s=5$.
For selected values of the SNR ratio, we plot also
the results of Monte Carlo simulations over 100000 slots,
 based on the real (not virtual) system.
In this regard, we notice a good match between the simulation curve
and the throughput curve (\ref{throughput}) based on the virtual system,
despite the finite time-horizon of the simulation.

We notice that, when $\bar\gamma_{ps}=0$, the upper bound,
corresponding to the case where the current PU packet is known a priori by SUrx,
 is achieved with equality by all mechanisms,
since the SU operates under no interference from the PU.
The upper bound is approached also for $\bar\gamma_{ps}\gg \bar\gamma_s$,
corresponding to a strong interference regime where, with high probability,
SUrx can successfully decode the PU packet,
remove its interference from the received signal, and then
 attempt to decode the SU packet.
The worst performance is attained when $\bar\gamma_{ps}\simeq\bar\gamma_{s}/2$
($\bar\gamma_{ps}\simeq\bar\gamma_{s}$ for "no FIC/BIC"),
since the interference from the PU is neither weak enough
 to be simply treated as noise, nor
strong enough to be successfully decoded and then removed.
We observe that, for $\bar\gamma_{ps}< 0.2\bar\gamma_s$,
"no FIC/BIC" is sufficient to achieve optimality.
This is because the signal from PUtx to SUrx is very weak,
hence, with high probability, $R_p\geq C(\gamma_{ps})$, so that 
a successful decoding operation of the current PU packet by SUrx is unlikely to
occur (even if the SU remains idle), hence the CD graph does not build up.
On the other hand, "FIC/BIC" is  sufficient to achieve optimality in the regime
 $\bar\gamma_{ps}>2\bar\gamma_s$.
This is because the signal from PUtx to SUrx is strong, hence,
with high probability, the PU packet is decoded before 
the ARQ retransmission window terminates, thus enabling the decoding
of the buffered SU packets via BIC.
Therefore, the CD graph does not build up over subsequent primary ARQ cycles.
Instead, a throughput improvement of the CD protocol
over the other mechanisms
can be noticed in the range $\bar\gamma_{ps}\in(0.2\bar\gamma_s,2\bar\gamma_s)$,
with a throughput gain of $\sim$10\% with respect to "FIC/BIC"
and $\sim$25\% with respect to "FIC only".
Even though these throughput gains may seem modest, they represent the maximum improvement that can be achieved by any scheme under our assumptions,
showing that CD is able to completely close the gap between the previous schemes and optimality, and is therefore the ultimate scheme.

\begin{figure}[t]
\centering  
\includegraphics[width=.8\linewidth,trim = 0mm 12mm 0mm 10mm,clip=true]{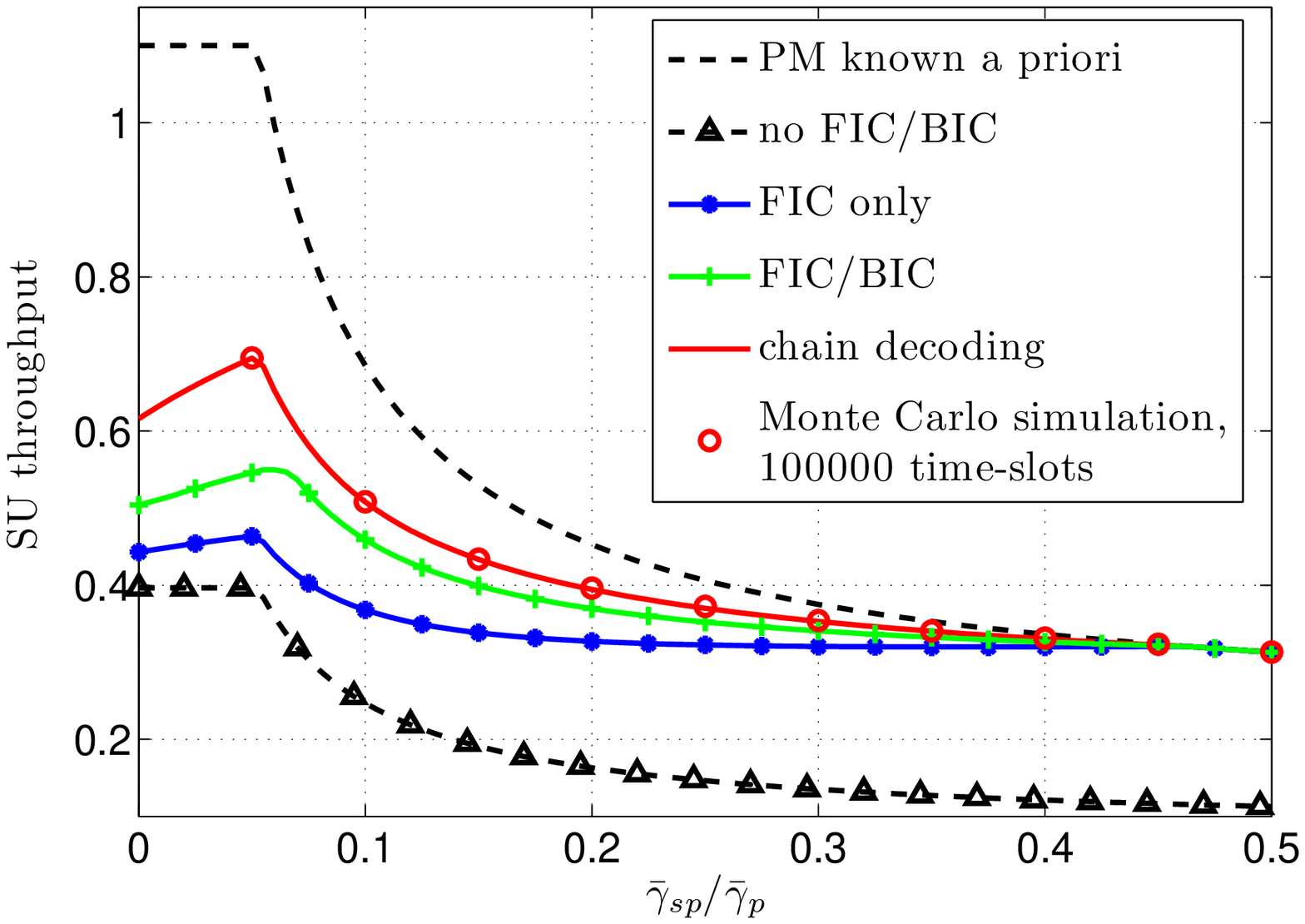}
\caption{SU throughput  vs SNR ratio $\bar\gamma_{sp}/\bar\gamma_p$.
PU throughput constraint $\bar{\mathbf R}_{P,\min}=0.8\bar T_{P,\max}$.
$\bar\gamma_s=\bar\gamma_{ps}=5$, $\bar\gamma_p=10$.}
\label{TSvsGsp}
\end{figure}

In Fig.~\ref{TSvsGsp}, we plot the SU throughput versus the
  SNR ratio $\bar\gamma_{sp}/\bar\gamma_p$, where $\bar\gamma_p=5$. 
Note that, for $\bar\gamma_{sp}/\bar\gamma_p\leq 0.5$, the SU throughput increases.
In fact, in this regime the activity of the SU causes little
harm to the PU, and the constraint on the PU throughput loss
is inactive. The SU thus maximizes its own throughput. As $\bar\gamma_{sp}$ increases
from $0$ to $0.5\bar\gamma_{p}$,
the activity of the SU induces
more frequent primary ARQ retransmissions (still within the constraint imposed), hence there are more IC
opportunities available and the SU throughput augments.
On the other hand, as  $\bar\gamma_{sp}$ grows beyond $0.5\bar\gamma_p$,
the constraint on the PU throughput loss becomes active, 
 SU accesses become more and more harmful to the PU
and take place more and more sparingly, hence the SU throughput degrades.
As in the previous figure, we notice a good match between Monte Carlo simulations
and the numerical throughput curve,
and a throughput benefit of CD over the other mechanisms,
$\sim$20\% with respect to "FIC/BIC" and $\sim$30\% for small $\bar\gamma_{sp}$.

\section{Conclusions}\label{sec:remarks}
We have studied the problem of secondary access in
a cognitive radio network, where the primary user pair employs Type-I Hybrid ARQ.
We have proposed a CD protocol,
such that the SU receiver (SUrx) buffers the secondary signals that are not successfully decoded,
 and selectively performs retransmissions of previously failed transmission attempts.
In fact primary (due to ARQ) and secondary (according to the rules of the CD protocol) retransmissions introduce temporal
redundancy in the channel, which can be exploited for interference cancellation at the SU receiver.
We have shown that the CD protocol is defined by four simple
rules, which enable a compact state representation of the protocol
and its efficient numerical optimization.
Finally, we have shown numerically the throughput benefits of CD,
with respect to other strategies proposed in the literature.

\appendices

\section{}
\label{applemma1}
\begin{lemma}
\label{lem1}
Let $t_{P,0}=0$ in slot $0$ (no active retransmission session).
Then, $(a_{P,n-1},t_{P,n},d_{P,n},l_{P,n-1})$
is a function of $y_{P,0}^{n-1}$.
\end{lemma}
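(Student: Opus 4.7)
The plan is to proceed by induction on $n$, showing at each step that the four quantities $(a_{P,n-1},t_{P,n},d_{P,n},l_{P,n-1})$ can be reconstructed from the feedback stream $y_{P,0}^{n-1}$ via the PU state-update equations stated in Sec.~\ref{sec:evol}. The reconstruction hinges on three elementary observations: (i) from the feedback alphabet $\{\text{ACK},\text{NACK},\emptyset\}$, the PU access decision is recovered as $a_{P,k}=\chi(y_{P,k}\neq\emptyset)$; (ii) the outcome indicator $o_{P,k}=\sigma(t_{P,k},d_{P,k},y_{P,k})$ from (\ref{sigma}) is a deterministic function of the current ARQ state, delay, and feedback; and (iii) the label $l_{P,k}$ is a deterministic function of $(a_{P,k},d_{P,k})$, being $(k-d_{P,k})_P$ whenever $a_{P,k}=1$ and $k_P$ by convention when $a_{P,k}=0$.

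For the base case $n=1$, I would use $t_{P,0}=0$ (hypothesized) and $d_{P,0}=0$ (forced by the definition, since there is no active session) together with the single feedback symbol $y_{P,0}$. Observation (i) yields $a_{P,0}$, observation (iii) yields $l_{P,0}=0_P$, and observation (ii) combined with the update rules (\ref{tp}) and (\ref{dp}) yields $(t_{P,1},d_{P,1})$.

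For the inductive step, I would assume $(a_{P,n-2},t_{P,n-1},d_{P,n-1},l_{P,n-2})$ is a known function of $y_{P,0}^{n-2}$. In particular, $(t_{P,n-1},d_{P,n-1})$ is known from $y_{P,0}^{n-2}$. Upon revealing $y_{P,n-1}$, observation (i) gives $a_{P,n-1}$, observation (iii) gives $l_{P,n-1}$, observation (ii) gives $o_{P,n-1}$ via (\ref{sigma}), and finally (\ref{tp})--(\ref{dp}) give $(t_{P,n},d_{P,n})$. Hence $(a_{P,n-1},t_{P,n},d_{P,n},l_{P,n-1})$ is a function of $y_{P,0}^{n-1}$, completing the induction.

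There is no serious obstacle here: the lemma is essentially a bookkeeping assertion verifying that the deterministic recursion for the PU ARQ counters, with transmission outcomes injected through $\sigma(\cdot)$, is completely driven by the overheard feedback. The only mild care needed is to keep the indexing consistent (the state $(t_{P,n},d_{P,n})$ used when decoding $y_{P,n-1}$ must have already been obtained at the previous step of the induction) and to handle the label convention $l_{P,k}=k_P$ under $a_{P,k}=0$ separately from the active-session case $l_{P,k}=(k-d_{P,k})_P$.
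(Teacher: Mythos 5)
Your proposal is correct and follows essentially the same route as the paper's proof: an induction on the slot index, recovering $a_{P,k}$ from $\chi(y_{P,k}\neq\emptyset)$, $o_{P,k}$ from $\sigma(t_{P,k},d_{P,k},y_{P,k})$ in (\ref{sigma}), the counters via (\ref{tp})--(\ref{dp}), and the label via $l_{P,k}=(k-d_{P,k})_P$ or $k_P$ depending on $a_{P,k}$. The only cosmetic difference is that the paper spells out the three feedback cases explicitly in the base step, which your argument subsumes.
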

\begin{proof}
The proof is by induction, with the help of  Secs. \ref{sec:intstate} and \ref{sec:evol}.
Assume that $(a_{P,k},t_{P,k+1},d_{P,k+1},l_{P,k})$ is a function of $y_{P,0}^{k}$ (induction hypothesis).
This is true for $k=0$, since:
\begin{itemize}
\item $y_{P,0}=\emptyset$ implies
$a_{P,0}=0$, $l_{P,0}=0_P$, hence $t_{P,1}=d_{P,1}=0$ (\emph{i.e.}, no transmission performed and no existing active session in slot $0$);
\item $y_{P,0}=\text{ACK}$  implies
$a_{P,0}=1$, $l_{P,0}=0_P$, $t_{P,1}=0$ and $d_{P,1}=0$, since the transmission is successful and the session ends;
\item $y_{P,0}=\text{NACK}$  implies
$a_{P,0}=1$, $l_{P,0}=0_P$, $t_{P,1}=1$ and $d_{P,1}=1$, since a retransmission needs to be performed in the next slot.
\end{itemize}
For $k\geq 0$, we show that the induction hypothesis implies  that 
$(a_{P,k+1},t_{P,k+2},d_{P,k+2},l_{P,k+1})$ is a function of $y_{P,0}^{k+1}$, thus proving the lemma.
We have that $o_{P,k+1}=\sigma(t_{P,k+1},d_{P,k+1},y_{P,k+1})$ from (\ref{sigma}),
so that $o_{P,k+1}$  is a function of $y_{P,0}^{k+1}$.
Therefore, given $y_{P,0}^{k+1}$, we have the following:
\begin{align}
a_{P,k+1}=\chi(y_{P,k+1}\in\{\text{ACK},\text{NACK}\}),
\end{align}
since $y_{P,k+1}=\emptyset$ if and only if the PU remains idle in slot $k+1$;
from (\ref{tp}) and (\ref{dp}),
\begin{align}
   &t_{P,k+2}=(1-o_{P,k+1})(t_{P,k+1}+a_{P,k+1}),
\\&d_{P,k+2}=(1-o_{P,k+1})\left[d_{P,k+1}+\chi(t_{P,k+1}>0)+\chi(t_{P,k+1}=0)a_{P,k+1}\right],
\end{align}
so that $t_{P,k+2}$ and $d_{P,k+2}$ are functions of  $y_{P,0}^{k+1}$;
finally, $l_{P,k+1}=(k+1)_P$ if $a_{P,k+1}=0$,
and $l_{P,k+1}=(k+1-d_{P,k+1})_P$ if $a_{P,k+1}=1$.
The induction step and the lemma are thus proved.
\end{proof}

\section{Proof of Theorem \ref{CD}}
\label{proofofCD}
\begin{proof}
Let $\mathcal A_P=\{a_{P,n},n\geq0\}$ and $\mathcal A_S=\{a_{S,n},n\geq0\}$ be a realization of the PU and SU access sequences;
$\mathcal L_P=\{l_{P,n},n\geq0\}$ and $\mathcal L_S=\{l_{S,n},n\geq0\}$ be a realization of the PU and SU label sequences;
$\mathcal Y_S=\{y_{S,n},n\geq0\}$  be the SNR region sequence, where $y_{S,n}=i$ if and only if
$\boldsymbol{\gamma}_S(n)\in\Gamma_{S,i}$.
In the following proof, we keep $(\mathcal A_P,\mathcal A_S,\mathcal L_P,\mathcal Y_S)$ fixed
and we vary only the SU labeling sequence $\mathcal L_S$. Therefore, we express the dependence on $\mathcal L_S$ only.
Let  $\kappa_{S,n}(l_S;\mathcal L_S)\in\{0,1\},\ l_S\geq 0$ be the decoding outcome for the SU packets
at SUrx under $\mathcal L_S$, \emph{i.e.}, $\kappa_{S,n}(l_S;\mathcal L_S)=1$ if $l_S$ has been successfully decoded by SUrx
before and not including slot $n$, and $\kappa_{S,n}(l_S;\mathcal L_S)=0$ otherwise.
Similarly, 
let  $\kappa_{P,n}(l_P;\mathcal L_S)\in\{0,1\},\ l_P\geq 0_P$ be the decoding outcome for the PU packets
at SUrx under $\mathcal L_S$, \emph{i.e.}, $\kappa_{P,n}(l_P;\mathcal L_S)=1$ if $l_P$ has been successfully decoded by SUrx
before and not including slot $n$, and $\kappa_{P,n}(l_P;\mathcal L_S)=0$ otherwise.
Note that $\kappa_{S,n}(\cdot;\mathcal L_S)$ and $\kappa_{P,n}(\cdot;\mathcal L_S)$ are univocally determined by $(\mathcal A_P,\mathcal A_S,\mathcal L_P,\mathcal L_S,\mathcal Y_S)$, by applying recursively CD.
With this definition, $\sum_{l_S=0}^{\infty}(\kappa_{S,n+1}(l_S;\mathcal L_S)-\kappa_{S,n}(l_S;\mathcal L_S))$
is the number of SU packets successfully decoded in slot $n$, 
as a consequence of a direct decoding operation or via CD.

We define the sample average secondary throughput up to slot $N$ under the sequence $\mathcal Z$ as
\begin{align}
\label{qweqweqwe}
\bar T_{S,N}(\mathcal L_S)
=\frac{1}{N} \sum_{l_S=0}^{\infty}\kappa_{S,N}(l_S;\mathcal L_S),
\end{align}
and the expected PU reward up to slot $N$ under the sequence $\mathcal Z$ as
\begin{align}
\bar{\mathbf R}_{P,N}
=\frac{1}{N}\sum_{n=0}^{N-1}
\mathbb E\left[\left.\mathbf r_P(\mathbf s_{P,n},b_{P,n},\boldsymbol{\gamma}_P(n),a_{P,n},a_{S,n})\right|
\mathcal A_P,\mathcal A_S,\mathcal L_P,\mathcal Y_S
\right],
\end{align}
where 
the expectation is computed with respect to $\boldsymbol{\gamma}_P(n)$ given $(\mathcal A_P,\mathcal A_S,\mathcal L_P,\mathcal Y_S)$, and
we have used the fact that $\mathbf r_P(\mathbf s_{P},b_{P},\boldsymbol{\gamma}_P,a_{P},a_{S})$ is independent of the SU label $l_S$,
so that  $\bar{\mathbf R}_{P,N}$ is independent of $\mathcal L_S$.

We now solve the following optimization problem:
\begin{align}
\label{optimallabeling}
\mathcal L_S^*=\arg\max_{\mathcal L_S} \bar T_{S,N}(\mathcal L_S),
\end{align}
for a given $(\mathcal A_P,\mathcal A_S,\mathcal L_P,\mathcal Y_S)$.
Note that this optimization problem does not affect the PU reward 
$\bar{\mathbf R}_{P,N}$, since the latter is independent of $\mathcal L_S$, given $(\mathcal A_P,\mathcal A_S,\mathcal L_P,\mathcal Y_S)$.
Moreover, $\mathcal L_S^*$ is the optimal \emph{offline} labeling scheme, which assumes that the sequence 
$(\mathcal A_P,\mathcal A_S,\mathcal L_P,\mathcal Y_S)$ is known non-causally at SUtx. 
Indeed, in the following proof, we will show that, when $N\to\infty$, $\mathcal L_S^*$ is defined by the CD rules, and
can be implemented causally, \emph{i.e.}, it does not require non-causal knowledge of $(\mathcal A_P,\mathcal A_S,\mathcal L_P,\mathcal Y_S)$.

Note that, if $a_{P,n}=0$ or $a_{S,n}=0$, then PUtx or SUtx are idle, respectively;
from the perspective of decoding the PU and SU packets at SUrx and initiating CD, 
the same outcome can be obtained in a new system where both PUtx and SUtx transmit, but the channel gain to SUrx is in a different region.
For instance, 
if $a_{P,n}=1$, $a_{S,n}=0$, $y_{S,n}\in\{1,3,6,7\}$, then the current PU packet is decoded by SUrx;
the same outcome is obtained in another system where both PUtx and SUtx transmit ($a_{P,n}=a_{S,n}=1$) but 
$\tilde y_{S,n}=3$, so that the current PU packet is decoded by treating the SU packet as noise, but the SU packet cannot be decoded.
Thus, we can add transmissions at the PU and SU that are not adding any positive SU throughput, by proper mapping of the channel gains.
This is formalized in the following lemma.
\begin{lemma}
\label{generateseq}
For a given sequence  $({\mathcal A}_P,{\mathcal A}_S,{\mathcal L}_P,{\mathcal L}_S,{\mathcal Y}_S)$,
there exists a sequence  $(\tilde{\mathcal A}_P,\tilde{\mathcal A}_S,\tilde{\mathcal L}_P,\tilde{\mathcal L}_S,\tilde{\mathcal Y}_S)$
achieving the same SU throughput as $({\mathcal A}_P,{\mathcal A}_S,{\mathcal L}_P,{\mathcal L}_S,{\mathcal Y}_S)$,
where both the PU and the SU always transmit.
For such sequence, 
$\tilde a_{P,n}=\tilde a_{S,n}=1,\ \forall n$, and
$\tilde{\mathcal Y}_S$, $\tilde{\mathcal L}_P$ and $\tilde{\mathcal L}_S$ are univocally defined as follows:
 \begin{itemize}
 \item If $a_{P,n}=a_{S,n}=1$, then
  $\tilde y_{S,n}=y_{S,n}$, $\tilde l_{P,n}=l_{P,n}$, $\tilde l_{S,n}=l_{S,n}$;
  \item If $a_{P,n}=1$, $a_{S,n}=0$, $y_{S,n}\in\{1,3,6,7\}$, then
   $\tilde y_{S,n}=3$, $\tilde l_{P,n}=l_{P,n}$, $\tilde l_{S,n}=n_S$;
     \item If $a_{P,n}=1$, $a_{S,n}=0$, $y_{S,n}\in\{2,4,5\}$, then
   $\tilde y_{S,n}=4$, $\tilde l_{P,n}=l_{P,n}$, $\tilde l_{S,n}=n_S$;
     \item If $a_{P,n}=0$, $a_{S,n}=1$, $y_{S,n}\in\{1,2,5,7\}$, then
   $\tilde y_{S,n}=2$, $\tilde l_{P,n}=\tilde l_{P,n-1}$, $\tilde l_{S,n}=l_{S,n}$;
     \item If $a_{P,n}=0$, $a_{S,n}=1$, $y_{S,n}\in\{3,4,6\}$, then
   $\tilde y_{S,n}=4$, $\tilde l_{P,n}=\tilde l_{P,n-1}$, $\tilde l_{S,n}=l_{S,n}$;
        \item If $a_{P,n}=0$, $a_{S,n}=0$, then
   $\tilde y_{S,n}=4$, $\tilde l_{P,n}=\tilde l_{P,n-1}$, $\tilde l_{S,n}=n_S$.
   \qed
 \end{itemize}
 \end{lemma}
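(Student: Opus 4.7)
The plan is to prove the lemma slot-by-slot via induction on the slot index $n$, establishing that the original sequence $(\mathcal A_P, \mathcal A_S, \mathcal L_P, \mathcal L_S, \mathcal Y_S)$ and its fictitious counterpart $(\tilde{\mathcal A}_P, \tilde{\mathcal A}_S, \tilde{\mathcal L}_P, \tilde{\mathcal L}_S, \tilde{\mathcal Y}_S)$ induce the same set of successfully decoded SU packets at SUrx in every slot.

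First, I would set up a correspondence between labels. Since $\tilde l_{S,n} = l_{S,n}$ whenever $a_{S,n}=1$, every originally transmitted SU packet keeps its label in the fictitious system; the ``extra'' fictitious SU transmissions, which occur precisely in slots where the original SU was idle and carry the fresh label $n_S$, are never reused by construction. An analogous correspondence holds for PU labels, with the extra fictitious PU transmissions simply re-using the most recent genuine PU label $\tilde l_{P,n-1}$. Using the notation of Sec.~\ref{CDtech}, I would then introduce the fictitious decoded-packet indicators $\tilde\kappa_{S,n}(\cdot)$, $\tilde\kappa_{P,n}(\cdot)$ and CD graph $\tilde{\mathcal G}_n$, and state the induction hypothesis: at the beginning of slot $n$, (i) $\tilde\kappa_{S,n}(l_S)=\kappa_{S,n}(l_S)$ and $\tilde\kappa_{P,n}(l_P)=\kappa_{P,n}(l_P)$ for every label present in the original system, (ii) every fictitious SU label introduced so far is undecoded, and (iii) the subgraph of $\tilde{\mathcal G}_n$ induced by the genuine labels is isomorphic to $\mathcal G_n$.

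The core of the proof is to verify the induction step case by case, one case for each of the six bullets in the lemma. The driving observation is that each choice of $\tilde y_{S,n}$ lies in a decoding region that reproduces the physical outcome at SUrx of the original system, while keeping out of the edge-generating regions $\{5,6,7\}$ that are the only ones capable of growing the CD graph. Concretely, $\tilde y_{S,n}=3$ reproduces ``PU decoded alone, no IC edge'' when the original decoded only the PU packet; $\tilde y_{S,n}=2$ reproduces ``SU decoded alone, no IC edge'' when the original decoded only the SU packet; and $\tilde y_{S,n}=4$ reproduces a null decoding event whenever the original achieved no decoding. In every case the packet set updated by \eqref{vni} (and its PU and joint analogues) in the fictitious system coincides with that of the original, so the induction hypothesis carries through to slot $n+1$.

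The main obstacle, I expect, is the subcase $a_{P,n}=0$, where the fictitious system invents a PU transmission with label $\tilde l_{P,n}=\tilde l_{P,n-1}$, a label that already appears in $\tilde{\mathcal G}_n$ and might have had incident edges. The argument is that the prescribed $\tilde y_{S,n}\in\{2,4\}$ falls outside $\{5,6,7\}$, so no edge involving $\tilde l_{P,n}$ is appended to $\tilde{\mathcal G}_{n+1}$; moreover $\tilde y_{S,n}\neq 3$ and $\tilde y_{S,n}\neq 1$, so no direct decoding of $\tilde l_{P,n}$ is triggered. Therefore $\tilde\kappa_{P,n+1}(\tilde l_{P,n})=\tilde\kappa_{P,n}(\tilde l_{P,n})$ and no new CD chain can be initiated through the reused label, preserving the induction. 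Finally, $\tilde\kappa_{S,N}(l_S)=\kappa_{S,N}(l_S)$ on genuine labels and $\tilde\kappa_{S,N}(l_S)=0$ on fictitious ones, so summing over all labels in \eqref{qweqweqwe} yields $\tilde{\bar T}_{S,N}=\bar T_{S,N}$, and the lemma follows.
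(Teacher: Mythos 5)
Your proposal is correct and follows the same construction the paper uses: the paper states Lemma \ref{generateseq} essentially by inspection (giving only the illustrative case $a_{P,n}=1$, $a_{S,n}=0$, $y_{S,n}\in\{1,3,6,7\}\mapsto\tilde y_{S,n}=3$ in the surrounding text), and your slot-by-slot induction on the decoded-packet indicators and the genuine-label subgraph of the CD graph is precisely the verification the paper leaves implicit. Your handling of the reused PU label when $a_{P,n}=0$ — noting that $\tilde y_{S,n}\in\{2,4\}$ neither decodes it nor attaches edges to it — is the right way to close the one nontrivial subcase.
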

 
We can thus exploit Lemma \ref{generateseq} and  proceed as follows: (1) given a sequence 
$(\mathcal A_P,\mathcal A_S,\mathcal L_P,\mathcal L_S,\mathcal Y_S)$, 
we define $(\tilde{\mathcal A}_P,\tilde{\mathcal A}_S,\tilde{\mathcal L}_P,\tilde{\mathcal L}_S,\tilde{\mathcal Y}_S)$
as per Lemma \ref{generateseq}, which preserves the SU throughput;
(2) Given $(\tilde{\mathcal A}_P,\tilde{\mathcal A}_S,\tilde{\mathcal L}_P,\tilde{\mathcal L}_S,\tilde{\mathcal Y}_S)$, 
we then solve the optimization problem (\ref{optimallabeling}) to determine the optimal labeling sequence $\tilde{\mathcal L}_S^*$;
(3) Given the optimal labeling sequence $\tilde{\mathcal L}_S^*$, we then define the optimal labeling sequence ${\mathcal L}_S^*$
for the original sequence $(\mathcal A_P,\mathcal A_S,\mathcal L_P,\mathcal Y_S)$ as
\begin{align}
l_{S,n}^*=\tilde l_{S,n}^*,\ \text{if }a_{S,n}=1,
\\
l_{S,n}^*=n_S,\ \text{if }a_{S,n}=0.
\end{align}
Notice that ${\mathcal L}_S^*$ generated with this approach is indeed the optimal labeling sequence solving the optimization problem (\ref{optimallabeling}) under the original sequence $(\mathcal A_P,\mathcal A_S,\mathcal L_P,\mathcal Y_S)$.
This can be seen by contradiction: if there exists $\hat{\mathcal L}_S$ such that 
$\bar T_{S,N}(\hat{\mathcal L}_S)>\bar T_{S,N}(\mathcal L_S^*)$, then we can define via Lemma \ref{generateseq}
a sequence $(\tilde{\mathcal A}_P,\tilde{\mathcal A}_S,\tilde{\mathcal L}_P,\hat{\tilde{\mathcal L}}_S,\tilde{\mathcal Y}_S)$
achieving the SU throughput $\bar T_{S,N}(\hat{\mathcal L}_S)$; it follows that 
a higher SU throughput is achieved under 
$(\tilde{\mathcal A}_P,\tilde{\mathcal A}_S,\tilde{\mathcal L}_P,\hat{\tilde{\mathcal L}}_S,\tilde{\mathcal Y}_S)$
than under 
$(\tilde{\mathcal A}_P,\tilde{\mathcal A}_S,\tilde{\mathcal L}_P,{\tilde{\mathcal L}}_S^*,\tilde{\mathcal Y}_S)$,
thus contradicting the optimality of the labeling sequence ${\tilde{\mathcal L}}_S^*$; necessarily,
${\mathcal L}_S^*$ is optimal for the original sequence.

It thus remains to determine the optimal labeling sequence $\tilde{\mathcal L}_S^*$
for a given $(\tilde{\mathcal A}_P,\tilde{\mathcal A}_S,\tilde{\mathcal L}_P,\tilde{\mathcal Y}_S)$ with the property that
$\tilde a_{P,n}=\tilde a_{S,n}=1,\forall n$.
Throughout the following proof, we thus make the following assumption.
\begin{assumption}
\label{assumptionalltx}
The sequence $(\mathcal A_P,\mathcal A_S,\mathcal L_P,\mathcal L_S,\mathcal Y_S)$ is such that
\begin{align}
&a_{P,n}=a_{S,n}=1,\forall n.
\end{align}
\qed
\end{assumption}

We proceed as follows. First, in Theorem \ref{thm1}, we determine an upper bound to
$\bar T_{S,N}(\mathcal L_S)$,
 which is independent of $\mathcal L_S$ (but does depend on $(\mathcal L_P,\mathcal Y_S)$)
  and holds for any SU labeling sequence $\mathcal L_S$.
 Then, in Theorem \ref{thm2}, we determine a lower bound to $\bar T_{S,N}(\mathcal L_S^{(CD)})$,
 the throughput obtained under the  labeling sequence $\mathcal L_S^{(CD)}$ generated according to the CD rules.
Finally, we show that  the lower bound, determined via the CD rules, converges to the upper bound  for $N\to\infty$, thus proving the optimality of the CD rules.
Note that $\{\kappa_{S,N}(l_S;\mathcal L_S),N\geq 0\}$ is a non-decreasing and bounded sequence, therefore its limit,
denoted as $\kappa_{S}^*(l_S;\mathcal L_S)\triangleq\lim_{N\to\infty}\kappa_{S,N}(l_S;\mathcal L_S)$,
 exists and $\kappa_{S}^*(l_S;\mathcal L_S)=1$ if and only if $l_S$ is decoded under the sequence $\mathcal L_S$.
 
 The following results are derived for a sequence satisfying Assumption \ref{assumptionalltx}. We can map back to the original sequence where
 $a_{P,n},a_{S,n}\in\{0,1\}$ using the following procedure, which is obtained by inspecting all the different cases arising in Lemma \ref{generateseq}.
 \begin{proced}
\label{lemreplace}
The general case where $a_{S,n}\in\{0,1\}$ and $a_{P,n}\in\{0,1\}$ is obtained by replacing:
\begin{itemize}
\item  $\chi(\tilde y_{S,n}=1)$ with $a_{P,n}a_{S,n}\chi(y_{S,n}=1)$;
\item  $\chi(\tilde y_{S,n}=2)$ with $a_{S,n}\chi(y_{S,n}=2)+(1-a_{P,n})a_{S,n}\chi(y_{S,n}\in\{1,5,7\})$;
\item  $\chi(\tilde y_{S,n}=3)$ with $a_{P,n}\chi(y_{S,n}=3)+a_{P,n}(1-a_{S,n})\chi(y_{S,n}\in\{1,6,7\})$;
\item  $\chi(\tilde y_{S,n}=4)$ with 
$
\chi(y_{S,n}=4)+(1-a_{S,n})\chi(y_{S,n}\in\{2,5\})+(1-a_{P,n})\chi(y_{S,n}\in\{3,6\})+(1-a_{P,n})(1-a_{S,n})\chi(y_{S,n}\in\{1,7\})$;
\item  $\chi(\tilde y_{S,n}=i)$ with $a_{P,n}a_{S,n}\chi(y_{S,n}=i)$, for $i\in\{5,6,7\}$.
\qed
\end{itemize}
\end{proced}

 We use the following definitions:
 \begin{itemize}
 \item Let $u_{P,n}\in\{0,1\}$ be the state variable denoting the beginning of a new PU ARQ cycle, \emph{i.e.},
 $u_{P,n}=1$ if a new PU transmission occurs in slot $n$, and $u_{P,n}=0$ otherwise;
 we let 
 $\mathcal U_P=\{u_{P,n},n\geq0\}$ be a realization of this process; 
   \item Let $\nu_P(j),\ j\geq 0$ be the slot index corresponding to the beginning of the $j$th primary ARQ cycle;
  mathematically, $\nu_P(0)=0$  and, for $j>0$, $\nu_P(j)=\min\{n:u_{P,n}=1,n>\nu_P(j-1)\}$;
  note that, owing to the labeling scheme employed by the PU pair, $l_{P,n}=\nu_P(j),\forall \nu_P(j)\leq n <\nu_P(j+1)$;
   \item Let $J_P(n),n\geq 0$, be the index of the primary ARQ cycle slot $n$ belongs to; 
  mathematically,  
 $J_P(n)=j$ if and only if $\nu_P(j)\leq n <\nu_P(j+1)$.
 \end{itemize}

In the next theorem, we upper bound $\bar T_{S,N}(\mathcal L_S)$.
We show that the upper and lower bounds are composed of three components.
The first component, $\bar T_{S,\nu_P(j+1)}^{(GA)}$, is the genie-aided SU throughput, obtained by assuming that the PU packets are known in advance and their interference can be removed.
The second term (\ref{tsup2}) is a throughput degradation term which accounts for the fact that the PU packet cannot be decoded by SUrx within its retransmission cycle,
even in the genie-aided case where the interference from SUtx is removed;
thus, those SU packets with $y_{S,n}\in\{5,7\}$, which are decodable in the genie-aided case and are counted in the genie-aided throughput $\bar T_{S,\nu_P(j+1)}^{(GA)}$, cannot be decoded due to the impossibility to remove the interference from PUtx.
The third term  (\ref{tsup3}) is a throughput degradation term which accounts for the retransmission of the root of the CD graph, rather than transmitting new data packets.

\begin{thm}
\label{thm1}
For a given $(\mathcal U_P,\mathcal Y_S)$, $\forall \mathcal L_S$, $\forall N>0$, 
\begin{align}
\label{upbound}
&\bar T_{S,N}(\mathcal L_S)
\leq
\frac{\nu_P(J_P(N-1)+1)}{N}\bar T_{S,\nu_P(J_P(N-1)+1)}^{(up)},\ \forall \mathcal L_S,
\end{align}
where we have defined, for $j\geq 0$,
\begin{align}
\label{tsup}
&\bar T_{S,\nu_P(j+1)}^{(up)}
=
\bar T_{S,\nu_P(j+1)}^{(GA)}
\\&\label{tsup2}
-\frac{1}{\nu_P(j+1)}\sum_{k=0}^{j}
\left[1-\kappa_{P}^{(GA)}(\nu_P(k))\right]
\sum_{n=\nu_P(k)}^{\nu_P(k+1)-1}\chi(y_{S,n}\in\{5,7\})
\\&\label{tsup3}
-\frac{1}{\nu_P(j+1)}\sum_{k=0}^{j}\left(1-\prod_{n=\nu_P(k)}^{\nu_P(k+1)-1}\chi(y_{S,n}\neq 7)\right)\prod_{n=\nu_P(k)}^{\nu_P(k+1)-1}\chi(y_{S,n}\in\{2,4,5,7\}).
\end{align}
Above, 
\begin{align}
\bar T_{S,\nu_P(j+1)}^{(GA)}=
\frac{1}{\nu_P(j+1)}
\sum_{n=0}^{\nu_P(j+1)-1}\chi(y_{S,n}\in\{1,2,5,7\})
\end{align}
is the genie-aided (GA) throughput up to slot $\nu_P(j+1)$ at SUrx, assuming SUrx knows the PU packet in advance and thus removes its interference,
and
\begin{align}
\label{kappaga}
&\kappa_{P}^{(GA)}(\nu_P(k))
\triangleq
1-\prod_{n=\nu_P(k)}^{\nu_P(k+1)-1}\chi(y_{S,n}\in\{2,4,5\})
    \end{align}
    is the genie-aided (GA)
decoding outcome  at SUrx for the PU packet
$l_{P,n}=\nu_P(k)$ transmitted in slots $\nu_P(k)\leq n<\nu_P(k+1)$, assuming that the SU packets are known in advance and thus their interference can be removed.
\end{thm}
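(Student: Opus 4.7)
The plan is to fix an arbitrary labeling sequence $\mathcal L_S$ satisfying Assumption~\ref{assumptionalltx} and bound the decoded count $\sum_{l_S\geq 0}\kappa_{S,N}(l_S;\mathcal L_S)$ by decomposing the contribution cycle-by-cycle across the PU ARQ structure. Since $\kappa_{S,N}(l_S;\mathcal L_S)$ is non-decreasing in $N$ and $N\leq\nu_P(J_P(N-1)+1)$, it suffices to show $N\bar T_{S,N}(\mathcal L_S)\leq\nu_P(J_P(N-1)+1)\,\bar T_{S,\nu_P(J_P(N-1)+1)}^{(up)}$, so the three summands in (\ref{tsup})--(\ref{tsup3}) arise as a genie-aided ceiling minus two cycle-indexed losses. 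The overall argument is an offline, cycle-wise accounting: even with non-causal knowledge of $(\mathcal A_P,\mathcal L_P,\mathcal Y_S)$, the SU cannot exceed the stated bound.

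First I would establish the genie-aided term $\bar T_{S,\nu_P(j+1)}^{(GA)}$. Suppose a genie reveals every PU packet non-causally to SUrx; then the PU interference can always be cancelled, so $l_{S,n}$ is decodable in slot $n$ if and only if $y_{S,n}\in\{1,2,5,7\}$, by the definitions of the regions $\Gamma_{S,j}$. Because a given SU packet is credited at most once no matter how many slots target it, the decoded count cannot exceed $\sum_{n=0}^{\nu_P(j+1)-1}\chi(y_{S,n}\in\{1,2,5,7\})$, which is independent of $\mathcal L_S$ and yields (\ref{tsup}).

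The first correction, (\ref{tsup2}), handles cycles whose PU packet cannot be acquired at SUrx at all. By the dual genie argument (SU packets known), the PU packet from cycle $k$ is recoverable only if some slot in $[\nu_P(k),\nu_P(k+1))$ has $y_{S,n}\in\{1,3,6,7\}$, i.e., only if $\kappa_{P}^{(GA)}(\nu_P(k))=1$. When $\kappa_{P}^{(GA)}(\nu_P(k))=0$, every slot of the cycle has $y_{S,n}\in\{2,4,5\}$, so no CD operation---forward, backward, or chained across cycles---can ever produce $l_{P,\nu_P(k)}$ at SUrx. Consequently, SU packets transmitted in this cycle with $y_{S,n}\in\{5,7\}$ were overcounted in the GA bound and must be subtracted. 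I would formalize this via a monotonicity argument on the CD graph: a PU node lacking any incoming or bidirectional edge from a slot with $y\in\{1,3,6,7\}$ in its cycle is unreachable from every possible CD initiator.

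The second correction, (\ref{tsup3}), is the delicate one and I expect it to be the main obstacle. It activates in cycles where some slot has $y_{S,n}=7$ and no slot has $y_{S,n}\in\{1,3,6\}$. Under these conditions $l_{P,\nu_P(k)}$ can only be acquired via a $y_{S,n}=7$ link, which requires first decoding an SU packet adjacent to $l_{P,\nu_P(k)}$ in the CD graph; but those concurrent SU packets have $y_{S,n}\in\{5,7\}$ as well and are themselves undecodable without $l_{P,\nu_P(k)}$. Therefore exactly one slot of SU airtime within the relevant window must be spent on a retransmission that seeds the chain---a retransmission that delivers no fresh SU packet, since the target was already counted in the GA term for its original slot. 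Making this precise amounts to proving that \emph{any} labeling sequence loses at least one fresh-packet opportunity per qualifying cycle, which I would argue by contradiction: if none were lost, then every SU transmission in the cycle had a distinct label $n_S$, and since none of these labels appear elsewhere they cannot act as chain starters, yielding no decoded packet for this cycle's undecodable SU packets---contradicting the assumed gain. Summing this unit loss over qualifying cycles produces exactly (\ref{tsup3}); combining the three bounds with $N\leq\nu_P(J_P(N-1)+1)$ yields (\ref{upbound}).
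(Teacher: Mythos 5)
Your skeleton is the paper's: the same reduction via monotonicity of $\kappa_{S,n}(\cdot)$ and $N\leq\nu_P(J_P(N-1)+1)$, the same genie-aided necessary conditions for decodability of the PU packet (giving $\kappa_P(\nu_P(k))\leq\kappa_P^{(GA)}(\nu_P(k))$) and of the SU packets, and the same "lose at least one unit per qualifying cycle" idea for (\ref{tsup3}). Your first two terms are exactly the paper's bound $\kappa_S(l_S)\leq\sum_{n\in\mathcal N(l_S)}\bigl[\chi(y_{S,n}\in\{1,2\})+\kappa_P^{(GA)}(\nu_P(k))\chi(y_{S,n}\in\{5,7\})\bigr]$ rewritten as a GA ceiling minus a correction. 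So the route is not different; the question is whether your argument for the third term closes.

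It does not, as written. Your contradiction rests on the dichotomy "either a fresh-packet opportunity is lost, or every SU transmission in the cycle carries a distinct label $n_S$, and then none of these labels appear elsewhere, so the chain cannot be seeded." Both horns are faulty. A packet first transmitted (hence fresh) in a $y_{S,n}=7$ slot of cycle $k$ may be retransmitted and decoded in a \emph{later} cycle, after which it does seed the chain back into cycle $k$; freshness inside the cycle does not imply the label occurs nowhere else. Conversely, a retransmission inside cycle $k$ of a packet that is itself never decodable seeds nothing, and in that case the unit you must charge is not a "wasted retransmission slot" but the fact that the cycle's $y_{S,n}\in\{5,7\}$ packets are never decoded at all; note also that "fresh-packet opportunities" is not the currency of the bound, which compares the decoded-packet count against the good-slot count. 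The correct case split (the paper's) is on whether some SU packet occupying a $y_{S,n}=7$ slot of cycle $k$ is decodable using only signals received \emph{outside} those slots: if no, then neither those packets nor the cycle's PU packet is ever decoded, and each contributes at least one good slot to the GA count but zero packets; if yes, the seeding packet occupies at least two good slots yet is decoded once. Either way the cycle falls short of its GA slot-count by one, and the per-cycle label sets must be kept separate so that a seed shared by several cycles is not used to discharge more than one cycle's unit. Without this split, (\ref{tsup3}) is not established.
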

\begin{proof}
See Appendix \ref{proofofthm1}.
\end{proof}

In the second part of the proof, we lower bound $\bar T_{S,N}(\mathcal L^{(CD)})$
where $\mathcal L^{(CD)}$ is defined via the CD rules. We have the following theorem.
\begin{thm}
\label{thm2}
For a given $(\mathcal U_P,\mathcal Y_S)$, 
let the label sequence $\mathcal L^{(CD)}$ be generated according to the CD rules.
Then, $\forall N$, we have
\begin{align}
\label{lowerbound}
&\bar T_{S,N}(\mathcal L_S^{(CD)})
\geq
\frac{\nu_P(\underline J(N)+1)}{N}\bar T_{S,\nu_P(\underline J(N)+1)}^{(up)}
\end{align}
where
\begin{align}
\label{underlineJn}
\underline J(n)=
\max\{k:n\geq\nu_P(k+1)\cap Q_k=1\}\cup\{-1\},
\end{align}
and we have defined $Q_k\in\{0,1\}$ referred to the $k$th ARQ cycle as
\begin{align}
\label{Qk}
Q_k=\chi
\left(\exists\ \nu_P(k)\leq m_1<m_2<\nu_P(k+1):y_{S,m_1}\in\{1,3\},y_{S,m_2}\in\{1,2,5,7\}\right).
\end{align}
\end{thm}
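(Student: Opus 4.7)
The plan is to analyze the evolution of the CD graph and the throughput accrual under the CD labeling policy $\mathcal L_S^{(CD)}$ on a cycle-by-cycle basis, and to show that whenever $Q_k=1$ for some ARQ cycle $k$, by the end of cycle $k$ the CD scheme has physically decoded exactly the collection of SU packets that the upper bound $\bar T_{S,\nu_P(k+1)}^{(up)}$ accounts for. Since $\underline J(N)$ is the most recent such cycle completed by slot $N$, and $\bar T_{S,N}(\mathcal L_S^{(CD)})$ is non-decreasing in $N$, this yields the claimed lower bound after multiplication by $\nu_P(\underline J(N)+1)/N$.

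First, I would fix a cycle $k$ and trace how rules \textbf{R1}--\textbf{R4} shape the CD graph restricted to slots $\nu_P(k)\leq n<\nu_P(k+1)$. At the start of the cycle, rule \textbf{R4} retains only the portion of the graph reachable from the current root $\rho_S(\mathcal G_{\nu_P(k)})$. Within the cycle, the current PU packet $l_{P,n}=\nu_P(k)_P$ gets connected to the existing graph (via the root, by \textbf{R1}) or generates a new branch (by \textbf{R2}), according to whether it satisfies the reachability condition. I would verify that, under these rules, all the SU packets with $y_{S,n}\in\{1,2,5,7\}$ in cycle $k$ (which are the ones counted by $\bar T_{S,\nu_P(j+1)}^{(GA)}$), as well as those previously buffered and reachable from the root, end up as nodes in the subtree rooted at $\rho_S(\mathcal G_{n})$ at the end of the cycle.

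Next, I would establish the key claim: when $Q_k=1$, i.e.\ there exist $\nu_P(k)\leq m_1<m_2<\nu_P(k+1)$ with $y_{S,m_1}\in\{1,3\}$ and $y_{S,m_2}\in\{1,2,5,7\}$, the root of the CD graph is physically decoded by slot $m_2$. Indeed, $y_{S,m_1}\in\{1,3\}$ directly decodes the current PU packet $l_{P,m_1}$, so $\kappa_{P,n}(l_{P,n})=1$ for all $m_1<n<\nu_P(k+1)$. By rule \textbf{R3}, the SU then retransmits the root $\rho_S(\mathcal G_n)$ from slot $m_1+1$ onward. Since the PU packet is known at SUrx, FIC creates a clean channel, and the event $y_{S,m_2}\in\{1,2,5,7\}$ triggers successful decoding of the root at slot $m_2$. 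This releases the full CD potential, decoding the entire subtree reachable from the root, including every SU packet previously accumulated from earlier cycles that had not been discarded by \textbf{R4}.

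Then, I would match the resulting decoded count to the three terms of $\bar T_{S,\nu_P(k+1)}^{(up)}$: the genie-aided term $\bar T_{S,\nu_P(k+1)}^{(GA)}$ counts all SU transmissions with $y_{S,n}\in\{1,2,5,7\}$; the second correction (\ref{tsup2}) removes those with $y_{S,n}\in\{5,7\}$ lying in cycles where $\kappa_P^{(GA)}(\nu_P(\cdot))=0$, which are precisely the slots in which FIC is impossible because the PU packet is never decodable; and the third correction (\ref{tsup3}) accounts for the single root-retransmission per cycle of the type described in \textbf{R1} when $y_{S,n}=7$ occurs but no direct PU decoding happens. Inducting on $k$ over all cycles up to $\underline J(N)$, a careful bookkeeping of which packets are inherited across cycles (by \textbf{R4}) and which are discarded will show that the number of SU packets decoded by slot $\nu_P(\underline J(N)+1)$ is at least $\nu_P(\underline J(N)+1)\cdot\bar T_{S,\nu_P(\underline J(N)+1)}^{(up)}$, giving (\ref{lowerbound}).

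The main obstacle will be the bookkeeping across cycles: the root at the start of cycle $k+1$ depends on which SU packet in cycle $k$ acquired the largest CD potential (via \textbf{R4}), and one must argue that even when $Q_k=0$ (so the potential is not released in cycle $k$), no SU packet counted in $\bar T_{S,\nu_P(k+1)}^{(up)}$ is permanently lost, since it remains reachable from the new root and will be decoded in the next cycle $k'>k$ with $Q_{k'}=1$. This monotone-accumulation argument, together with the identification of the three upper-bound terms with the physical CD outcome under $Q_k=1$, completes the proof.
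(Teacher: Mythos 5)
Your overall architecture matches the paper's: reduce via monotonicity of $\kappa_{S,n}$ and $N\geq\nu_P(\underline J(N)+1)$ to a statement at the cycle boundary $\nu_P(\underline J(N)+1)$, and then use $Q_k=1$ to argue that the root is decoded within cycle $k$ (the PU packet is decoded at $m_1$ since $y_{S,m_1}\in\{1,3\}$, rule \textbf{R3} forces retransmission of the root over a clean channel, and $y_{S,m_2}\in\{1,2,5,7\}$ guarantees it is decoded, releasing the full CD potential). That key claim is correct and is exactly the paper's closing step.

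The genuine gap is in the middle: you assert that "a careful bookkeeping of which packets are inherited across cycles and which are discarded will show" that the number of physically decoded SU packets equals $\nu_P(\underline J(N)+1)\cdot\bar T^{(up)}_{S,\nu_P(\underline J(N)+1)}$, but this bookkeeping is precisely the hard part of the theorem and you give no mechanism for carrying it out. The paper's mechanism is a potential-function identity: with $M_{S,n}$ the number of SU packets decoded so far and $v_{S,n}$ the CD potential of the current root, one proves (Lemma \ref{lemmamecoioni}, by an exhaustive per-slot case analysis over the five possible configurations of the current PU packet relative to the root — disconnected, $l_P\rightarrow\rho_S$, $l_P\leftrightarrow\rho_S$, $\rho_S\rightarrow l_P$, known) an exact recursion for $M_{S,n+1}+v_{S,n+1}-M_{S,n}-v_{S,n}$, which telescopes (Lemma \ref{lemn}) to $M_{S,\nu_P(j+1)}+v_{S,\nu_P(j+1)}-1=\nu_P(j+1)\,\bar T^{(up)}_{S,\nu_P(j+1)}$ for every $j$, regardless of whether $Q_j=1$. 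The inequality then reduces to showing $v_{S,\nu_P(k+1)}\leq 1$ when $Q_k=1$, which is your key claim. Without something equivalent to this identity, your term-by-term matching of the three components of $\bar T^{(up)}$ to physical decoding events does not go through as described: for instance, an SU packet with $y_{S,n}=5$ transmitted while $l_{P,n}\rightarrow\rho_S(\mathcal G_n)$ is \emph{not} reachable from the root at that point (it hangs off the still-undecoded PU packet), and whether it is ever counted depends on a later event $y_{S,m}\in\{1,3,6,7\}$ in the same cycle — exactly the structure of the second correction term — so your claim that all $y_{S,n}\in\{1,2,5,7\}$ packets "end up as nodes in the subtree rooted at $\rho_S(\mathcal G_n)$" is not accurate as stated, and the third correction term tracks the double-counting of the retransmitted root in cycles satisfying $B_{P,k}$ rather than "the single root-retransmission of the type described in \textbf{R1}." You would need to supply the per-slot accounting that the paper isolates in Lemmas \ref{lemmamecoioni} and \ref{lemn} before the proof is complete.
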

\begin{proof}
See Appendix \ref{proofofthm2}.
\end{proof}
$\underline J(n)$ is the index of the last ARQ cycle, finishing before slot $n$, with the following properties:
there exist two slots $m_1$ and $m_2$ in the $k$th ARQ cycle such that
$y_{S,m_1}\in\{1,3\}$, so that the PU packet is successfully decoded in slot $m_1$ and its interference can be removed.
Moreover, $y_{S,m_2}\in\{1,2,5,7\}$. Since the PU packet is known by SUrx in slots
$m_1<n<\nu_P(k+1)$, the root of the CD graph is transmitted according to CD rule {\bf R3}, until it is decoded successfully, which is
guaranteed by the condition $y_{S,m_2}\in\{1,2,5,7\}$.
Therefore, $Q_k=1$  guarantees that the CD potential of the CD graph is fully released by the end of the $k$th ARQ cycle.
However, note that $Q_k=1$ is a sufficient, but not necessary, condition for the release of the CD potential, hence the lower bound may be loose for general $N$.

By combining Theorems \ref{thm1} and \ref{thm2}, and generating the labeling sequence according to the CD rules,
we obtain
\begin{align}
\label{fjdjsjdj}
\frac{\nu_P(\underline J(N)+1)}{N}\bar T_{S,\nu_P(\underline J(N)+1)}^{(up)}
\leq
\bar T_{S,N}(\mathcal L_S^{(CD)})
\leq
\frac{\nu_P(J_P(N-1)+1)}{N}\bar T_{S,\nu_P(J_P(N-1)+1)}^{(up)}.
\end{align}
Notice that, under the assumption that
the PU starts a new ARQ cycle infinitely often (so that $\nu_P(J(N-1)+1)=N$ infinitely often when $N\to\infty$),
and that the condition $Q_k=1$ occurs infinitely often (so that $\nu_P(\underline J(N)+1)=N$ infinitely often when $N\to\infty$),
we obtain the limits
\begin{align}
\label{aaa}
\lim_{N\to\infty}\frac{\nu_P(J(N-1)+1)}{N}=
\lim_{N\to\infty}\frac{\nu_P(\underline J(N)+1)}{N}=1,
\end{align}
and
\begin{align}
\label{bbb}
\lim_{N\to\infty}
\bar T_{S,\nu_P(\underline J(N)+1)}^{(up)}(\mathcal L_S^{(CD)})
=
\lim_{N\to\infty}
\bar T_{S,\nu_P(J(N-1)+1)}^{(up)}
\triangleq
\bar T_{S,\infty}^{(up)}.
\end{align}
Letting $N\to\infty$ in (\ref{fjdjsjdj}), we thus obtain 
\begin{align}
\label{ccc}
 \lim_{j\to\infty}\bar T_{S,\nu_P(j+1)}(\mathcal L_S^{(CD)})
=\bar T_{S,\infty}^{(up)},
\end{align}
so that the label sequence $\mathcal L^{(CD)}$ converges to the upper bound, and is thus optimal.
Theorem \ref{CD} is thus proved.
\end{proof}

\section{Proof of Theorem \ref{thm1}}
\label{proofofthm1}
\begin{proof}
Since $\{\kappa_{S,n}(l_S;\mathcal L_S),n\geq 0\}$ is a  non-decreasing sequence,
and $N\leq \nu_P(J_P(N-1)+1)$,
 we have
\begin{align}
\label{}
\sum_{l_S=0}^{\infty}\kappa_{S,N}(l_S;\mathcal L_S)\leq
\sum_{l_S=0}^{\infty}\kappa_{S,\nu_P(J_P(N-1)+1)}(l_S;\mathcal L_S),
\end{align}
and therefore, from (\ref{qweqweqwe}),
\begin{align}
\label{ineq10}
\bar T_{S,N}(\mathcal L_S)
\leq\frac{\nu_P(J_P(N-1)+1)}{N}
\bar T_{S,\nu_P(J_P(N-1)+1)}(\mathcal L_S).
\end{align}
In the following, we prove that
\begin{align}
\label{ineq11}
\bar T_{S,\nu_P(j+1)}(\mathcal L_S)
\leq
\bar T_{S,\nu_P(j+1)}^{(up)},\ \forall j\geq0,
\end{align}
which proves the theorem. In fact,
using the inequality (\ref{ineq11}) in (\ref{ineq10}) with $j=J_P(N-1)$,
we obtain the inequality (\ref{upbound}) in the statement of the theorem,
 so that (\ref{ineq10}) and (\ref{ineq11}) imply (\ref{upbound}).

Therefore, in the following we prove the inequality  (\ref{ineq11}) for a generic $j\geq 0$.
Since we consider fixed $j\geq 0$ and $\mathcal L_S$,
in the following analysis we drop the dependence on $\nu_P(j+1)$ and on $\mathcal L_S$.

\subsection{Necessary condition for the decodability of the PU packets}
We first determine a necessary condition for the decodability of the PU packets by SUrx. Let $k\leq j$
and consider the PU packet $l_{P,n}=\nu_P(k)$ transmitted by PUtx in slots $\nu_P(k)\leq n<\nu_P(k+1)$.
$l_{P,n}=\nu_P(k)$ cannot be decoded
if
\begin{align}
A_{P,k}\equiv y_{S,n}\in\{2,4,5\},\ \forall \nu_P(k)\leq n<\nu_P(k+1)
\end{align}
holds true.
This event is independent of $\mathcal L_S$.
In fact, if $A_{P,k}$ holds true, the channel conditions are such that the PU packet $\nu_P(k)$ cannot be decoded even in the genie-aided case where
 the interference from SUtx is known and is removed.
 Therefore, 
 \begin{align}
A_{P,k}\Rightarrow\kappa_{P}(\nu_P(k))=0,
\end{align}
yielding the inequality
\begin{align}
\label{ineqdecodp}
&\kappa_{P}(\nu_P(k))\leq
1-\chi(A_{P,k})
\triangleq\kappa_{P}^{(GA)}(\nu_P(k)),
    \end{align}
    where $\kappa_{P}^{(GA)}(\nu_P(k))$ is the genie-aided (GA)
decoding outcome for the PU packets at SUrx,
which can be explicitly written as in (\ref{kappaga}).

\subsection{Necessary condition for the decodability of the SU packets}
We  now analyze the decodability of the SU packets at SUrx.
Let
\begin{align}
\mathcal N(l_S)\equiv\{n<\nu_P(j+1):l_{S,n}=l_S\}
\end{align}
be the set of slots where  $l_S$ is transmitted.
Then,
$l_S$ cannot be decoded within the first $\nu_P(j+1)$ slots 
if, for all $n\in\mathcal N(l_S)$,
either of the following events occur:
\begin{itemize}
  \item  $\kappa_{P}(l_{P,n})=0$ and $y_{S,n}\notin\{1,2\}$, \emph{i.e.}, the PU packet transmitted
  in slot $n$ cannot be decoded by slot $\nu_P(j+1)$, its interference cannot be removed,
and $l_S$ cannot be decoded jointly with $l_{P,n}$, nor by treating $l_{P,n}$ as noise.
 \item $\kappa_{P}(l_{P,n})=1$ and $y_{S,n}\in\{1,2,5,7\}$, \emph{i.e.}, the PU packet transmitted
  in slot $n$ can be decoded by slot $\nu_P(j+1)$, its interference can be removed,
but the direct link $\gamma_s(n)$ is too weak to make $l_S$ decodable, even after removing the interference from $l_{P,n}$.
    \end{itemize}
    Mathematically, 
        \begin{align}
&
\left[ (1-\kappa_{P,}(l_{P,n}))\chi(y_{S,n}\notin\{1,2\})\right.\nonumber\\&\left.+
     \kappa_{P}(l_{P,n})\chi(y_{S,n}\notin\{1,2,5,7\})
     \right]=1, \forall n\in\mathcal N(l_S)
     \nonumber\\&
     \Rightarrow 
     \kappa_{S}(l_S)=0,
    \end{align}
or equivalently 
        \begin{align}
        \label{conditiondellaminchia}
&      \kappa_{S}(l_S)=1
     \\&
     \Rightarrow 
\exists n\in\mathcal N(l_S):
\left[\chi(y_{S,n}\in\{1,2\})+\kappa_{P}(l_{P,n})\chi(y_{S,n}\in\{5,7\})
     \right]=1.
     \nonumber
    \end{align}
    Letting
    \begin{align}
    \label{tau}
\tau_{S}(l_S)\equiv\left\{n\in\mathcal N(l_S):\left[\chi(y_{S,n}\in\{1,2\})+\kappa_{P}(\nu_P(k))\chi(y_{S,n}\in\{5,7\})
     \right]=1
\right\},
\end{align}
    we can rewrite the logical relationship (\ref{conditiondellaminchia}) as 
    \begin{align}
&      \kappa_{S}(l_S)=1
     \Rightarrow 
|\tau_{S}(l_S)|>0,
     \nonumber
    \end{align}
so that $|\tau_{S}(l_S)|>0$ is a necessary condition for the decodability of $l_S$ by SUrx.
We thus obtain the inequality 
    \begin{align}
    \label{firtstqnet}
&      \kappa_{S}(l_S)\leq \chi(|\tau_{S}(l_S)|>0).
    \end{align}
    Note that $\tau_{S}(l_S)$ represents the set of slots where $l_S$ can possibly be decoded; 
    outside of this set, \emph{e.g.}, in slot $n\in\mathcal N(l_S)\setminus\tau_{S}(l_S)$, $l_S$ cannot be decoded due either to the fact that 
    $l_{P,n}$ cannot be decoded by SUrx ($\kappa_{P}(l_{P,n})=0$) and $y_{S,n}\notin\{1,2\}$,
    or to the fact that 
    $l_{P,n}$ can be decoded by SUrx ($\kappa_{P}(l_{P,n})=1$),
     its interference can be removed from the received signal, but $y_{S,n}\notin\{1,2,5,7\}$.
    In particular, if $\tau_{S}(l_S)\equiv\emptyset$, then  the SU packet $l_{S}$ cannot be decoded successfully by SUrx
    and $\kappa_{S}(l_S)=0$.

We can further bound (\ref{firtstqnet}) as follows. Let
    \begin{align}
\tau_{S}^{(GA)}(l_S)\equiv\left\{n\in\mathcal N(l_S):
\left[\chi(y_{S,n}\in\{1,2\})+\kappa_{P}^{(GA)}(\nu_P(k))\chi(y_{S,n}\in\{5,7\})
     \right]=1
\right\},
\end{align}
obtained by replacing the decodability of the PU packet $\nu_P(k)$, 
$\kappa_{P}(\nu_P(k))$, with the genie-aided decodability $\kappa_{P}^{(GA)}(\nu_P(k))$.
Using (\ref{ineqdecodp}), we have that 
\begin{align}
\tau_{S}(l_S)\subseteq\tau_{S}^{(GA)}(l_S),
\end{align}
and thus, using (\ref{firtstqnet}), 
\begin{align}
\label{necessarycondS}
&      \kappa_{S}(l_S)\leq \chi(|\tau_{S}(l_S)|>0)\leq \chi(|\tau_{S}^{(GA)}(l_S)|>0).
    \end{align}

Let, for each slot $\nu_P(k)\leq n<\nu_P(k+1)$,
\begin{align}
\label{osn}
o_{S,n}=\chi(y_{S,n}\in\{1,2\})+\kappa_{P}^{(GA)}(\nu_P(k))\chi(y_{S,n}\in\{5,7\}).
\end{align}
Then, we can rewrite 
    \begin{align}
    \label{ciaociao}
\tau_{S}^{(GA)}(l_S)\equiv\left\{n\in\mathcal N(l_S):o_{S,n}=1
\right\}
\end{align}
and $|\tau_{S}^{(GA)}(l_S)|=\sum_{n\in\mathcal N(l_S)}o_{S,n}$, so that
\begin{align}
\label{ineqdellaminchia}
&      \kappa_{S}(l_S)
\leq \chi\left(\sum_{n\in\mathcal N(l_S)}o_{S,n}>0\right)
\leq \sum_{n\in\mathcal N(l_S)}o_{S,n}.
    \end{align}

\subsection{Analysis of the upper bound on the throughput}
Let $k\leq j$
and consider the PU packet $l_{P,n}=\nu_P(k)$ transmitted by PUtx in slots $\nu_P(k)\leq n<\nu_P(k+1)$.
Consider the condition
\begin{align}
 \label{case2}
B_{P,k}\equiv\left\{y_{S,n}\in\{2,4,5,7\},\ \forall \nu_P(k)\leq n<\nu_P(k+1)\cap \bar A_{P,k}\right\},
\end{align}
where $\bar A_{P,k}$ denotes the complement of the event $A_{P,k}$.
Note that the condition (\ref{case2}) excludes the genie-aided condition for non-decodability, $A_{P,k}$,
so that $l_{P,n}=\nu_P(k)$ may indeed be decoded in the genie-aided case, if $B_{P,k}$ holds.
In fact, $B_{P,k}$ implies the existence of $n$ such that $y_{S,n}=7$, so that, in principle, $l_{P,n}$ can be decoded after removing the interference from
the SU packet $l_{S,n}$.

Let
\begin{align}
\mathcal N_{k}^{(B,7)}\equiv \{\nu_P(k)\leq n<\nu_P(k+1):y_{S,n}=7\cap B_{P,k}\}
\end{align}
be the set of slots in the $k$th ARQ window such that $B_{P,k}$ holds and $y_{S,n}=7$.
Therefore, $l_{P,n}=\nu_P(k)$ can only be decoded by removing the interference from $l_{S,n}$, for some $n\in\mathcal N_{B,k}$.
Equivalently, in order, first such $l_{S,n}$ is decoded; then, its interference is removed from slot $n$; finally $l_{P,n}=\nu_P(k)$
is decoded without interference from the SU signal.
Note that, if $B_{P,k}$ does not hold for the $k$th ARQ cycle, then necessarily $\mathcal N_{k}^{(B,7)}\equiv\emptyset$.
Finally, let 
\begin{align}
\mathcal L_{S,k}^{(B,7)}\equiv \{l_{S,n},\forall n\in \mathcal N_{k}^{(B,7)}\}
\end{align}
be the set of SU packets transmitted in these slots,
where $\mathcal L_{S,k}^{(B,7)}\equiv\emptyset$ if the condition $B_{P,k}$ does not hold.
Then, we have
\begin{align}
\label{docoinistelabels}
&\sum_{l_S}\kappa_{S}(l_S)
=
\sum_{l_S\in\cup_k\mathcal L_{S,k}^{(B,7)}}\kappa_{S}(l_S)
+
\sum_{l_S\notin\cup_k\mathcal L_{S,k}^{(B,7)}}\kappa_{S}(l_S)
\nonumber\\&
\leq
\sum_{l_S\in\cup_k\mathcal L_{S,k}^{(B,7)}}\kappa_{S}(l_S)
+
\sum_{l_S\notin\cup_k\mathcal L_{S,k}^{(B,7)}}\sum_{n\in\mathcal N(l_S)}o_{S,n},
\end{align}
where we have used the inequality (\ref{ineqdellaminchia}).
We now analyze the decodability of $l_S\in\mathcal L_{S,k}^{(B,7)}$, $\kappa_{S,\nu_P(j+1)}(l_S)$.
We define, for $l_S\in\mathcal L_{S,k}^{(B,7)}$,
$\kappa_{S}^{-k}(l_S)$ as the decodability of $l_S$ by restricting the observation interval to
 the set of slots $\{0,1,\dots,\nu_P(j+1)-1\}\setminus \mathcal N_{k}^{(B,7)}$, \emph{i.e.},
 $\kappa_{S}^{-k}(l_S)=1$ if $l_S$ can be decoded successfully without the signal received in slots $\mathcal N_{k}^{(B,7)}$,
 and $\kappa_{S}^{-k}(l_S)=0$ otherwise.
 Clearly,
\begin{align}
\kappa_{S}^{-k}(l_S)\leq \kappa_{S}(l_S),
\end{align}
since the decodability improves if a larger number of slots is considered in the decoding process.
Then, we have the following cases. If
\begin{align}
\label{casep1}
\sum_{l_S\in\mathcal L_{S,k}^{(B,7)}}\kappa_{S}^{-k}(l_S)=0,
\end{align}
hence $\kappa_{S}^{-k}(l_S)=0,\forall l_S\in\mathcal L_{S,k}^{(B,7)}$,
then no packets $l_S\in\mathcal L_{S,k}^{(B,7)}$ can be decoded without considering the signals received in slots $\mathcal N_{k}^{(B,7)}$.
It follows that the PU packet transmitted in the $k$th ARQ cycle, $\nu_P(k)$, cannot be decoded, hence 
its interference cannot be removed, and therefore
\begin{align}
\kappa_{S}(l_S)=0,\forall l_S\in\mathcal L_{S,k}^{(B,7)}.
\end{align}
Note that, since $n\in\tau_{S}^{(GA)}(l_{S,n}),\forall n\in\mathcal N_{k}^{(B,7)}$,
hence $|\tau_{S}^{(GA)}(l_{S})|\geq 1,\forall l_S\in\mathcal L_{S,k}^{(B,7)}$,
it follows that 
\begin{align}
0=\sum_{l_S\in\mathcal L_{S,k}^{(B,7)}}\kappa_{S}(l_S)
\leq
\sum_{l_S\in\mathcal L_{S,k}^{(B,7)}}|\tau_{S}^{(GA)}(l_{S})|
-1
=
\sum_{l_S\in\mathcal L_{S,k}^{(B,7)}}\sum_{n\in\mathcal N(l_S)}o_{S,n}
-1.
\end{align}
On the other hand, if 
\begin{align}
\label{casep2}
\sum_{l_S\in\mathcal L_{S,k}^{(B,7)}}\kappa_{S}^{-k}(l_S)>0,
\end{align}
it follows that there exists some $\bar l_S\in\mathcal L_{S,k}^{(B,7)}$, transmitted in slot $\bar n\in\mathcal N_{k}^{(B,7)}$, such that 
$\kappa_{S}^{-k}(\bar l_S)=1$.
This SU packets can thus be decoded successfully without considering the slots $\mathcal N_{k}^{(B,7)}$ in the decoding process.
If this condition holds, then the PU packet transmitted in the $k$th ARQ cycle, $\nu_P(k)$, can be decoded after removing the interference from
such $\bar l_S$. All the SU packets $l_S\in\mathcal L_{S,k}^{(B,7)}$ can then be decoded after removing the interference from the
PU packet $\nu_P(k)$, since $y_{S,n}=7,\forall n\in\mathcal N_{k}^{(B,7)}$. It follows that
\begin{align}
\kappa_{S}(l_S)=1,\ \forall l_S\in\mathcal L_{S,k}^{(B,7)},
\end{align}
and therefore
\begin{align}
\label{asfsgdfghfsg}
\sum_{l_S\in\mathcal L_{S,k}^{(B,7)}}\kappa_{S}(l_S)
=
\sum_{l_S\in\mathcal L_{S,k}^{(B,7)}\setminus\{\bar l_S\}}\kappa_{S}(l_S)
+
\kappa_{S}(\bar l_S)
\leq
\sum_{l_S\in\mathcal L_{S,k}^{(B,7)}\setminus\{\bar l_S\}}\sum_{n\in\mathcal N(l_S)}o_{S,n}
+
\kappa_{S}(\bar l_S),
\end{align}
where we have used the inequality (\ref{ineqdellaminchia}),
and  $\kappa_{S}(\bar l_S)=1$.
Note that, for the SU packet $\bar l_S$, we have  $|\tau_{S}^{(GA)}(\bar l_{S})|>1$.
In fact, assuming that $\bar l_S$ is transmitted in slot $\bar n\in\mathcal N_{k}^{(B,7)}$,
then necessarily $\bar n\in\tau_{S}^{(GA)}(\bar l_{S})$, and thus  $\mathcal N_{k}^{(B,7)}\cap \tau_{S}^{(GA)}(\bar l_{S})\not\equiv\emptyset$.
Moreover, $\tau_{S}^{(GA)}(\bar l_{S})\setminus\mathcal N_{B,k}\not\equiv\emptyset$,
since $\bar l_S$ must be decodable without considering the slots $\mathcal N_{B,k}$, hence it 
 must satisfy the necessary condition (\ref{necessarycondS}) outside this set.
It follows that $1=\kappa_{S}(\bar l_S)\leq \sum_{n\in\mathcal N(\bar l_S)}o_{S,n} -1$,
hence, substituting in (\ref{asfsgdfghfsg}), we obtain
 the inequality
\begin{align}
\sum_{l_S\in\mathcal L_{S,k}^{(B,7)}}\kappa_{S}(l_S)
\leq
\sum_{l_S\in\mathcal L_{S,k}^{(B,7)}}\sum_{n\in\mathcal N(l_S)}o_{S,n}
-1,
\end{align}
which thus holds for both cases (\ref{casep1}) and (\ref{casep2}).
In general, since $\mathcal L_{S,k}^{(B,7)}\equiv\emptyset$ if the condition $B_{P,k}$ does not hold, for each $0\leq k\leq j$ we obtain the inequality
\begin{align}
\sum_{l_S\in\mathcal L_{S,k}^{(B,7)}}\kappa_{S}(l_S)
\leq
\sum_{l_S\in\mathcal L_{S,k}^{(B,7)}}\sum_{n\in\mathcal N(l_S)}o_{S,n}
-\chi(B_{P,k}).
\end{align}
Finally, substituting in (\ref{docoinistelabels}), we obtain
\begin{align}
&\sum_{l_S}\kappa_{S}(l_S)
=
\sum_{l_S\in\cup_k\mathcal L_{S,k}^{(B,7)}}\kappa_{S}(l_S)
+
\sum_{l_S\notin\cup_k\mathcal L_{S,k}^{(B,7)}}\kappa_{S}(l_S)
\nonumber\\&
\leq
\sum_{l_S\in\cup_k\mathcal L_{S,k}^{(B,7)}}\sum_{n\in\mathcal N(l_S)}o_{S,n}
-\sum_{k=0}^j\chi(B_{P,k})
+
\sum_{l_S\notin\cup_k\mathcal L_{S,k}^{(B,7)}}\sum_{n\in\mathcal N(l_S)}o_{S,n}
\nonumber\\&
=
\sum_{l_S}\sum_{n\in\mathcal N(l_S)}o_{S,n}
-\sum_{k=0}^j\chi(B_{P,k}).
\end{align}
Using the fact that $\sum_{l_S}\sum_{n\in\mathcal N(l_S)}o_{S,n}=\sum_{n=0}^{\nu_P(j+1)-1}o_{S,n}$,
we then obtain the inequality
\begin{align}
\label{docoioni}
&\sum_{l_S}\kappa_{S}(l_S)
\leq\sum_{n=0}^{\nu_P(j+1)-1}o_{S,n}-\sum_{k=0}^j\chi(B_{P,k}).
\end{align}
The expression (\ref{tsup}) is finally obtained by expressing 
 the condition $B_{P,k}$ as 
\begin{align}
\chi(B_{P,k})
=
\left(1-\prod_{n=\nu_P(k)}^{\nu_P(k+1)-1}\chi(y_{S,n}\neq 7)\right)\prod_{n=\nu_P(k)}^{\nu_P(k+1)-1}\chi(y_{S,n}\in\{2,4,5,7\}),
\end{align}
and by replacing the expression (\ref{osn}) in (\ref{docoioni}).
By dividing each side of (\ref{docoioni}) by $\nu_P(j+1)$, we finally obtain the inequality (\ref{ineq11}), thus proving the theorem.
\end{proof}

\section{Proof of Theorem \ref{thm2}}
\label{proofofthm2}
\begin{proof}
Note that $\mathcal L_S^{(CD)}$ is a causal function of $(\mathcal U_P,\mathcal Y_S)$, according to the CD rules, \emph{i.e.},
$l_{S,n}$ is a function of $\{(u_{P,m},y_{S,m}),m=0,1,\dots,n\}$.

Since $\{\kappa_{S,n}(l_S;\mathcal L_S^{(CD)}),n\geq 0\}$ is a  non-decreasing sequence,
and $N\geq \nu_P(\underline J(N)+1)$ from the definition of $\underline J(n)$ in (\ref{underlineJn}),
 we have
\begin{align}
\label{}
\sum_{l_S=0}^{\infty}\kappa_{S,N}(l_S;\mathcal L_S^{(CD)})\geq
\sum_{l_S=0}^{\infty}\kappa_{S,\nu_P(\underline J(N)+1)}(l_S;\mathcal L_S^{(CD)}),
\end{align}
and therefore, from (\ref{lowerbound}),
\begin{align}
\label{staroba2}
&\bar T_{S,N}(\mathcal L_S^{(CD)})
\geq
\frac{\nu_P(\underline J(N)+1)}{N}\bar T_{S,\nu_P(\underline J(N)+1)}(\mathcal L_S^{(CD)}).
\end{align}
In the following, we show that 
\begin{align}
\label{remaintoprove}
&\bar T_{S,\nu_P(\underline J(N)+1)}(\mathcal L_S^{(CD)})
\geq
\bar T_{S,\nu_P(\underline J(N)+1)}^{(up)},\ \forall N\geq 0.
\end{align}
By combining (\ref{staroba2}) and (\ref{remaintoprove}), the inequality (\ref{lowerbound}) then directly follows and the theorem is proved.
In order to prove (\ref{remaintoprove}), we use the following lemma.
\begin{lemma}
\label{lemn}
Let $j\geq 0$ and $\nu_P(j)<n\leq\nu_P(j+1)$. Let
\begin{align}
\label{vsn}
v_{S,n}\triangleq v_S(\rho_S(\mathcal G_n);\mathcal G_n)
\end{align}
be the CD potential of the root of the CD graph at the beginning of slot $n$
and
\begin{align}
\label{Msn}
M_{S,n}=\sum_{l_S}\kappa_{S,n}(l_S;\mathcal L_S^{(CD)})
\end{align}
be the number of SU packets successfully decoded up to the beginning of slot $n$. Then,
\begin{align}
\label{fkakgj}
&M_{S,n}+v_{S,n}-1
=
\sum_{m=0}^{n-1}\chi(y_{S,m}\in\{1,2,5,7\})
\\&\nonumber
-\sum_{k=0}^{j-1}\left[1-\kappa_{P}^{(GA)}(\nu_P(k))\right]\sum_{m=\nu_P(k)}^{\nu_P(k+1)-1}\chi(y_{S,m}\in\{5,7\})
\\&\nonumber
-\prod_{m=\nu_P(j)}^{n-1}\chi(y_{S,m}\in\{2,4,5\})\sum_{m=\nu_P(j)}^{n-1}\chi(y_{S,m}\in\{5,7\})
\\&\nonumber
-\sum_{k=0}^{j-1}\left(1-\prod_{m=\nu_P(k)}^{\nu_P(k+1)-1}\chi(y_{S,m}\neq 7)\right)\prod_{m=\nu_P(k)}^{\nu_P(k+1)-1}\chi(y_{S,m}\in\{2,4,5,7\})
\\&\nonumber
-\left(1-\prod_{m=\nu_P(j)}^{n-1}\chi(y_{S,m}\neq 7)\right)\prod_{m=\nu_P(j)}^{n-1}\chi(y_{S,m}\in\{2,4,5,7\}).
\end{align}
Moreover, in the special case $n=\nu_P(j+1)$,
\begin{align}
\label{byinspection}
\bar T_{S,\nu_P(j+1)}^{(up)}
=\frac{M_{S,\nu_P(j+1)}+v_{S,\nu_P(j+1)}-1}{\nu_P(j+1)}.
\end{align}
\qed
\end{lemma}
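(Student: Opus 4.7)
The plan is to prove equation (\ref{fkakgj}) by induction on $n$ and then obtain (\ref{byinspection}) by direct inspection of the resulting formula. Throughout, we operate under Assumption \ref{assumptionalltx} and assume $\mathcal L_S = \mathcal L_S^{(CD)}$ is generated by the CD rules \textbf{R1}--\textbf{R4}. The base case $n=0$ is immediate: the CD graph $\mathcal G_0$ contains only the untransmitted packets $0_S$ and $0_P$ with no edges, so $\rho_S(\mathcal G_0)=0_S$ with $v_{S,0}=1$, and $M_{S,0}=0$; the sums on the right-hand side of (\ref{fkakgj}) are empty, matching $M_{S,0}+v_{S,0}-1=0$.

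For the inductive step, I would fix $j\geq 0$ and track, for each slot $\nu_P(j)\leq n<\nu_P(j+1)$, the increment $\Delta_n\triangleq(M_{S,n+1}+v_{S,n+1})-(M_{S,n}+v_{S,n})$ as a function of the seven possible outcomes $y_{S,n}$ and of the CD rule applied in slot $n$. The CD rules guarantee that in every slot SUtx transmits either the root $\rho_S(\mathcal G_n)$ (under \textbf{R1} or \textbf{R3}) or a fresh packet $n_S$ (under \textbf{R2}); this sharply restricts how the graph evolves. For each value $y_{S,n}\in\{1,\dots,7\}$ I would enumerate: (i) whether the current PU packet $l_{P,n}$ becomes decoded (directly or via CD), (ii) whether the root changes, (iii) whether $M_{S,n}$ increases (through a direct decode or through chain decoding released by $l_{P,n}$), and (iv) whether $v_{S,n}$ increases (through a new edge from $l_{P,n}$ to the root or a new leaf attached via a retransmission), stays unchanged, or decreases (through release of the potential once the root is decoded). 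In parallel, I would compute the corresponding increment on the right-hand side of (\ref{fkakgj}), and verify term by term that the two match. The cases $y_{S,n}\in\{5,7\}$ (which grow the sum $\sum\chi(y_{S,m}\in\{5,7\})$ in the third and second terms) and $y_{S,n}=7$ (which flips the indicator in the fifth and fourth terms) are precisely the ones that capture the ``wasted'' degradation when the PU packet is never released.

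A separate sub-case handles the boundary transition from $n=\nu_P(j+1)-1$ to $n+1=\nu_P(j+1)$, at which a new ARQ cycle begins: rule \textbf{R4} prunes all nodes unreachable from the chosen root, so I need to argue that the portion discarded contributes zero to $M_{S,\nu_P(j+1)}+v_{S,\nu_P(j+1)}$. The crucial observation is that rule \textbf{R4} preserves the root of largest potential, so $v_{S,\nu_P(j+1)}$ equals the CD potential of the retained root, and any pruned SU packet outside the reachable component was by definition not counted in $v_{S,\nu_P(j+1)}$ (and never will be decoded, since its connecting PU packet is no longer retransmitted). Simultaneously, the third and fifth terms in (\ref{fkakgj}) (partial-cycle terms) roll over into the second and fourth terms (completed-cycle terms) with the same value, because at $n=\nu_P(j+1)$ the indicator $\prod\chi(y_{S,m}\in\{2,4,5\})$ equals exactly $1-\kappa_P^{(GA)}(\nu_P(j))$ by definition (\ref{kappaga}), and analogously for the $B_{P,j}$ indicator.

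The main obstacle, as I see it, is the careful bookkeeping of $v_{S,n+1}-v_{S,n}$ in the $y_{S,n}\in\{5,6,7\}$ cases when rule \textbf{R2} fires: attaching a new leaf $n_S$ to $l_{P,n}$ can either enlarge the root's reachable set (if $l_{P,n}$ is reachable from the current root, increasing $v_{S,n}$ by $1$) or not (leaving $v_{S,n}$ unchanged and creating a ``parallel branch'' that may later be promoted to root by \textbf{R4}). I would handle this by using the virtual-system interpretation suggested after Theorem \ref{thm5}: the quantity $M_{S,n}+v_{S,n}$ counts one unit for every SU packet that is either already decoded or reachable from the current root, which gives a clean invariant to track across the inductive step. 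Finally, specializing (\ref{fkakgj}) to $n=\nu_P(j+1)$ and dividing by $\nu_P(j+1)$ yields (\ref{byinspection}) by direct comparison with the definition (\ref{tsup})--(\ref{tsup3}) of $\bar T_{S,\nu_P(j+1)}^{(up)}$: the first term matches the genie-aided throughput $\bar T_{S,\nu_P(j+1)}^{(GA)}$, the second and third terms combine to give (\ref{tsup2}) over all cycles $0,\dots,j$, and the fourth and fifth terms combine to give (\ref{tsup3}).
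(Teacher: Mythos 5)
Your proposal is correct and follows essentially the same route as the paper: the paper proves (\ref{fkakgj}) by induction on $n$ via a separate one-step recursion for $M_{S,n}+v_{S,n}$ (Lemma \ref{lemmamecoioni}), whose proof is exactly the case analysis you describe — organized by the connectivity state of the current PU packet relative to the root (which determines which CD rule fires), with sub-cases on $y_{S,n}$, and using the same invariant that $M_{S,n}+v_{S,n}$ counts decoded-or-reachable SU packets. The derivation of (\ref{byinspection}) by specializing to $n=\nu_P(j+1)$, where the partial-cycle terms coincide with the $k=j$ completed-cycle terms, and comparing with (\ref{tsup})--(\ref{tsup3}) is also how the paper concludes.
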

\begin{proof}
The expression (\ref{fkakgj}) is obtained by using Lemma \ref{lemmamecoioni} in Appendix \ref{proofoflemmamecoioni}  and induction on $n$.
The expression (\ref{byinspection}) is obtained by letting $n=\nu_P(j+1)$ and by inspection of (\ref{tsup}).
\end{proof}

Using the definition of $M_{S,n}$ in (\ref{Msn}) and (\ref{qweqweqwe}),
 we can rewrite 
\begin{align}
\bar T_{S,n}(\mathcal L_S^{(CD)})=\frac{M_{S,n}}{n}.
\end{align}
Therefore, using (\ref{byinspection}), (\ref{remaintoprove}) is equivalent to
\begin{align}
\bar T_{S,\nu_P(\underline J(N)+1)}(\mathcal L_S^{(CD)})=
\frac{M_{S,\nu_P(\underline J(N)+1)}}{\nu_P(\underline J(N)+1)}
\geq
\bar T_{S,\nu_P(\underline J(N)+1)}^{(up)}
=\frac{M_{S,\nu_P(\underline J(N)+1)}+v_{S,\nu_P(\underline J(N)+1)}-1}{\nu_P(\underline J(N)+1)},
\end{align}
or equivalently, 
\begin{align}
v_{S,\nu_P(\underline J(N)+1)}\leq 1,\forall N\geq 0.
\end{align}
Using the definition of $\underline J(N)$ in (\ref{underlineJn}), this is equivalent to proving
that 
\begin{align}
v_{S,\nu_P(k+1)}\leq 1,\forall k\geq 0\text{ s.t. }Q_k=1,
\end{align}
where $Q_k$ is defined in (\ref{Qk}).
This can be readily shown by following the recursions in the proof of Lemma \ref{lemmamecoioni} in Appendix \ref{proofoflemmamecoioni}.
In fact, for $k$ such that $Q_k=1$, we have that there exists 
$\nu_P(k)\leq m_1<m_2<\nu_P(k+1)$ such that $y_{S,m_1}\in\{1,3\}$ and $y_{S,m_2}\in\{1,2,5,7\}$.
Without loss of generality, assume $y_{S,t}\in\{3,4,6\},\forall m_1<t<m_2$.
$y_{S,m_1}\in\{1,3\}$ guarantees that the PU packet is successfully decoded in slot $m_1$, hence 
$\kappa_{P,m_1+1}(l_{P,m_1+1})=1$.
Therefore, from (\ref{spec4}),
\begin{align}
&M_{S,m_1+t+1}=M_{S,m_1+t}+\chi(y_{S,m_1+t}\in\{1,2,5,7\})v_{S,m_1+t},
\\
&v_{S,m_1+t+1}=\chi(y_{S,m_1+t}\in\{1,2,5,7\})+\chi(y_{S,m_1+t}\in\{3,4,6\})v_{S,m_1+t},
\end{align}
for all $1\leq t< \nu_P(k+1)-m_1$.
Using the fact that $y_{S,t}\in\{3,4,6\},\forall m_1<t<m_2$,
we obtain $v_{S,m_2}=v_{S,m_1+1}$. Then, since $y_{S,m_2}\in\{1,2,5,7\}$,
we obtain $v_{S,m_2+1}=1$ and, for $m_2-m_1\leq t< \nu_P(k+1)-m_1$,
$v_{S,m_1+t+1}=1$,
 so that $v_{S,\nu_P(k+1)}=1$.
The theorem is thus proved.
\end{proof}

\section{}
\label{proofoflemmamecoioni}
\begin{lemma}
\label{lemmamecoioni}
Let 
\begin{align}
v_{S,n}\triangleq v_S(\rho_S(\mathcal G_n);\mathcal G_n)
\end{align}
be the CD potential of the root of the CD graph at the beginning of slot $n$
and
\begin{align}
M_{S,n}=\sum_{l_S}\kappa_{S,n}(l_S;\mathcal L_S^{(CD)})
\end{align}
be the number of SU packets successfully decoded up to the beginning of slot $n$.
Then, we have $v_{S,0}=1$, $M_{S,0}=0$ and,  $\forall k\geq 0$, $\nu_P(k)\leq n<\nu_P(k+1)$, we have the recursion
\begin{align}
\label{tobeagreed}
&M_{S,n+1}+v_{S,n+1}
=
M_{S,n}+v_{S,n}
+\chi(y_{S,n}\in\{1,2,5,7\})
\nonumber\\&
- \prod_{m=\nu_P(k)}^{n-1}\chi(y_{S,m}\in\{2,4,5\})\chi(y_{S,n}\in\{1,3,5,6,7\})
\nonumber\\&
+\prod_{m=\nu_P(k)}^{n-1}\chi(y_{S,m}\in\{2,4,5\})\chi(y_{S,n}\in\{1,3,6,7\})\sum_{m=\nu_P(k)}^{n-1}\chi(y_{S,m}=5)
\nonumber\\&
+\prod_{m=\nu_P(k)}^{n-1}\chi(y_{S,m}\in\{2,4,5,7\})\chi(y_{S,n}\in\{1,3,6\}).
\end{align}
\end{lemma}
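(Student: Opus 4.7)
The plan is to prove the identity by induction on $n$, with the base case $n=0$ immediate: the initial graph $\mathcal G_0$ contains only the untransmitted isolated nodes $0_S$ and $0_P$, so $\rho_S(\mathcal G_0)=0_S$ has CD potential $v_{S,0}=1$ and no packets have yet been decoded, giving $M_{S,0}=0$. For the inductive step, I fix $k$ with $\nu_P(k)\leq n<\nu_P(k+1)$; under Assumption~\ref{assumptionalltx} we have $a_{P,n}=a_{S,n}=1$, and by Theorem~\ref{CD} the label $l_{S,n}$ dictated by the CD rules is either $\rho_S(\mathcal G_n)$ or the new packet $n_S$. The graph-update rules of Sec.~\ref{secdemecoioni} then determine $\mathcal G_{n+1}$, and hence $(M_{S,n+1},v_{S,n+1})$, from the current graph, the chosen label, and the outcome $y_{S,n}\in\{1,\dots,7\}$.

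The body of the proof is a case analysis indexed by $y_{S,n}$ and by a three-valued \emph{local state} at slot $n$ encoding the only graph features that affect $v_{S}$ and $M_S$. Specifically, I will identify the two products and the sum appearing in (\ref{tobeagreed}) with $1-\hat\kappa_{P,n}^{(GA)}$, $\iota_{P,n}$, and $(1-\hat\kappa_{P,n}^{(GA)})^{-1}b_{S,n}$ of Theorem~\ref{thm5}, whose admissible joint values $(0,1,b)$, $(1,1,0)$, $(1,0,0)$ correspond to the three virtual states $(\U,b)$, $(\stackrel{\leftrightarrow}{\K},0)$, $(\stackrel{\rightarrow}{\K},0)$. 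For each of the twenty-one (state, $y_{S,n}$) combinations I will read off the new edges created in $\mathcal G_{n+1}$, whether a direct decoding of $\rho_S(\mathcal G_n)$ or $l_{P,n}$ triggers a CD cascade through $\mathbf v^*_{S}$ or $\mathbf v^*_{P}$ as defined in Sec.~\ref{secdemecoioni}, the identity of the new root, and the resulting increments $\Delta M_S$ and $\Delta v_S$ directly from their definitions. Summing $\Delta M_S+\Delta v_S$ and invoking the partition $\chi(y_{S,n}\in\{1,3,5,6,7\})=\chi(y_{S,n}\in\{5,7\})+\chi(y_{S,n}\in\{1,3,6\})$ together with the identity $\iota_{P,n}=(1-\hat\kappa_{P,n}^{(GA)})+\iota_{P,n}\hat\kappa_{P,n}^{(GA)}$, the four terms on the RHS of (\ref{tobeagreed}) should fall out exactly.

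The main obstacle will be handling the cases where the root of the CD graph changes within the current ARQ cycle. Two scenarios deserve particular care: in state $(\U,b)$ with $y_{S,n}=7$ and $b\geq 1$, the new packet $l_{S,n}=n_S$ (prescribed by R2, since the previously accumulated $l_P\to\rho_S$ edge makes $l_{P,n}\Rightarrow\rho_S(\mathcal G_n)$) becomes the new root upon addition of the bidirectional $n_S\leftrightarrow l_{P,n}$ edges, and its CD potential must be computed as $v_{S,n}+b$ (not $v_{S,n}+b+1$, because $\rho_S(\mathcal G_n)$ itself is already one of the $b$ nodes reachable from $l_{P,n}$) to match the third line of (\ref{tobeagreed}); and in state $(\stackrel{\leftrightarrow}{\K},0)$ with $y_{S,n}=6$, the new $l_{S,n}=n_S$ reaches the old root via the path $n_S\to l_{P,n}\leftrightarrow\rho_S(\mathcal G_n)$, becoming the fresh root with potential $v_{S,n}+1$ — this is exactly the increment supplied by the fourth line of (\ref{tobeagreed}). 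The boundary check at $n=\nu_P(k)$, where both products equal~$1$ and the sum vanishes, reduces (\ref{tobeagreed}) to $\Delta(M_S+v_S)=\chi(y_{S,n}\in\{1,2\})$, which is trivially consistent with the graph update at a freshly introduced isolated $l_{P,n}=n_P$ for which no cascade can arise.
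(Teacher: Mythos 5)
Your proposal follows essentially the same route as the paper's proof: induction on $n$ with a case analysis on the relationship of the current PU packet to the CD graph (equivalently, on the history $\{y_{S,m}\}$ within the current ARQ cycle) crossed with the outcome $y_{S,n}$, and your treatment of the delicate root-change cases (state $(\U,b)$ with $y_{S,n}=7$ giving new-root potential $v_{S,n}+b$, and state $(\stackrel{\leftrightarrow}{\K},0)$ with $y_{S,n}=6$ giving $v_{S,n}+1$) matches the paper's cases B and C exactly. The one refinement you will need when executing the plan is that the three virtual states do not determine which CD rule is in force: $(\U,b)$ splits into ``$l_{P,n}$ disconnected'' (rule {\bf R1}, root transmitted) versus ``$l_{P,n}\rightarrow\rho_S(\mathcal G_n)$'' (rule {\bf R2}, new packet), and $(\stackrel{\rightarrow}{\K},0)$ splits into ``$\rho_S(\mathcal G_n)\Rightarrow l_{P,n}$ with $l_{P,n}$ still undecoded'' ({\bf R2}) versus ``$l_{P,n}$ actually known'' ({\bf R3}), which is why the paper works with five cases rather than your twenty-one (state, outcome) pairs; the individual increments $\Delta M_S$ and $\Delta v_S$ differ across these sub-cases even though their sum does not, so the finer split is needed to carry out the bookkeeping you describe.
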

\begin{proof}
Consider slot $n$ within the $k$th ARQ cycle, \emph{i.e.}, $\nu_P(k)\leq n<\nu_P(k+1)$.
We have the following cases.

\noindent\emph{A) PU packet $l_{P,n}$ unknown ($\kappa_{P,n}(l_{P,n})=0$) and disconnected from the CD graph}
This is true in slot $n=\nu_P(k)$, \emph{i.e.}, at the beginning of the $k$th ARQ cycle. 
Therefore, according to {\bf R1}, the root is transmitted in slot $n$, $l_{S,n}=\rho_S(\mathcal G_n)$,
and has been transmitted in the previous slots $\nu_P(k)\leq m<n$.
This condition holds if and only if
\begin{align}
\label{x1}
y_{S,m}\in\{2,4\},\forall \nu_P(k)\leq m<n.
\end{align}
In fact, if $y_{S,m}\in\{1,3\}$ for some $\nu_P(k)\leq m<n$, then the PU packet is successfully decoded in slot $m$;
similarly, if $y_{S,m}\in\{5,6,7\}$ for some $\nu_P(k)\leq m<n$, then the PU packet 
becomes connected to the root of the CD graph.
Specializing (\ref{tobeagreed}) to this case and using (\ref{x1}),
we need to prove the recursion
\begin{align}
&M_{S,n+1}+v_{S,n+1}=M_{S,n}+v_{S,n}+\chi(y_{S,n}\in\{1,2\}).
\end{align}
In fact, if $y_{S,n}\in\{1,2\}$, the root is successfully decoded and the full CD potential is released, resulting in
\begin{align}
M_{S,n+1}=M_{S,n}+v_{S,n}.
\end{align}
The new root of the CD graph becomes $\rho_S(\mathcal G_{n+1})=n+1$ (new SU packet),
with CD potential $v_{S,n+1}=1$, so that 
\begin{align}
M_{S,n+1}+v_{S,n+1}=M_{S,n}+v_{S,n}+1.
\end{align}
Otherwise, the root of the CD graph remains unchanged, $\rho_S(\mathcal G_{n+1})=\rho_S(\mathcal G_{n})$,
with CD potential $v_{S,n+1}=v_{S,n}$, and no SU packets are decoded, so that 
$M_{S,n+1}=M_{S,n}$. It follows that 
\begin{align}
M_{S,n+1}+v_{S,n+1}=M_{S,n}+v_{S,n}.
\end{align}

\noindent\emph{B) PU packet $l_{P,n}$ unknown ($\kappa_{P,n}(l_{P,n})=0$) and connected to the root of the CD graph with $l_{P,n}\rightarrow\rho_S(\mathcal G_{n})$}
Therefore, according to {\bf R2}, a new SU packet is transmitted in slot $n$, $l_{S,n}=n_S$.
This condition holds if and only if
\begin{align}
\label{x2}
y_{S,m}\in\{2,4,5\},\forall \nu_P(k)\leq m<n
\cap
\exists \nu_P(k)\leq m<n: y_{S,m}=5.
\end{align}
In fact, if $y_{S,m}\in\{1,3\}$, for some $\nu_P(k)\leq m<n$,
then the PU packet is decoded in slot $m$, thus contradicting the hypothesis $\kappa_{P,n}(l_{P,n})=0$;
if $y_{S,m}\in\{6,7\}$, for some $\nu_P(k)\leq m<n$,
then the PU packet becomes connected to the root of the CD graph with $l_{P,n}\leftrightarrow\rho_S(\mathcal G_{n})$
or $\rho_S(\mathcal G_{n})\rightarrow l_{P,n}$, thus contradicting the hypothesis $l_{P,n}\rightarrow\rho_S(\mathcal G_{n})$;
finally, if $y_{S,m}\in\{2,4\},\forall \nu_P(k)\leq m<n$, then the case (\ref{x1}) holds and  $l_{P,n}$ is not connected to the root of the CD graph.

The condition (\ref{x2}) implies that  $y_{S,t}\in\{2,4\},\forall \nu_P(k)\leq t<m$ and $y_{S,m}=5$, for some $\nu_P(k)\leq m<n$,
so that the root is transmitted in slots $\nu_P(k)\leq t\leq m$, and the PU packet becomes connected to the CD graph in slot $m$;
in the following slots $m<t\leq n$, according to {\bf R2}, new SU packets are transmitted, so that $l_{S,t}=t_S,\forall m<t\leq n$.

Specializing (\ref{tobeagreed}) to this case and using (\ref{x2}),
we need to prove the recursion
\begin{align}
\label{spec1}
&M_{S,n+1}+v_{S,n+1}
=
M_{S,n}+v_{S,n}+\chi(y_{S,n}\in\{1,2\})
\nonumber\\&
+\chi(y_{S,n}\in\{1,3,6,7\})\sum_{m=\nu_P(k)}^{n-1}\chi(y_{S,m}=5).
\end{align}
We analyze all the possible cases:
\begin{itemize}
\item If $y_{S,n}\in\{1,3\}$, then $l_{P,n}$ is decoded and $l_{S,n}$ is decoded if $y_{S,n}=1$ ; since 
$l_{P,n}\rightarrow\rho_S(\mathcal G_{n})$, all the SU packets transmitted in the previous slots $\nu_P(k)\leq t<n$ such that $y_{S,t}=5$,
including the root, are decoded after removing the interference from the PU packet, hence the full CD potential is released. We thus obtain
\begin{align}
&M_{S,n+1}
=
M_{S,n}+v_{S,n}-1+\sum_{t=\nu_P(k)}^{n-1}\chi(y_{S,t}=5)+\chi(y_{S,n}=1);
\end{align}
in the next slot, the new root is $\rho_S(\mathcal G_{n+1})=n+1$, with CD potential $v_{S,n+1}=1$, hence 
\begin{align}
&M_{S,n+1}+v_{S,n+1}
=
M_{S,n}+v_{S,n}+\sum_{t=\nu_P(k)}^{n-1}\chi(y_{S,t}=5)+\chi(y_{S,n}=1),
\end{align}
so that (\ref{spec1}) holds.
\item If $y_{S,n}=2$, then only $l_{S,n}$ is decoded, so that $M_{S,n+1}=M_{S,n}+1$;
the root and its CD potential remain unchanged, so that $v_{S,n+1}=v_{S,n}$, so that (\ref{spec1}) holds.
\item If $y_{S,n}\in\{4,5\}$, then neither $l_{P,n}$ nor $l_{S,n}$ is decoded, so that $M_{S,n+1}=M_{S,n}$; the root of the CD graph and its CD potential remain unchanged,
so that $v_{S,n+1}=v_{S,n}$, which is consistent with (\ref{spec1}).
\item If $y_{S,n}\in\{6,7\}$, then neither $l_{P,n}$ nor $l_{S,n}$ is decoded, so that $M_{S,n+1}=M_{S,n}$; however, $l_{S,n}$ becomes the new root of the CD graph,
since its CD potential is $v_S(l_{S,n};\mathcal G_{n+1})=\sum_{t=\nu_P(k)}^{n-1}\chi(y_{S,t}=5)+v_{S,n}=v_{S,n+1}$,
as opposed to the previous root, with smaller CD potential  $v_S(\rho_S(\mathcal G_{n});\mathcal G_{n+1})=v_{S,n}$;
therefore, we obtain
\begin{align}
&M_{S,n+1}+v_{S,n+1}
=
M_{S,n}+v_{S,n}
\nonumber\\&
+\sum_{m=\nu_P(k)}^{n-1}\chi(y_{S,m}=5),
\end{align}
which is consistent with (\ref{spec1}).
\end{itemize}

\noindent\emph{C) PU packet $l_{P,n}$ unknown ($\kappa_{P,n}(l_{P,n})=0$) and connected to the root of the CD graph with $l_{P,n}\leftrightarrow\rho_S(\mathcal G_{n})$}
Therefore, according to {\bf R2}, a new SU packet is transmitted in slot $n$, $l_{S,n}=n_S$.
This condition holds if and only if
\begin{align}
\label{x3}
y_{S,m}\in\{2,4,5,7\},\forall \nu_P(k)\leq m<n
\cap
\exists \nu_P(k)\leq m<n: y_{S,m}=7.
\end{align}
In fact, if $y_{S,m}\in\{1,3\}$, for some $\nu_P(k)\leq m<n$,
then the PU packet is decoded in slot $m$, thus contradicting the hypothesis $\kappa_{P,n}(l_{P,n})=0$;
if  $y_{S,m}=6$, for some $\nu_P(k)\leq m<n$,
then the PU packet becomes connected to the root of the CD graph with
$\rho_S(\mathcal G_{n})\rightarrow l_{P,n}$, thus contradicting the hypothesis $l_{P,n}\leftrightarrow\rho_S(\mathcal G_{n})$;
finally, if $y_{S,m}\in\{2,4,5\},\forall \nu_P(k)\leq m<n$, then the previous cases (\ref{x1}) or (\ref{x2}) hold.

The condition (\ref{x3}) implies that  $y_{S,t}\in\{2,4\},\forall \nu_P(k)\leq t<m$ and $y_{S,m}\in\{5,7\}$, for some $\nu_P(k)\leq m<n$,
so that the root is transmitted in slots $\nu_P(k)\leq t\leq m$, and the PU packet becomes connected to the CD graph in slot $m$;
in the following slots $m<t\leq n$, according to {\bf R2}, new SU packets are transmitted, so that $l_{S,t}=t_S,\forall m<t\leq n$.
If $y_{S,m}=5$, then condition  (\ref{x3}) implies that there exists also some $m<\tilde m<n$ such that $y_{S,\tilde m}=7$,
so that $l_{S,\tilde m}$ becomes the new root of the CD graph.

Specializing (\ref{tobeagreed}) to this case and using (\ref{x3}),
we need to prove the recursion
\begin{align}
\label{spec2}
&M_{S,n+1}+v_{S,n+1}
=
M_{S,n}+v_{S,n}
+\chi(y_{S,n}\in\{1,2,5,7\})
+\chi(y_{S,n}\in\{1,3,6\}).
\end{align}
We analyze all the possible cases:
\begin{itemize}
\item If $y_{S,n}\in\{1,3\}$, then $l_{P,n}$ is decoded and $l_{S,n}$ is decoded if $y_{S,n}=1$ ; since 
$l_{P,n}\leftrightarrow\rho_S(\mathcal G_{n})$, the root is decoded and its CD potential is released,
so that
\begin{align}
&M_{S,n+1}
=
M_{S,n}+v_{S,n}+\chi(y_{S,n}=1);
\end{align}
in the next slot, the new root is $\rho_S(\mathcal G_{n+1})=n+1$, with CD potential $v_{S,n+1}=1$, hence 
\begin{align}
&M_{S,n+1}+v_{S,n+1}
=
M_{S,n}+v_{S,n}+1+\chi(y_{S,n}=1),
\end{align}
which is consistent with (\ref{spec2}).
\item If $y_{S,n}\in\{2,4\}$, then only $l_{S,n}$ is decoded if $y_{S,n}=2$, so that $M_{S,n+1}=M_{S,n}+\chi(y_{S,n}=2)$;
the root and its CD potential remain unchanged, so that $v_{S,n+1}=v_{S,n}$ and (\ref{spec1}) holds.
\item If $y_{S,n}\in\{5,7\}$, then neither $l_{P,n}$ nor $l_{S,n}$ is decoded, so that $M_{S,n+1}=M_{S,n}$; 
however, $l_{P,n}\rightarrow l_{S,n}$ (if $y_{S,n}=5$) or $l_{P,n}\leftrightarrow l_{S,n}$ (if $y_{S,n}=7$), so that $l_{S,n}$ becomes reachable from the root of the CD graph, whose CD potential thus increases by one unit, yielding
$v_{S,n+1}=v_{S,n}+1$. This is consistent with (\ref{spec2}).
\item If $y_{S,n}=6$, then neither $l_{P,n}$ nor $l_{S,n}$ is decoded, so that $M_{S,n+1}=M_{S,n}$; however, $l_{S,n}$ becomes the new root of the CD graph,
since its CD potential is $v_S(l_{S,n};\mathcal G_{n+1})=v_{S,n}+1$,
as opposed to the previous root, with smaller CD potential  $v_S(\rho_S(\mathcal G_{n});\mathcal G_{n+1})=v_{S,n}$;
therefore, we obtain
\begin{align}
&M_{S,n+1}+v_{S,n+1}
=
M_{S,n}+v_{S,n}+1,
\end{align}
which is consistent with (\ref{spec2}).
\end{itemize}

\noindent\emph{D) PU packet $l_{P,n}$ unknown ($\kappa_{P,n}(l_{P,n})=0$) and connected to the root of the CD graph with $\rho_S(\mathcal G_{n})\rightarrow l_{P,n}$}
Therefore, according to {\bf R2}, a new SU packet is transmitted in slot $n$, $l_{S,n}=n_S$.
This condition holds if and only if
\begin{align}
\label{x4}
y_{S,m}\in\{2,4,5,6,7\},\forall \nu_P(k)\leq m<n
\cap
\exists \nu_P(k)\leq m<n: y_{S,m}=6.
\end{align}
In fact, if $y_{S,m}\in\{1,3\}$, for some $\nu_P(k)\leq m<n$,
then the PU packet is decoded in slot $m$, thus contradicting the hypothesis $\kappa_{P,n}(l_{P,n})=0$;
finally, if $y_{S,m}\in\{2,4,5,7\},\forall \nu_P(k)\leq m<n$, then the previous cases (\ref{x1}), (\ref{x2}) or (\ref{x3}) hold.

The condition (\ref{x4}) implies that  $y_{S,t}\in\{2,4\},\forall \nu_P(k)\leq t<m$ and $y_{S,m}\in\{5,6,7\}$, for some $\nu_P(k)\leq m<n$,
so that the root is transmitted in slots $\nu_P(k)\leq t\leq m$, and the PU packet becomes connected to the CD graph in slot $m$;
in the following slots $m<t\leq n$, according to {\bf R2}, new SU packets are transmitted, so that $l_{S,t}=t_S,\forall m<t\leq n$.
If $y_{S,m}\in\{5,7\}$, then condition  (\ref{x4}) implies that there exists also some $m<\tilde m<n$ such that $y_{S,\tilde m}=6$,
so that $l_{S,\tilde m}$ becomes the new root of the CD graph (in fact, it is the SU packet with the largest CD potential).

Specializing (\ref{tobeagreed}) to this case and using (\ref{x4}),
we need to prove the recursion
\begin{align}
\label{spec3}
&M_{S,n+1}+v_{S,n+1}
=
M_{S,n}+v_{S,n}+\chi(y_{S,n}\in\{1,2,5,7\}).
\end{align}
We analyze all the possible cases:
\begin{itemize}
\item If $y_{S,n}\in\{1,3\}$, then $l_{P,n}$ is decoded and $l_{S,n}$ is decoded if $y_{S,n}=1$;
then, the CD potential of $l_{P,n}$ is released, 
$v_P(l_{P,n};\mathcal G_n)$, so that 
\begin{align}
&M_{S,n+1}
=
M_{S,n}+v_P(l_{P,n};\mathcal G_n)+\chi(y_{S,n}=1);
\end{align}
note that $v_P(l_{P,n};\mathcal G_n)\leq v_{S,n}-1$, since 
$\rho_S(\mathcal G_{n})\rightarrow l_{P,n}$ and thus the root of the CD graph has CD potential strictly larger than that of $l_{P,n}$.
After decoding the SU packets reachable from $l_{P,n}$, that portion of the graph is removed,
so that the root remains unchanged but its CD potential is decreased, resulting in $v_{S,n+1}=v_{S,n}-v_P(l_{P,n};\mathcal G_n)$.
We obtain
\begin{align}
&M_{S,n+1}+v_{S,n+1}
=
M_{S,n}+v_{S,n}+\chi(y_{S,n}=1),
\end{align}
which is consistent with (\ref{spec3}).
\item If $y_{S,n}\in\{2,4\}$, then only $l_{S,n}$ is decoded if $y_{S,n}=2$, so that $M_{S,n+1}=M_{S,n}+\chi(y_{S,n}=2)$;
the root and its CD potential remain unchanged, so that $v_{S,n+1}=v_{S,n}$ and (\ref{spec3}) holds.
\item If $y_{S,n}\in\{5,7\}$, then neither $l_{P,n}$ nor $l_{S,n}$ is decoded, so that $M_{S,n+1}=M_{S,n}$; 
however, $l_{P,n}\rightarrow l_{S,n}$ (if $y_{S,n}=5$) or $l_{P,n}\leftrightarrow l_{S,n}$ (if $y_{S,n}=7$), so that $l_{S,n}$ becomes reachable from the root of the CD graph, whose CD potential thus increases by one unit, yielding
$v_{S,n+1}=v_{S,n}+1$. This is consistent with (\ref{spec3}).
\item If $y_{S,n}=6$, then neither $l_{P,n}$ nor $l_{S,n}$ is decoded, so that $M_{S,n+1}=M_{S,n}$; $l_{S,n}$ becomes the new root of the CD graph,
since its CD potential equals that of the previous root, but $l_{S,n}=n_S$ is a more recent SU packet, therefore
$v_{S,n+1}=v_{S,n}$, which is  consistent with (\ref{spec3}).
\end{itemize}

\noindent\emph{E) PU packet $l_{P,n}$ known ($\kappa_{P,n}(l_{P,n})=1$)}
Therefore, according to {\bf R3}, the root of the CD graph is transmitted in slot $n$, $l_{S,n}=\rho_S(\mathcal G_{n})$.
This condition holds if and only if
\begin{align}
\label{x5}
\exists \nu_P(k)\leq m<n: y_{S,m}\in\{1,3\}.
\end{align}
In fact, if the above condition is not satisfied, we fall in one of the cases (\ref{x1}), (\ref{x2}), (\ref{x3}) or  (\ref{x4}) analyzed before.

Specializing (\ref{tobeagreed}) to this case and using (\ref{x4}),
we need to prove the recursion
\begin{align}
\label{spec4}
&M_{S,n+1}+v_{S,n+1}
=
M_{S,n}+v_{S,n}+\chi(y_{S,n}\in\{1,2,5,7\})
\end{align}
We have the following two cases:
\begin{itemize}
\item If $y_{S,n}\in\{1,2,5,7\}$, then $l_{S,n}=\rho_S(\mathcal G_{n})$ can be decoded successfully after removing the interference from the PU packet,
and thus the CD potential of the root can be released, resulting in 
\begin{align}
&M_{S,n+1}
=
M_{S,n}+v_{S,n}.
\end{align}
In the next slot, the new root of the CD graph is the new SU packet $\rho_S(\mathcal G_{n+1})=n+1$, with CD potential
$v_{S,n+1}=1$, yielding 
\begin{align}
&M_{S,n+1}+v_{S,n+1}
=
M_{S,n}+v_{S,n}+1,
\end{align}
which is consistent with (\ref{spec4}).
\item  Otherwise,
$l_{S,n}=\rho_S(\mathcal G_{n})$ cannot be decoded successfully even after removing the interference from the PU packet,
and no CD potential can be released, resulting in 
\begin{align}
&M_{S,n+1}
=
M_{S,n}.
\end{align}
In the next slot, the root of the CD graph and its CD potential remain unchanged, so that 
$v_{S,n+1}=v_{S,n}$, which is consistent with (\ref{spec4}).
\end{itemize}
In general, for this case we have the dynamics
\begin{align}
&M_{S,n+1}=M_{S,n}+\chi(y_{S,n}\in\{1,2,5,7\})v_{S,n},
\\
&v_{S,n+1}=\chi(y_{S,n}\in\{1,2,5,7\})+\chi(y_{S,n}\in\{3,4,6\})v_{S,n}.
\end{align}
\end{proof}

\section{Proof of Theorem \ref{thm6}}
\label{proofofthm6}
\begin{proof}
We need to prove that $\mathbf s_{n+1}$ is independent of the past, given $\mathbf s_n$ and $a_{S,n}$,
and that the expected virtual instantaneous throughput $\mathbb E[g(\cdot)]$ accrued in slot $n$ is a function of $\mathbf s_n$ and $a_{S,n}$ only.
Therefore, let $\mathbf s_n$ and $a_{S,n}$ be given.
At the end of slot $n$, the SU pair overhears the PU feedback $y_{P,n}$.
The distribution of  $y_{P,n}$
depends on $\mathbf s_{P,n}$ only and is independent of the past, as in (\ref{pmfY});
in turn, the distribution of the internal PU state $\mathbf s_{P,n}=(t_{P,n},d_{P,n},q_{P,n})$
is a function of $\mathbf s_n=(\mathbf s_{CD,n},t_{P,n},d_{P,n},\beta_n)$, since $\beta_n(q_P)=\mathbb P(q_{P,n}=q_P)$.
It follows that the distribution of $y_{P,n}$ is independent of the past history, given $(\mathbf s_n,a_{S,n})$.

Given $o_{P,n}$, $\beta_{n+1}$ can be computed as
\begin{align}
\label{nextbeta}
&\beta_{n+1}(q)=\mathbb P(q_{P,n+1}=q|t_{P,n}=t,d_{P,n}=d,\beta_n,o_{P,n}=o)
\nonumber\\&
=
\sum_{\tilde q}\sum_{b}\mathbb P(q_{P,n+1}=q,q_{P,n}=\tilde q,b_{P,n}=b|t_{P,n}=t,d_{P,n}=d,\beta_n,o_{P,n}=o)
\nonumber\\&
=
\frac{
\begin{array}{l}
\sum_{\tilde q}\mathbb P(q_{P,n}=\tilde q|\beta_n)\sum_{b}
\mathbb P(b_{P,n}=b)
\mathbb P(q_{P,n+1}=q|q_{P,n}=\tilde q,b_{P,n}=b,o_{P,n}=o)\\
\times
\sum_{a\in\{0,1\}}
\mathbb P(o_{P,n}=o|q_{P,n}=\tilde q,t_{P,n}=t,d_{P,n}=d,a_{P,n}=a)
\\\times
\mathbb P(a_{P,n}=a|q_{P,n}=\tilde q,t_{P,n}=t,d_{P,n}=d)
\end{array}
}{
\begin{array}{l}
\sum_{\tilde q}\mathbb P(q_{P,n}=\tilde q|\beta_n)
\sum_{a\in\{0,1\}}
\mathbb P(o_{P,n}=o|q_{P,n}=\tilde q,t_{P,n}=t,d_{P,n}=d,a_{P,n}=a)
\\\times
\mathbb P(a_{P,n}=a|q_{P,n}=\tilde q,t_{P,n}=t,d_{P,n}=d)
\end{array}
}
\nonumber\\&
=
\frac{
\begin{array}{l}
\sum_{\tilde q}\beta_n(\tilde q)\sum_{b}
\mathbb P_B(b)
\chi\left(q=\min\{\tilde q-o+b,Q_{\max}\}\right)
\\\times
\left[\mu_P(t,d,\tilde q)\mathbb P_O(o|t,d,\tilde q,1)
+(1-\mu_P(t,d,\tilde q))\mathbb P_O(o|t,d,\tilde q,0)\right]
\end{array}
}{
\sum_{\tilde q}\beta_n(\tilde q)\sum_{b}
\mathbb P_B(b)
\left[\mu_P(t,d,\tilde q)\mathbb P_O(o|t,d,\tilde q,1)
+(1-\mu_P(t,d,\tilde q))\mathbb P_O(o|t,d,\tilde q,0)\right]
},
\end{align}
where, from (\ref{o}),
 \begin{align}
& \mathbb P_O(1|q,t,d,a)
 \triangleq
 \mathbb P(o_{P,n}=1|q_{P,n}=q,t_{P,n}=t,d_{P,n}=d,a_{P,n}=a)
\nonumber\\&
 =
  (1-a)\chi(d=D_{\max}-1)
 +
 a\mathbb P(\boldsymbol{\gamma}_P\in\Gamma_P(a))\chi(q>0)
 \\&
+a[1-\mathbb P(\boldsymbol{\gamma}_P\in\Gamma_P(a))]\chi(q>0)\chi(t=R_{\max}-1)
 \\&
+a[1-\mathbb P(\boldsymbol{\gamma}_P\in\Gamma_P(a))]\chi(q>0)\chi(t<R_{\max}-1)\chi(d=D_{\max}-1)
 \end{align}
 and $\mathbb P_O(0|q,t,d,a)=1-\mathbb P_O(1|q,t,d,a)$.
 We can thus write $\beta_{n+1}=f(t_{P,n},d_{P,n},\beta_n,y_{P,n})$ for a proper function $f(\cdot)$, as given by (\ref{nextbeta}), where, in turn,
 $o_{P,n}$ is a function of $(t_{P,n},d_{P,n},y_{P,n})$ via (\ref{sigma}).
Therefore, $\beta_{n+1}$ is independent of the past, given $(\mathbf s_n,a_{S,n})$.

From (\ref{tp}), (\ref{dp}) and (\ref{sigma}), and using the fact that
$a_{P,n}=\chi(y_{P,n}\neq\emptyset)$,
 we can write $t_{P,n+1}$ and $d_{P,n+1}$ as
\begin{align}
&t_{P,n+1}=(1-\sigma(t_{P,n},d_{P,n},y_{P,n}))(t_{P,n}+\chi(y_{P,n}\neq\emptyset)),
\\&d_{P,n+1}=(1-\sigma(t_{P,n},d_{P,n},y_{P,n}))\left[d_{P,n}+\chi(t_{P,n}>0)+\chi(t_{P,n}=0)\chi(y_{P,n}\neq\emptyset)\right],
\end{align}
so that $(t_{P,n+1},d_{P,n+1})$ is a function of $(t_{P,n},d_{P,n},y_{P,n})$ only.
Therefore, $(t_{P,n+1},d_{P,n+1})$ is independent of the past, given $(\mathbf s_n,a_{S,n})$.

Consider the CD state $\mathbf s_{CD,n}=(\Phi_n,b_{S,n})\in\mathcal W$.
 Note that there is a one-to-one mapping between $\Phi_n$ and 
 $(\hat\kappa_{P,n}^{(GA)},\iota_{P,n})$. Therefore, given $\Phi_n$, the pair  $(\hat\kappa_{P,n}^{(GA)},\iota_{P,n})$ is given.
 We have the following cases for $(\hat\kappa_{P,n+1}^{(GA)},\iota_{P,n+1},b_{S,n+1})$: if $o_{P,n}=1$,
 so that the current ARQ cycle ends and a new one begins in the next slot,  
from (\ref{kappapnga}),  (\ref{iota}) and (\ref{bsn})
we obtain $(\hat\kappa_{P,n+1}^{(GA)},\iota_{P,n+1},b_{S,n+1})=(0,1,0)$;
on the other hand, if  $o_{P,n}=0$,
  $(\hat\kappa_{P,n+1}^{(GA)},\iota_{P,n+1},b_{S,n+1})$ can be determined recursively from (\ref{kappapnga}),  (\ref{iota}) and (\ref{bsn}); we thus obtain
 \begin{align}
&\hat\kappa_{P,n+1}^{(GA)}\triangleq (1-o_{P,n})\left[1-\left(1-\hat\kappa_{P,n}^{(GA)}\right)\left(1-a_{P,n}\chi(y_{S,n}\in\{1,3,6,7\})\right)\right],
\\&
\iota_{P,n+1}=o_{P,n}+(1-o_{P,n})\iota_{P,n}\left[1-a_{P,n}\chi(y_{S,n}\in\{1,3,6,7\})+a_{P,n}a_{S,n}\chi(y_{S,n}=7)\right],
\\&
b_{S,n+1}=(1-o_{P,n})\left(1-\hat\kappa_{P,n+1}^{(GA)}\right)\sum_{m=\nu_P(k)}^{n}a_{P,m}a_{S,m}\chi(y_{S,m}=5)
\nonumber\\&
=
(1-o_{P,n})\left[1-a_{P,n}\chi\left(y_{S,n}\in\{1,3,6,7\}\right)\right]\left[b_{S,n}+\left(1-\hat\kappa_{P,n}^{(GA)}\right)a_{P,n}a_{S,n}\chi(y_{S,n}=5)\right],
    \end{align}
    so that $(\hat\kappa_{P,n+1}^{(GA)},\iota_{P,n+1},b_{S,n+1})$
are functions of 
$(\hat\kappa_{P,n}^{(GA)},\iota_{P,n},b_{S,n})$,
 $y_{S,n}$, $a_{P,n}$, $a_{S,n}$, and $o_{P,n}$.
   Since $(\hat\kappa_{P,n+1}^{(GA)},\iota_{P,n+1},b_{S,n+1})$ can be mapped to the new CD state $\mathbf s_{CD,n+1}$,
   it follows that $\mathbf s_{CD,n+1}$ is a function of $(\mathbf s_{CD,n},o_{P,n},y_{S,n},a_{P,n},a_{S,n})$.
   In turn, $o_{P,n}$ is a function of $(t_{P,n},d_{P,n},y_{P,n})$ via (\ref{sigma});
   $y_{S,n}$ is i.i.d. over time; $a_{P,n}=\chi(y_{P,n}\neq \emptyset)$, and $(\mathbf s_{CD,n},a_{S,n})$ is given.
   We conclude that $\mathbf s_{CD,n+1}$ is statistically independent of the past, given $(\mathbf s_n,a_{S,n})$.
   
Finally,
the virtual instantaneous throughput $g(\cdot)$ and PU reward $\mathbf r_P$ accrued in slot $n$ are statistically independent of the 
past, given $(\mathbf s_n,a_{S,n})$. In fact, these are functions of
$(\mathbf s_{P,n},b_{P,n},\boldsymbol{\gamma}_{P,n},a_{P,n},a_{S,n})$
and $(a_{S,n},a_{P,n},y_{S,n},\hat\kappa_{P,n}^{(GA)},\iota_{P,n},b_{S,n})$, respectively.
As shown above, $(\mathbf s_{P,n},a_{P,n})$ are independent of the past, given $(\mathbf s_n,a_{S,n})$;
$b_{P,n}$ has distribution $\beta_n$, as seen from the SU pair;
$(\boldsymbol{\gamma}_{P,n},y_{S,n})$ are i.i.d. over time;
$(\hat\kappa_{P,n}^{(GA)},\iota_{P,n},b_{S,n})$ are univocally determined by $\mathbf s_n$.
\end{proof}

\bibliographystyle{IEEEtran}
\bibliography{IEEEabrv,biblio}

\end{document}